\long\def\comment#1{}
\newtheorem{theorem}{Theorem}[section]
\newtheorem{lemma}{Lemma}[section]
\newtheorem{proposition}{Proposition}[section]
\theoremstyle{definition}
\newtheorem{remark}{Comment}[section]
\newtheorem{example}{Example}[section]
\DeclareMathOperator{\var}{var}
\newcommand{\be}{\begin{eqnarray}}
\newcommand{\ee}{\end{eqnarray}}
\newcommand{\ba}{\begin{array}}
\newcommand{\ea}{\end{array}}
\newcommand{\bs}{\begin{align}\begin{split}\nonumber}
\newcommand{\bsnumber}{\begin{align}\begin{split}}
\newcommand{\es}{\end{split}\end{align}}
\newcommand{\n}{n}
\renewcommand{\(}{\left(}
\renewcommand{\)}{\right)}
\renewcommand{\[}{\left[}
\renewcommand{\]}{\right]}
\newcommand{\G}{\mathcal{G}}
\newcommand{\X}{\mathcal{X}}
\newcommand{\I}{\mathcal{I}}
\newcommand{\xik}{\xi_k}
\newcommand{\Gn}{\mathbb{G}_n}
\newcommand{\En}{\mathbb{E}_n}
\def\RR{ {\Bbb{R}}}
\begin{document}

\title[Least Squares Series: Pointwise and Uniform Results]{Some New Asymptotic Theory for Least Squares Series: Pointwise and Uniform Results}

\author[Belloni]{Alexandre Belloni}\address[Alexandre Belloni]{Fuqua School of Business, Duke University, United States}
\author[Chernozhukov]{Victor Chernozhukov}\address[Victor Chernozhukov]{Department of Economics, MIT, United States}
\author[Chetverikov]{Denis Chetverikov}\address[Denis Chetverikov]{Department of Economics, UCLA, United States}\email{chetverikov@econ.ucla.edu }
\author[Kato]{Kengo Kato}\address[Kengo Kato]{Graduate School of Economics, The University of Tokyo, Japan}

\date{ First version:  May 2006,  This version is of  \today.   Submitted to ArXiv and for publication: December 3, 2012.
 \\ \indent JEL Classification: C01, C14.}

\keywords{least squares series, strong approximations, uniform confidence bands}

\maketitle

\begin{abstract}
 In econometric applications it is common that the exact form of a conditional expectation is unknown and having  flexible functional forms can lead to improvements over a pre-specified functional form, especially if they nest some successful parametric economically-motivated forms.   Series method offers exactly that by approximating the unknown function based on $k$ basis functions, where $k$ is allowed to grow with the sample size $n$ to balance the trade off between variance and bias.
 In this work we consider series estimators for the conditional mean in light of four new ingredients: (i) sharp LLNs for matrices derived from the non-commutative Khinchin inequalities, (ii) bounds on the Lebesgue factor that controls the ratio between the $L^{\infty}$ and $L^{2}$-norms of approximation errors,  (iii) maximal inequalities for processes whose entropy integrals diverge at some rate, and (iv) strong approximations to series-type processes.

These technical tools allow us to contribute to the series literature, specifically the seminal work of \cite{Newey1997}, as follows. First, we weaken considerably the condition on the number $k$ of approximating functions used in series estimation
from the typical  $k^2/n \to 0$ to $k/n \to 0$, up to
log factors, which was available only for spline  series before. Second, under the same weak conditions we derive $L^{2}$ rates and pointwise central limit theorems results when the approximation error vanishes. Under an incorrectly specified model, i.e. when the approximation error does not vanish, analogous results are also shown. Third, under stronger conditions we derive uniform rates and functional central limit theorems that hold if the approximation error vanishes or not. That is, we derive the strong approximation for the entire estimate of the nonparametric function.

Finally and most importantly, from a point of view of practice, we derive uniform rates, Gaussian approximations, and uniform confidence bands for a wide collection of linear functionals of the conditional expectation function, for example, the function itself,  the partial derivative function, the conditional average partial derivative function, and other similar quantities.  All of these results are new.
\end{abstract}

\section{Introduction}

Series estimators have been playing a central role in various fields. In econometric applications it is common that the exact form of a conditional expectation is unknown and having a flexible functional form can lead to improvements over a pre-specified functional form, especially if it nests some successful parametric economic models.   Series estimation offers exactly that by approximating the unknown function based on $k$ basis functions, where $k$ is allowed to grow with the sample size $n$ to balance the trade off between variance and bias. Moreover, the series modelling allows for convenient nesting of some theory-based models, by simply using corresponding terms as the first $k_0 \leq k$
basis functions. For instance, our series could contain linear and quadratic functions  to nest the canonical Mincer equations in the context of wage equation modelling or the canonical translog demand and production functions in the context of demand and supply modelling.

Several asymptotic properties of series estimators have been investigated in the literature. The focus has been on convergence rates and asymptotic normality results \citep[see][and the references therein]{vandeGeer1990,Andrews1991,EastwoodGallant1991,GallantSouza1991,Newey1997,vandeGeer2002,Huang2003b,Chen2006,CF2013}.

This work revisits the topic by making use of new critical ingredients:
\begin{itemize}
\item[1.] The sharp LLNs for matrices derived from the non-commutative Khinchin inequalities.
\item[2.] The sharp bounds on the Lebesgue factor that controls the ratio between the $L^{\infty}$ and $L^{2}$-norms
of the least squares approximation of functions (which is bounded or grows like a $\log k$ in many cases).
\item[3.] Sharp maximal inequalities for processes whose entropy integrals diverge at some rate.
\item[4.] Strong approximations to empirical processes of series types.
\end{itemize}
To the best of our knowledge, our results are the first applications of the first ingredient to statistical estimation problems. After the use in this work, some recent working papers are also using related matrix inequalities and extending some results in different directions, e.g. Chen and Christensen (2013) allows $\beta$-mixing dependence, and Hansen (2014) handles unbounded regressors and also characterizes a trade-off between the number of finite moments and the allowable rate of expansion of the number of series terms.  Regarding the second ingredient, it has already been used by \cite{Huang2003} but for splines only.  All of these ingredients are critical for generating sharp results.

This approach allows us to contribute to the series literature in several directions. First, we weaken considerably the condition on the number $k$ of approximating functions used in series estimation
from the typical  $k^2/n \to 0$ \citep[see][]{Newey1997} to
$$
k/n \to 0 \text{ (up to logs)}
$$
for bounded or local bases which was previously available only for spline series \citep{Huang2003, Stone1994}, and recently established for local polynomial partition series \citep{CF2013}. An example of a bounded basis is Fourier series; examples of local bases are spline, wavelet, and local polynomial partition series. To be more specific, for such bases we require $k\log k/n\to 0$. Note that the last condition is similar to the condition on the bandwidth value required for local polynomial (kernel) regression estimators ($h^{-d}\log (1/h)/n\to 0$ where $h=1/k^{1/d}$ is the bandwidth value).
Second, under the same weak conditions we derive $L^{2}$ rates and pointwise central limit theorems results when the approximation error vanishes. Under a misspecified model, i.e. when the approximation error does not vanish, analogous results are also shown.
Third, under stronger conditions we derive uniform rates that hold if the approximation error vanishes or not. An important contribution here is that we show that the series estimator achieves the optimal uniform rate of convergence under quite general conditions. Previously, the same result was shown only for local polynomial partition series estimator \citep{CF2013}. In addition, we derive a functional central limit theorem. By the functional central limit theorem we mean here that the entire estimate of the nonparametric function is
uniformly close to a Gaussian process that can change with $n$.  That is, we derive the strong approximation for the entire estimate of the nonparametric function.

Perhaps the most important contribution of the paper is a set of completely new results that
 provide estimation and inference methods
for the \textit{entire} linear functionals $\theta(\cdot)$ of the conditional mean function $g:\X\to \RR$. Examples of linear functionals $\theta(\cdot)$ of interest include
\begin{itemize}
\item[1.]  the partial derivative function:  \quad $x \mapsto \theta(x) = \partial_j g(x)$;
\item[2.]  the average partial derivative:  \quad $\theta = \int \partial_j g(x) d\mu(x)$;
\item[3.]  the conditional average partial derivative: \ $x^s \mapsto \theta(x^s) = \int \partial_j g(x) d\mu(x|x^s)$.
\end{itemize}
where $\partial_j g(x)$ denotes the partial derivative of $g(x)$ with respect to $j$th component of $x$, $x^s$ is a subvector of $x$, and the measure $\mu$ entering the definitions above is taken as known; the result can be extended to include estimated measures. We derive uniform (in $x$) rates of convergence, large sample
distributional approximations, and inference methods for the functions above based on the Gaussian approximation. To the best of our knowledge all these results are new, especially the distributional and inferential results.  For example, using these results we can now perform inference on the entire partial derivative function. The only other reference that provides analogous results  but for quantile series estimator is \cite{BCF2011}.  Before doing uniform analysis,  we also update the pointwise results of \cite{Newey1997} to weaker, more general conditions.

\textbf{Notation.} In what follows, all parameter values are indexed by the sample size $n$, but we omit the index whenever this does not cause
confusion.  We use the notation $(a)_+ = \max\{a,0\}$, $a \vee b = \max\{ a, b\}$ and $a \wedge b = \min\{ a , b \}$. The $\ell_2$-norm of a vector $v$ is denoted by $\|v\|$, while for a matrix $Q$ the operator norm is denoted by $\|Q\|$. We also use standard notation in the empirical process literature,
$$
\En[f] = \En[f(w_i)] = \frac{1}{n}\sum_{i=1}^n f(w_i)\text{ and }\mathbb{G}_n[f]=\mathbb{G}_n[f(w_i)]= \frac{1}{\sqrt{n}}\sum_{i=1}^n (f(w_i) - E[f(w_i)])
$$
and we use the notation $a \lesssim b$ to denote $a \leqslant c b$ for some constant $c>0$ that does not depend on $n$; and
$a\lesssim_P b$ to denote $a=O_P(b)$. Moreover, for two random variables $X, Y$ we say that $X=_dY$ if they have the same probability distribution. Finally, $S^{k-1}$ denotes the space of vectors $\alpha$ in $\mathbb{R}^k$ with unit Euclidean norm: $\|\alpha\|=1$.

\section{Set-Up}\label{Sec:Setup}

Throughout the paper, we consider a sequence of models, indexed by the sample size $n$,
\begin{equation}\label{Setup}
y_i = g(x_i) + \epsilon_i,  \ \ E[\epsilon_i|x_i]=0, \ \ x_i \in \X \subseteq \Bbb{R}^d, \ \ i=1,\ldots,n,
\end{equation}
where $y_i$ is a response variable, $x_i$ a vector of covariates (basic regressors), $\epsilon_i$ noise, and $x\mapsto g(x) = E[y_i|x_i =x]$ a regression (conditional mean) function; that is, we consider a triangular array of models with $y_i=y_{i,n}$, $x_i=x_{i,n}$, $\epsilon_i=\epsilon_{i,n}$, and $g=g_n$. We assume that $g\in\mathcal{G}$ where $\mathcal{G}$ is some class of functions. Since we consider a sequence of models indexed by $n$, we allow the function class $\mathcal{G}=\mathcal{G}_n$, where the regression function $g$ belongs to, to depend on $n$ as well. In addition, we allow $\X=\X_n$ to depend on $n$ but we assume for the sake of simplicity that the diameter of $\X$ is bounded from above uniformly over $n$ (dropping the uniform boundedness condition is possible at the expense of more technicalities; for example, without uniform boundedness condition, we would have an additional term $\log \text{diam}(\X)$ in (\ref{eq: lin2U}) and (\ref{eq: lin4U}) of Lemma \ref{lemma: uniform linearization}). 
We denote $\sigma_i^2=E[\epsilon_i^2|x_i]$, $\bar{\sigma}^2:=\sup_{x\in\mathcal{X}}E[\epsilon_i^2|x_i=x]$, and $\underline{\sigma}^2:=\inf_{x\in\mathcal{X}}E[\epsilon_i^2|x_i=x]$.
For notational convenience, we omit indexing by $n$ where it does not lead to confusion.

\textbf{Condition A.1} (Sample) \textit{For each $n$, random vectors $(y_i, x_i')'$, $i=1,\ldots,n,$ are i.i.d. and satisfy (\ref{Setup}).}

We approximate the function $x\mapsto g(x)$ by linear forms $x \mapsto p(x)' b$, where
$$
x \mapsto p(x) := (p_{1}(x),\ldots,p_k(x))'
$$
is a vector of approximating functions that can change with $n$; in particular, $k$ may increase with $n$. We denote the regressors as
$$
p_i := p(x_i):= (p_{1}(x_i),\ldots,p_k(x_i))'.
$$
The next assumption imposes regularity conditions on the regressors.

\textbf{Condition A.2} (Eigenvalues) \textit{Uniformly over all $n$, eigenvalues of $Q:= E[p_i p_i']$ are bounded above and away from zero.}

Condition A.2 imposes the restriction that $p_1(x_i),\dots,p_k(x_i)$ are not too co-linear.  Given this assumption, it is without loss of generality to impose the following normalization:

\textbf{ Normalization.} \textit{To simplify notation, we normalize $Q = I$, but we shall treat $Q$ as unknown, that is
we deal with random design.}

The following proposition establishes a simple sufficient condition for A.2 based on orthonormal bases with respect to some measure.
\begin{proposition}[Stability of Bounds on Eigenvalues]\label{lemma:StabilityValues} Assume that $x_i\sim F$ where $F$ is a probability measure on $\X$, and that the regressors $p_1(x),\dots,p_k(x)$ are orthonormal on $(\X, \mu)$ for some measure $\mu$. Then
A.2 is satisfied if $dF/d\mu \text{ is bounded above and away from zero. }$
\end{proposition}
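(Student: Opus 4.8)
The plan is to compute the matrix $Q = E[p_i p_i']$ directly and show that its eigenvalues are sandwiched between the essential infimum and essential supremum of the Radon--Nikodym derivative $dF/d\mu$, both of which are by hypothesis positive and finite. First I would write, for any vector $b \in \RR^k$,
\[
b' Q b = E_F\!\left[(p(x_i)'b)^2\right] = \int_{\X} \left(p(x)'b\right)^2 dF(x) = \int_{\X} \left(p(x)'b\right)^2 \frac{dF}{d\mu}(x)\, d\mu(x),
\]
where the last equality uses absolute continuity of $F$ with respect to $\mu$ (which is implied by boundedness of $dF/d\mu$, or can simply be assumed as part of the hypothesis). Then I would bound the density factor pointwise: letting $\underline{c} := \operatorname{ess\,inf} dF/d\mu > 0$ and $\bar{c} := \operatorname{ess\,sup} dF/d\mu < \infty$, we get
\[
\underline{c} \int_{\X} \left(p(x)'b\right)^2 d\mu(x) \;\leq\; b' Q b \;\leq\; \bar{c} \int_{\X} \left(p(x)'b\right)^2 d\mu(x).
\]

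Next I would use orthonormality of $p_1,\dots,p_k$ on $(\X,\mu)$, which says $\int_{\X} p_j(x) p_\ell(x)\, d\mu(x) = \mathbf{1}\{j=\ell\}$, to evaluate the middle integral exactly:
\[
\int_{\X} \left(p(x)'b\right)^2 d\mu(x) = \sum_{j,\ell} b_j b_\ell \int_{\X} p_j(x) p_\ell(x)\, d\mu(x) = \sum_{j} b_j^2 = \|b\|^2.
\]
Combining the two displays yields $\underline{c}\,\|b\|^2 \leq b'Qb \leq \bar{c}\,\|b\|^2$ for all $b$, which is exactly the statement that the eigenvalues of $Q$ lie in $[\underline{c}, \bar{c}]$. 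Since the bounds on $dF/d\mu$ are assumed to hold uniformly over $n$, the constants $\underline{c}$ and $\bar{c}$ do not depend on $n$, so Condition A.2 holds uniformly over $n$, as required.

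There is no serious obstacle here; the argument is essentially a change of measure plus the definition of orthonormality. The only point requiring a modicum of care is the measure-theoretic bookkeeping: one should note that $\mu$ need not be a probability measure (it is only required that the $p_j$ be orthonormal in $L^2(\mu)$), that the integrand inequalities hold $\mu$-a.e. and hence integrate correctly, and that $F \ll \mu$ so that the Radon--Nikodym derivative exists — this last fact follows from (indeed is equivalent in the relevant direction to) the assumption that $dF/d\mu$ is well-defined and bounded. I would state these as a one-line remark rather than belabor them.
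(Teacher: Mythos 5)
Your proof is correct and follows essentially the same route as the paper's: a change of measure from $F$ to $\mu$, a pointwise sandwich of the quadratic form $b'Qb$ by the bounds on $dF/d\mu$, and orthonormality to identify $\int_{\X}(p(x)'b)^2\,d\mu(x)$ with $\|b\|^2$. The paper writes the same two-sided bound with $\lesssim$ in place of your explicit constants $\underline{c}$ and $\bar{c}$, so there is nothing substantive to add.
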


It is well known that the least squares parameter $\beta$ is defined by
$$
\beta := \arg \min_{b \in \Bbb{R}^k} E\left[(y_i - p_i'b)^2\right],
$$
which by (\ref{Setup}) also implies that $\beta=\beta_g$ where $\beta_g$ is defined by
\begin{equation}\label{eq: beta g def}
\beta_g:=\arg \min_{b \in \Bbb{R}^k} E\left[(g(x_i) - p_i'b)^2\right].
\end{equation}
We call $x \mapsto g(x)$ the target function and $x\mapsto g_k(x) = p(x)'\beta$
the surrogate function. In this setting, the surrogate function provides the best linear approximation to the target function.

For all $x\in\mathcal{X}$, let
\begin{equation}\label{eq: approximation error}
r(x):=r_g(x):=g(x)-p(x)'\beta_g
\end{equation}
denote the approximation error at the point $x$, and let
$$
r_i := r(x_i)= g(x_i) - p(x_i)'\beta_g
$$
denote the approximation error for the observation $i$.
Using this notation, we obtain a many regressors model
$$
y_i = p_i'\beta + u_i, \ \ E [u_i x_i] =0, \ \ u_i:= r_i+\epsilon_i.
$$
The least squares estimator of $\beta$ is
\begin{equation}\label{eq: original problem}
\widehat \beta := \arg \min_{b \in \Bbb{R}^k} \En \left[(y_i - p_i'b)^2\right]=\widehat{Q}^{-1}\En[p_i y_i]
\end{equation}
where $\widehat Q:=\En[p_i p_i']$.
The least squares estimator $\widehat{\beta}$ induces the estimator $\widehat g(x) := p(x)'\widehat \beta$ for the target function $g(x)$. Then it follows from (\ref{eq: approximation error}) that we can decompose the error in estimating the target function as
$$
\widehat g(x) - g(x) = p(x)'(\widehat \beta - \beta) - r(x),
$$
where the first term on the right-hand side is the estimation error and the second term is the approximation error.


We are also interested in various linear functionals $\theta$ of the conditional mean function. As discussed in the introduction, examples include
 the partial derivative function,  the average partial derivative function,  and the conditional average partial derivative. 
Importantly,  in each example above we could be interested in estimating $\theta=\theta(w)$ simultaneously for many values $w\in\I$. 
By the linearity of the series approximations, the above parameters can be seen as linear functions of the least squares coefficients $\beta$ up to an approximation error, that is
\begin{equation}\label{eq: cool}
\theta(w) = \ell_\theta(w)' \beta +  r_\theta(w),  \ \   w \in \I,
\end{equation}
where $\ell_\theta(w)' \beta$ is the series approximation, with
$\ell_\theta(w)$ denoting the $k$-vector of loadings on the coefficients,
and $r_\theta(w)$ is the remainder term, which corresponds to the
approximation error.  Indeed, the
decomposition (\ref{eq: cool}) arises from the application of different linear
operators $\mathcal{A}$ to the decomposition $g(\cdot) =
p(\cdot)'\beta + r(\cdot)$ and evaluating the resulting
functions at $w$:
\begin{equation}\label{eq: uncool}
\(\mathcal{A} g(\cdot)\)[w] = \(\mathcal{A} p(\cdot)\)[w]'\beta +  \(\mathcal{A} r(\cdot)\)[w].
\end{equation}

Examples of the operator $\mathcal{A}$ corresponding to the cases enumerated in the introduction are given by,
respectively,
\begin{itemize}
\item[1.]  a differential operator:  $(\mathcal{A}f) [x]= (\partial_j f) [x] $, so that
$$\ell_\theta(x) =\partial_j p(x), \ \ \  r_\theta(x) = \partial_j r(x);$$
\item[2.]  an integro-differential operator:  $\mathcal{A} f=\int \partial_j f(x) d\mu(x)$, so that
$$\ell_\theta  = \int \partial_j p(x) d \mu(x), \ \ \  r_\theta = \int \partial_j r(x) d \mu(x);  $$
\item[3.]  a partial integro-differential operator: $(\mathcal{A}f) [x_{2}]=\int \partial_j f(x) d\mu(x|x^s)$, so that
$$\ell_\theta(x^s) =\int \partial_j p(x)d\mu(x|x^s), \ \ \ r_\theta(x^s) =  \int \partial_j r(x) d\mu(x|x^s),$$
\end{itemize}
where $x^s$ is a subvector of $x$. For notational convenience, we use the formulation (\ref{eq: cool}) in the analysis, instead of the motivational formulation (\ref{eq: uncool}).

We shall provide the inference tools
that will be valid for inference on the series approximation
$$
\ell_\theta(w)' \beta, \ \ w \in \I.
$$
If the approximation error $r_\theta(w),  \ w  \in \I,$ is small enough as compared to the estimation error,
these tools will also be valid for inference on the  functional of interest
$$
\theta(w), \ \ w  \in \I.
$$
In this case,  the series approximation $\ell_\theta(w)$ is an important intermediary
target, whereas the functional $\theta(w)$ is the ultimate target. The
inference will be based on the plug-in estimator $\widehat
\theta(w) := \ell_\theta(w)' \widehat \beta$ of the the series
approximation $\ell_\theta(w)' \beta$ and hence of the final target
$\theta(w)$.

\section{Approximation Properties of Least Squares}\label{Sec:Approx}
Next we consider approximation properties of the least squares estimator. Not surprisingly, approximation properties must rely on the particular choice of approximating functions. At this point it is instructive to consider particular examples of relevant bases used in the literature. For each example, we state a bound on the following quantity:
$$
\xi_k:= \sup_{x \in \X} \| p(x)\|.
$$
This quantity will play a key role in our analysis.\footnote{{ Most results extend directly to the case that $\xi_k\geq  \max_{i\leq n} \| p(x_i)\|$ holds with probability $1-o(1)$. We refer to \cite{Hansen2014} for recent results that explicit allows for unbounded regressors which required extending the concentration inequalities for matrices.}}
 Excellent reviews of approximating properties of different series can also be found in \cite{huang1998} and \cite{Chen2006}, where additional references are provided.

\begin{example}[Polynomial series]\label{Ex:First} Let $\X=[0,1]$ and consider a polynomial series given by
$$
\widetilde p(x) = (1, x, x^2, ..., x^{k-1})'.
$$
In order to reduce collinearity problems, it is useful to orthonormalize the polynomial series with respect to the Lebesgue measure on $[0,1]$
to get the Legendre polynomial series
$$
p(x) = (1,\sqrt{3}x, \sqrt{5/4} (3 x^2 - 1),... )'.
$$
The Legendre polynomial series satisfies
$$
\xi_k \lesssim k;
$$
see, for example, \cite{Newey1997}. \qed
\end{example}

\begin{example}[Fourier series] Let $\X = [0,1]$ and consider a Fourier series given by
$$
p(x) = (1, \cos(2\pi j x), \sin(2\pi j x), j=1,2,..., (k-1)/2)',
$$
for $k$ odd. Fourier series is orthonormal with respect to the Lebesgue measure on $[0,1]$ and satisfies
$$
\xi_k \lesssim \sqrt{k},
$$
which follows trivially from the fact that every element of $p(x)$ is bounded in absolute value by one.\qed
\end{example}

\begin{example}[Spline series]  Let $\X=[0,1]$ and consider the linear regression spline series, or regression spline series of order 1, with a finite number of
equally spaced knots $l_1, \ldots, l_{k-2}$ in $\X$:
$$
\widetilde p(x) = (1, x,  (x-l_1)_+ ,\dots, (x-l_{k-2})_+)',
$$
or consider the cubic regression spline series, or regression spline series of order 3, with a finite number of equally spaced knots $l_1,\dots,l_{k-4}$:
$$
\widetilde p(x) = ( 1, x, x^2, x^3,  (x-l_1)^3_+,
 ..., (x-l_{k-4})^3_+)'.
$$
Similarly, one can define the regression spline series of any order $s_0$ (here $s_0$ is a nonnegative integer).
The function $x \mapsto \widetilde p(x)'b$ constructed using regression splines of order $s_0$ is
$s_0-1$ times continuously differentiable in $x$ for any $b$. Instead of regression splines,
it is often helpful to consider B-splines $p(x)=(p_1(x),\dots,p_k(x))'$, which are
linear transformations of the regression splines with lower multicollinearity; see \cite{DeBoor01} for the introduction to the theory of splines. B-splines are local in the sense that each B-spline $p_j(x)$ is supported on the interval $[l_{j(1)},l_{j(2)}]$ for some $j(1)$ and $j(2)$ satisfying $j(2)-j(1)\lesssim 1$ and there is at most $s_0+1$ non-zero B-splines on each interval $[l_{j-1},l_j]$. From this property of B-splines, it is easy to see that B-spline series satisfies
$$
\xi_k \lesssim \sqrt{k};
$$
see, for example, \cite{Newey1997}. \qed
\end{example}

\begin{example}[Cohen-Deubechies-Vial wavelet series]
Let $\X = [0,1]$ and consider Cohen-Deubechies-Vial (CDV) wavelet bases; see Section 4 in \cite{CDV93}, Chapter 7.5 in \cite{M09}, and Chapter 7 and Appendix B in \cite{Johnstone2011} for details on CDV wavelet bases.
CDV wavelet bases is a class of orthonormal with respect to the Lebesgue measure on $[0,1]$ bases.
Each such basis is built from a Daubechies scaling function $\phi$ (defined on $\Bbb{R}$) and the wavelet $\psi$ of order $s_0$ starting from a fixed resolution level $J_{0}$ such that $2^{J_{0}} \geq 2s_0$. The functions $\phi$ and $\psi$ are supported on $[0,2s_0-1]$ and $[-s_0+1,s_0]$, respectively. Translate $\phi$ so that it has the support $[-s_0+1,s_0]$.
 Let
\begin{equation*}
\phi_{l,m} (x) = 2^{l/2}\phi (2^{l}x - m), \ \psi_{l,m}(x) = 2^{l/2}\psi (2^{l} x - m), \ l,m \geq 0.
\end{equation*}
Then we can create the CDV wavelet basis from these functions as follows.
Take all the functions $\phi_{J_{0},m}, \psi_{l,m}$, $l\geq J_0$, that are supported in the interior of $[0,1]$ (these are functions $\phi_{J_{0},m}$ with $m=s_0-1,\dots,2^{J_{0}}-s_0$ and $\psi_{l,m}$ with $m=s_0-1,\dots,2^{l}-s_0, l \geq J_{0}$). Denote these functions $\widetilde{\phi}_{J_0,m}$, $\widetilde{\psi}_{l,m}$. To this set of functions, add suitable boundary corrected functions $\widetilde{\phi}_{J_0,0},\ldots,\widetilde{\phi}_{J_0,s_0-2}$, $\widetilde{\phi}_{J_0,2^{J_0}-s_0+1},\ldots,\widetilde{\phi}_{J_0,2^{J_0}-1}$,
$\widetilde{\psi}_{l,0},\ldots,\widetilde{\psi}_{l,s_0-2}$,
$\widetilde{\psi}_{l,2^{J_0}-s_0+1},\ldots,\widetilde{\psi}_{l,2^{J_0}-1}$,
$l\geq J_0$, so that
$\{ \widetilde{\phi}_{J_{0},m} \}_{0\leq m<2^{J_0}} \cup \{ \widetilde{\psi}_{l,m} \}_{0 \leq m <2^{l}, l \geq J_{0}}$ forms an orthonormal basis of $L^{2}[0,1]$. Suppose that $k = 2^{J}$ for some $J > J_{0}$. Then the CDV series takes the form:
\begin{equation*}
p(x) = (\widetilde{\phi}_{J_{0},0}(x),\ldots,\widetilde{\phi}_{J_{0},2^{J_{0}}-1}(x),\widetilde{\psi}_{J_{0},0}(x),\ldots,\widetilde{\psi}_{J-1,2^{J-1}-1}(x))'.
\end{equation*}
This series satisfies
\begin{equation*}
\xi_{k} \lesssim \sqrt{k}.
\end{equation*}
This bound can be derived by the same argument as that for B-splines \citep[see, for example,][Lemma 1 (i) for its proof]{Kato2013}. CDV wavelet bases is a flexible tool to approximate many different function classes. See, for example, \cite{Johnstone2011}, Appendix B.
\qed
\end{example}

\begin{example}[Local polynomial partition series]\label{eq: local polynomial}
Let $\mathcal{X}=[0,1]$ and define a local polynomial partition series as follows. Let $s_0$ be a nonnegative integer. Partition $\mathcal{X}$ as $0=l_0<l_1,\dots<l_{\widetilde{k}-1}<l_{\widetilde{k}}=1$ where $\widetilde{k}:=[k/(s_0+1)]+1$ where $[ a ]$ is the largest integer that is strictly smaller than $a$. For $j=1,\dots,\widetilde{k}$, define $\delta_j:[0,1]\to\{0,1\}$ by $\delta_j(x)=1$ if $x\in(l_{j-1},l_j]$ and $0$ otherwise. For $j=1,\dots,k$, define
$$
\widetilde{p}_j(x):=\delta_{[ j/(s_0+1)]+1}(x)x^{j-1-(s_0+1)[ j/(s_0+1)]}
$$
for all $x\in\mathcal{X}$. Finally, define the local polynomial partition series $p_1(\cdot),\dots,p_k(\cdot)$ of order $s_0$ as an orthonormalization of $\widetilde{p}_1(\cdot),\dots,\widetilde{p}_k(\cdot)$ with respect to the Lebesgue (or some other) measure on $\mathcal{X}$. The local polynomial partition series estimator was analyzed in detail in \cite{CF2013}. Its properties are somewhat similar to those of local polynomial estimator of \cite{Stone1982}. When the partition $l_0,\dots,l_{\widetilde{k}}$ satisfies $l_j-l_{j-1}\asymp 1/\widetilde{k}$, that is there exist constants $c,C>0$ independent of $n$ and such that $c/\widetilde{k}\leq l_j-l_{j-1}\leq C/\widetilde{k}$ for all $j=1,\dots,\widetilde{k}$, and the Lebesgue measure is used, the local polynomial partition series satisfies
$$
\xi_k\lesssim \sqrt{k}.
$$
This bound can be derived by the same argument as that for B-splines.\qed
\end{example}

\begin{example}[Tensor Products]\label{Ex:Last} Generalizations to multiple covariates are straightforward
using tensor products of unidimensional series. Suppose that the basic regressors
are
$$
x_i= (x_{1i}, ..., x_{di})'.
$$
Then we can create $d$
series for each basic regressor. Then we take all
interactions of functions from these $d$ series, called tensor
products, and collect them into a vector of regressors
$p_i$. If each series for a basic regressor has $J$
terms, then the final regressor has dimension $$k
= J^d,$$ which explodes exponentially in the
dimension $d$.  The bounds on $\xi_k$ in terms of $k$
remain the same as in one-dimensional case.\qed
\end{example}

Each basis described in Examples \ref{Ex:First}-\ref{Ex:Last} has different approximation properties which also depend on the particular class of functions $\mathcal{G}$. The following assumption captures the essence of this dependence into two quantities.

\textbf{Condition A.3} (Approximation) \textit{For each $n$ and $k$, there are finite constants  $c_k$ and $\ell_k$ such that for each $f \in \G$,}
$$
\|r_f\|_{F,2} := \sqrt{{ \begin{array}{l}\int_{x\in\mathcal{X}}\end{array} r_f^2(x) d F(x)}} \leq c_k
\ \ \
\mbox{{\em and}} \ \ \
\|r_f\|_{F,\infty} := \sup_{x \in \X}  |r_f(x)| \leq \ell_k c_k.
$$

Here $r_f$ is defined by (\ref{eq: beta g def}) and (\ref{eq: approximation error}) with $g$ replaced by $f$.
We call  $\ell_k$ the Lebesgue factor because of its relation to the Lebesgue constant defined in Section \ref{sub: lebesgue factor} below.
Together $c_k$ and $\ell_k$ characterize the approximation properties of the underlying class of functions under $L^2(\mathcal{X},F)$ and uniform distances. Note that constants $c_k=c_k(\mathcal{G})$ and $\ell_k=\ell_k(\mathcal{G})$ are allowed to depend $n$ but we omit indexing by $n$ for simplicity of notation. Next we discuss primitive bounds on $c_k$ and $\ell_k$.

\subsection{Bounds on $c_k$}
In what follows, we call the case where $c_k \to 0$ as $k \to \infty$ the correctly specified case. In particular, if the series are formed from bases that span  $\G$, then $c_k \to 0$ as $k \to  \infty$. However, if series are formed from bases that do not span  $\G$, then $c_k \not\to 0$ as $k \to  \infty$.  We call any case where $c_k \not \to 0$ the incorrectly specified (misspecified) case.

To give an example of the misspecified case, suppose that $d=2$, so that $x=(x_1,x_2)'$ and $g(x)=g(x_1,x_2)$. Further, suppose that the researcher mistakenly assumes that $g(x)$ is additively separable in $x_1$ and $x_2$: $g(x_1,x_2)=g_1(x_1)+g(x_2)$. Given this assumption, the researcher forms the vector of approximating functions $p(x_1,x_2)$ such that each component of this vector depends either on $x_1$ or $x_2$ but not on both; see \cite{Newey1997} and \cite{NPV99} for the description of nonparametric series estimators of separately additive models. Then note that if the true function $g(x_1,x_2)$ is not separately additive, linear combinations $p(x_1,x_2)'b$ will not be able to accurately approximate $g(x_1,x_2)$ for any $b$, so that $c_k$ does not converge to zero as $k\to\infty$. Since analysis of misspecified models plays an important role in econometrics, we include results both for correctly and incorrectly specified models.

To provide a bound on $c_k$, note that for any $f\in\mathcal{G}$,
$$
\inf_b \| f - p'b\|_{F,2}\leq \inf_{b} \| f - p'b\|_{F, \infty},
$$
so that it suffices to set $c_k$ such that $c_k\geq\sup_{f\in\mathcal{G}}\inf_{b} \| f - p'b\|_{F, \infty}$. Next, the bounds for $\inf_{b} \| f - p'b\|_{F, \infty}$ are readily available from the Approximation Theory; see \cite{DeVoreLorentz1993}. A typical example is based on the concept of $s$-smooth classes, namely H\"{o}lder classes of smoothness order $s$, $\Sigma_s(\mathcal{X})$. For $s\in(0,1]$, the H\"{o}lder class of smoothness order $s$, $\Sigma_s(\mathcal{X})$, is defined as the set of all functions $f:\mathcal{X}\to\mathbb{R}$ such that for $C>0$,
$$
|f(x)-f(\widetilde x)|\leq C\Big(\sum_{j=1}^d(x_j-\widetilde{x}_j)^2\Big)^{s/2}
$$
for all $x=(x_1,\dots,x_d)'$ and $\widetilde x=(\widetilde{x}_1,\dots,\widetilde{x}_d)'$ in $\mathcal{X}$. The smallest $C$ satisfying this inequality defines a norm of $f$ in $\Sigma_s(\mathcal{X})$, which we denote by $\|f\|_s$. For $s>1$, $\Sigma_s(\mathcal{X})$ can be defined as follows. For a $d$-tuple $\alpha=(\alpha_1,\dots,\alpha_d)$ of nonnegative integers, let
$$
D^\alpha=\partial_{x_1}^{\alpha_1}\dots\partial_{x_d}^{\alpha_d}.
$$
Let $[s]$ denote the largest integer strictly smaller than $s$. Then $\Sigma_s(\mathcal{X})$ is defined as the set of all functions $f:\mathcal{X}\to\mathbb{R}$ such that $f$ is $[s]$ times continuously differentiable and for some $C>0$,
$$
|D^{\alpha}f(x)-D^{\alpha}f(\widetilde{x})|\leq C\Big(\sum_{j=1}^d(x_j-\widetilde{x}_j)^2\Big)^{(s-[s])/2}\text{ and }|D^\beta f(x)|\leq C
$$
for all $x=(x_1,\dots,x_d)'$ and $\widetilde{x}=(\widetilde{x}_1,\dots,\widetilde{x}_d)'$ in $\mathcal{X}$ and for all $d$-tuples $\alpha=(\alpha_1,\dots,\alpha_d)$ and $\beta=(\beta_1,\dots,\beta_d)$ of nonnegative integers satisfying $\alpha_1+\dots+\alpha_d=[s]$ and $\beta_1+\dots+\beta_d\leq[s]$. Again, the smallest $C$ satisfying these inequalities defines a norm of $f$ in $\Sigma_s(\mathcal{X})$, which we denote $\|f\|_s$.

If $\G$ is a set of functions $f$ in $\Sigma_s(\mathcal{X})$ such that $\|f\|_s$ is bounded from above uniformly over all $f\in\G$ (that is, $\G$ is contained in a ball in $\Sigma_s(\mathcal{X})$ of finite radius), then we can take
\begin{equation}\label{eq: approximation properties standard}
c_k \lesssim  k^{-s/d}
\end{equation}
for the polynomial series and
$$
c_k \lesssim  k^{-(s\wedge s_0)/d}
$$
for spline, CDV wavelet, and local polynomial partition series of order $s_0$. If in addition we assume that each element of $\G$ can be extended to a periodic function, then (\ref{eq: approximation properties standard}) also holds for the Fourier series. See, for example, \cite{Newey1997} and \cite{Chen2006} for references.


\subsection{Bounds on $\ell_k$}\label{sub: lebesgue factor}
We say that a least squares approximation by a particular series for the function class
$\mathcal{G}$ is co-minimal if the Lebesgue factor $\ell_k$ is small in the sense
of being a slowly varying function in $k$. A simple bound on $\ell_k$, which is independent of $\G$, is established in the following proposition:

\begin{proposition}\label{lem: simple bound on l}
If $c_k$ is chosen so that $c_k\geq \sup_{f\in\G}\inf_b\|f-p'b\|_{F,\infty}$, then Condition A.3 holds with
$$
\ell_k\leq 1+\xi_k.
$$
\end{proposition}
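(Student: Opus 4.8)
The plan is to use the normalization $Q:=E[p_ip_i']=I$ in an essential way: it turns the least squares coefficient into the vector of generalized Fourier coefficients, $\beta_f=E[p_if(x_i)]=\int_{\X}p(x)f(x)\,dF(x)$, and more generally makes $\|E[p_ih(x_i)]\|\le\|h\|_{F,2}$ for every square-integrable $h$ (this is just Cauchy--Schwarz together with $E[(\gamma'p_i)^2]=\gamma'Q\gamma=\|\gamma\|^2$, i.e.\ Bessel's inequality). The $L^2$ half of Condition A.3 is then immediate: since $\beta_f$ is the $L^2(\X,F)$-minimizer and $\|f-p'b\|_{F,2}\le\|f-p'b\|_{F,\infty}$ pointwise in $b$, we get $\|r_f\|_{F,2}=\inf_b\|f-p'b\|_{F,2}\le\inf_b\|f-p'b\|_{F,\infty}\le c_k$.

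For the $L^\infty$ half I would fix an arbitrary $\delta>0$, choose $b^*$ with $\|f-p'b^*\|_{F,\infty}\le\inf_b\|f-p'b\|_{F,\infty}+\delta\le c_k+\delta$, set $h:=f-p'b^*$, and decompose the approximation error as
$$
r_f=f-p'\beta_f=h-p'(\beta_f-b^*),
$$
so that $\|r_f\|_{F,\infty}\le\|h\|_{F,\infty}+\xi_k\|\beta_f-b^*\|$ by the triangle inequality and the definition $\xi_k=\sup_{x\in\X}\|p(x)\|$. The key observation is that the gap $\beta_f-b^*$ is itself a Fourier-coefficient vector: from the first-order conditions for $\beta_f$ and $Q=I$,
$$
\beta_f-b^*=E[p_if(x_i)]-Qb^*=E[p_i(f(x_i)-p_i'b^*)]=E[p_ih(x_i)],
$$
hence $\|\beta_f-b^*\|=\|E[p_ih(x_i)]\|\le\|h\|_{F,2}\le\|h\|_{F,\infty}\le c_k+\delta$. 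Combining, $\|r_f\|_{F,\infty}\le(1+\xi_k)(c_k+\delta)$; since the left-hand side is independent of $\delta$, letting $\delta\downarrow0$ gives $\|r_f\|_{F,\infty}\le(1+\xi_k)c_k$, that is, Condition A.3 holds with $\ell_k=1+\xi_k$.

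There is no serious obstacle here; the entire content is the remark that the difference between the $L^2$-optimal coefficient $\beta_f$ and a near-$L^\infty$-optimal coefficient $b^*$ is the least-squares coefficient vector of the residual function $h$, so its Euclidean norm is controlled by $\|h\|_{F,2}\le\|h\|_{F,\infty}$; the factor $\xi_k$ is exactly the cost of converting that coefficient bound back into a sup-norm bound on $p'(\beta_f-b^*)$, and the ``$+1$'' is the triangle-inequality cost of replacing the $L^\infty$-best approximant by $\beta_f$. The only mild technical nuisance is that an exact $L^\infty$-best approximant from the span of $p$ need not be exhibited explicitly; the $\delta$-perturbation above sidesteps this, although one could instead note that the span is finite-dimensional so that a minimizer exists and drop $\delta$ altogether.
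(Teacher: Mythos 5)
Your proof is correct and follows essentially the same route as the paper's: both bound $\|r_f\|_{F,\infty}$ by the triangle inequality around a (near) sup-norm-best approximant and observe that, under the normalization $Q=I$, the coefficient gap $\beta_f-b^*$ is the vector of $L^2$-projection coefficients of the residual $h$, hence $\|\beta_f-b^*\|\le\|h\|_{F,2}\le\|h\|_{F,\infty}\le c_k$, with the factor $\xi_k$ converting this back to a sup-norm bound. The only cosmetic differences are that the paper works with the exact $L^\infty$-minimizer $\beta_f^\star$ where you use a $\delta$-approximate one (both are fine, since the span is finite-dimensional a minimizer in fact exists), and that you also spell out the $L^2$ half of Condition A.3, which the paper dispatches in the discussion preceding the proposition.
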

The proof of this proposition is based on the ideas of \cite{Newey1997} and is provided in the Appendix.
The advantage of the bound established in this proposition is that it is universally applicable. It is, however, not sharp in many cases because $\xi_k$ satisfies
$$
\xi_k^2\geq E[\|p(x_i)\|^2]=E[p(x_i)'p(x_i)]=k
$$
so that $\xi_k  \gtrsim \sqrt{k}$ in all cases. Much sharper bounds follow from Approximation Theory for some important cases. To apply these bounds, define the Lebesgue constant:
$$
\widetilde{\ell}_k:=\sup \left (  \frac{\|p'\beta_f\|_{F,\infty}}{\|f\|_{F,\infty} }:  \|f\|_{F,\infty} \neq 0,  f \in \bar{\mathcal{G}} \right),
$$
where $\bar{\mathcal{G}} = \mathcal{G} + \{p'b: b \in \Bbb{R}^k\} = \{ f + p'b: f \in \mathcal{G}, b \in \Bbb{R}^k\}$.  The following proposition provides a bound on $\ell_k$ in terms of $\widetilde{\ell}_k$:
\begin{proposition}\label{lem: approximation theory bound}
If $c_k$ is chosen so that $c_k\geq\sup_{f\in\G}\inf_b\|f-p'b\|_{F,\infty}$, then Condition A.3 holds with
$$
\ell_k=1+\widetilde{\ell}_k.
$$
\end{proposition}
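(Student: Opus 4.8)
The plan is to verify the two inequalities defining Condition A.3 in turn. The $L^2$ bound is essentially immediate from the defining (projection) property of $\beta_f$, while the $L^\infty$ bound rests on a translation-invariance property of the $L^2(F)$-projection, which will reduce the claim to the definition of the Lebesgue constant $\widetilde{\ell}_k$.

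For the $L^2$ part, note that by definition $\beta_f$ minimizes $b\mapsto\|f-p'b\|_{F,2}^2$, so $\|r_f\|_{F,2}=\inf_b\|f-p'b\|_{F,2}$; since $F$ is a probability measure we have $\|\cdot\|_{F,2}\le\|\cdot\|_{F,\infty}$, whence $\|r_f\|_{F,2}\le\inf_b\|f-p'b\|_{F,\infty}\le c_k$. For the $L^\infty$ part, the steps I would carry out are: (i) fix an arbitrary $b\in\Bbb{R}^k$ and put $h:=f-p'b$, which lies in $\bar{\mathcal{G}}=\mathcal{G}+\{p'b:b\in\Bbb{R}^k\}$; (ii) observe that minimizing $a\mapsto\|h-p'a\|_{F,2}^2$ is literally the same problem as minimizing $a\mapsto\|f-p'(a+b)\|_{F,2}^2$, so the (unique, since $Q=I$ is nonsingular) minimizer is $\beta_h=\beta_f-b$, and hence $r_f=f-p'\beta_f=h-p'\beta_h$ — that is, the least squares approximation error is unchanged when the target is shifted by an element of the span of $p$; (iii) apply the definition of $\widetilde{\ell}_k$ to $h\in\bar{\mathcal{G}}$ to get $\|p'\beta_h\|_{F,\infty}\le\widetilde{\ell}_k\|h\|_{F,\infty}$, treating the degenerate case $\|h\|_{F,\infty}=0$ separately (then $h\equiv0$ on $\X$, so $\beta_h=0$ and $r_f=h\equiv0$, and the bound holds trivially); (iv) combine with the triangle inequality,
$$
\|r_f\|_{F,\infty}=\|h-p'\beta_h\|_{F,\infty}\le\|h\|_{F,\infty}+\|p'\beta_h\|_{F,\infty}\le(1+\widetilde{\ell}_k)\|f-p'b\|_{F,\infty};
$$
(v) take the infimum over $b\in\Bbb{R}^k$ and use the hypothesis $\inf_b\|f-p'b\|_{F,\infty}\le c_k$ to conclude $\|r_f\|_{F,\infty}\le(1+\widetilde{\ell}_k)c_k$, i.e. Condition A.3 holds with $\ell_k=1+\widetilde{\ell}_k$.

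There is no genuine obstacle here; the one step deserving care is the identity $\beta_h=\beta_f-b$ of step (ii), since it is precisely this translation invariance that makes the quantity appearing after the triangle inequality — namely $\|p'\beta_h\|_{F,\infty}$ with $h$ ranging over $\bar{\mathcal{G}}$ — coincide exactly with the ratio that the Lebesgue constant $\widetilde{\ell}_k$ is designed to control. I would also check at the end that every bound used depends on $f$ only through $\inf_b\|f-p'b\|_{F,\infty}\le c_k$, so that the resulting bounds hold uniformly over $f\in\mathcal{G}$, as Condition A.3 requires.
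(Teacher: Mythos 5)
Your proof is correct and follows essentially the same route as the paper's: both rest on the linearity (translation invariance) of the $L^2(F)$-projection, so that $r_f = h - p'\beta_h$ for $h = f - p'b \in \bar{\mathcal{G}}$, followed by the definition of $\widetilde{\ell}_k$ and the triangle inequality. The only differences are cosmetic refinements on your part — taking an infimum over arbitrary $b$ rather than fixing the best uniform approximant $\beta_f^\star$, handling the degenerate case $\|h\|_{F,\infty}=0$, and spelling out the (immediate) $L^2$ bound — none of which changes the argument.
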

Note that in all examples above, we provided $c_k$ such that $c_k\geq\sup_{f\in\G}\inf_b\|f-p'b\|_{F,\infty}$, and so the results of Propositions \ref{lem: simple bound on l} and \ref{lem: approximation theory bound} apply in our examples.
We now provide bounds on $\widetilde{\ell}_k$.

\begin{example}[Fourier series, continued]\label{ex: fourier} For Fourier series on $\X=[0,1]$, $F = U(0,1)$, and $\G \subset C(\X)$
$$
\widetilde{\ell}_k \leq  C_0 \log k + C_1,
$$
where here and below $C_0$ and $C_1$ are some universal constants; see \cite{Zygmund2002}.\qed
\end{example}

\begin{example}[Spline series, continued] For continuous B-spline series on $\X=[0,1]$, $F=U(0,1)$, and $\G \subset C(\X)$
$$
\widetilde{\ell}_k \leq C_0,
$$
under approximately uniform placement of knots; see \cite{Huang2003b}. In fact, the result of Huang states that $\widetilde{\ell}_k \leq C$ whenever $F$ has the pdf on $[0,1]$ bounded from above by $\bar{a}$ and below from zero by $\underline{a}$ where $C$ is a constant that depends only on $\underline{a}$ and $\bar{a}$.\qed
\end{example}

\begin{example}[Wavelet series, continued] For continuous CDV wavelet series on $\X=[0,1]$, $F=U(0,1)$, and $\G \subset C(\X)$
$$
\widetilde{\ell}_k \leq C_0.
$$
The proof of this result was recently obtained by \cite{CC2013} who extended the argument of \cite{Huang2003b} for B-splines to cover wavelets. In fact, the result of Chen and Christensen also shows that $\widetilde{\ell}_k \leq C$ whenever $F$ has the pdf on $[0,1]$ bounded from above by $\bar{a}$ and below from zero by $\underline{a}$ where $C$ is a constant that depends only on $\underline{a}$ and $\bar{a}$.\qed
\end{example}

\begin{example}[Local polynomial partition series, continued]
For local polynomial partition series on $\mathcal{X}$, $F=U(0,1)$, and $\G\subset C(\X)$,
$$
\widetilde{\ell_k}\leq C_0.
$$
To prove this bound, note that first order conditions imply that for any $f\in\bar{\G}$,
$$
\beta_f=Q^{-1}E[p(x_1)f(x_1)]=E[p(x_1)f(x_1)].
$$
Hence, for any $x\in\X$,
$$
|p(x)'\beta_f|=|E[p(x)'p(x_1)f(x_1)]|\lesssim \|f\|_{F,\infty}
$$
where the last inequality follows by noting that the sum $p(x)'p(x_1)=\sum_{j=1}^kp_j(x)p_j(x_1)$ contains at most $s_0+1$ nonzero terms, all nonzero terms in the sum are bounded by $\xi_k^2\lesssim k$, and $p(x)'p(x_1)=0$ outside of a set with probability bounded from above by $1/k$ up to a constant. The bound follows.    Moreover, the bound $\widetilde{\ell}_k \leq C$  continues to hold whenever $F$ has the pdf on $[0,1]$ bounded from above by $\bar{a}$ and below from zero by $\underline{a}$ where $C$ is a constant that depends only on $\underline{a}$ and $\bar{a}$.
\qed
\end{example}

\begin{example}[Polynomial series, continued]  For Chebyshev polynomials with $\X=[0,1]$, $d F(x)/dx = 1/\sqrt{1-x^2}$,  and $\G \subset C(\X)$
$$
\widetilde{\ell}_k \leq C_0 \log k + C_1.
$$
This bound follows from a trigonometric representation of Chebyshev polynomials (see, for example, \cite{DeVoreLorentz1993}) and Example \ref{ex: fourier}.\qed
\end{example}

\begin{example}[Legendre Polynomials]  For Legendre polynomials that form an orthonormal basis on  $\X=[0,1]$ with respect to $F=(0,1)$,  and $\G = C(\X)$
$$
\widetilde{\ell}_k \geq C_0 \xi_k = C_1k,
$$
for some constants $C_0, C_1 >0$. See, for example, \cite{DeVoreLorentz1993}). This means that even though some series schemes generate well-behaved uniform approximations, others -- Legendre polynomials -- do not in general. However, the following example specifies ``tailored" function classes, for which Legendre and other series methods do automatically provide uniformly well-behaved approximations.  \qed
\end{example}


\begin{example}[Tailored Function Classes] For each type of series approximations, it is possible to specify function classes  for which the Lebesgue factors are constant or slowly varying with $k$.  Specifically, consider a collection
$$
\mathcal{G}_k = \left \{x \mapsto f(x) = p(x)'b + r(x):   \int r(x) p(x) d F(x) = 0,  \| r \|_{F,\infty} \leq \ell_k   \| r \|_{F,2}, \| r \|_{F,2} \leq c_k \right \},
$$
where $\ell_k \leq C$ or $\ell_k \leq C \log k$.  This example captures the idea, that for each type of series functions there are function classes that are well-approximated by this type. For example, Legendre polynomials may have poor Lebesgue factors in general, but there are well-defined function classes, where Legendre polynomials have well-behaved Lebesgue factors. This explains why polynomial approximations, for example, using Legendre polynomials, are frequently employed in empirical work.  We provide an empirically relevant example below, where polynomial approximation works just as well as a B-spline approximation.  In economic examples, both polynomial approximations and B-spline approximations are well-motivated if we consider them as more flexible forms of well-known, well-motivated functional forms in economics (for example, as more flexible versions of the linear-quadratic Mincer equations, or the more flexible versions of translog demand and production functions).  \qed 
\end{example}

The following example illustrate the performance of the series estimator using different bases for a real data set.

\begin{example}[Approximations of Conditional Expected Wage Function]
Here $g(x)$ is the mean of log wage ($y$)
conditional on education $$x \in \{8,9,10,11, 12, 13, 14, 16, 17,
18, 19, 20\}.$$ The function $g(x)$ is computed using population
data -- the 1990 Census data for the U.S. men of prime age; see \cite{ACF2006} for more details. So in this example, we know the true population function $g(x)$. We would like to know how well this
function is approximated when common approximation methods
are used to form the regressors. For simplicity we assume that $x_i$
is uniformly distributed (otherwise we can weigh by the frequency).
In population, least squares estimator solves the approximation problem: $\beta=\arg\min_b E[
\{g(x_i) - p_i'b\}^2] $ for $p_i =p(x_i)$, where we form $p(x)$ as (a)
linear spline (Figure 1, left) and (b) polynomial series (Figure 1, right), such that dimension of $p(x)$ is either $k=3$ or $k=8$. It is clear from these graphs that spline and polynomial series yield similar approximations.

\begin{figure}[vt]
\label{Fig:1}
\centering
\begin{tabular}{cc}
\includegraphics[scale=0.5]{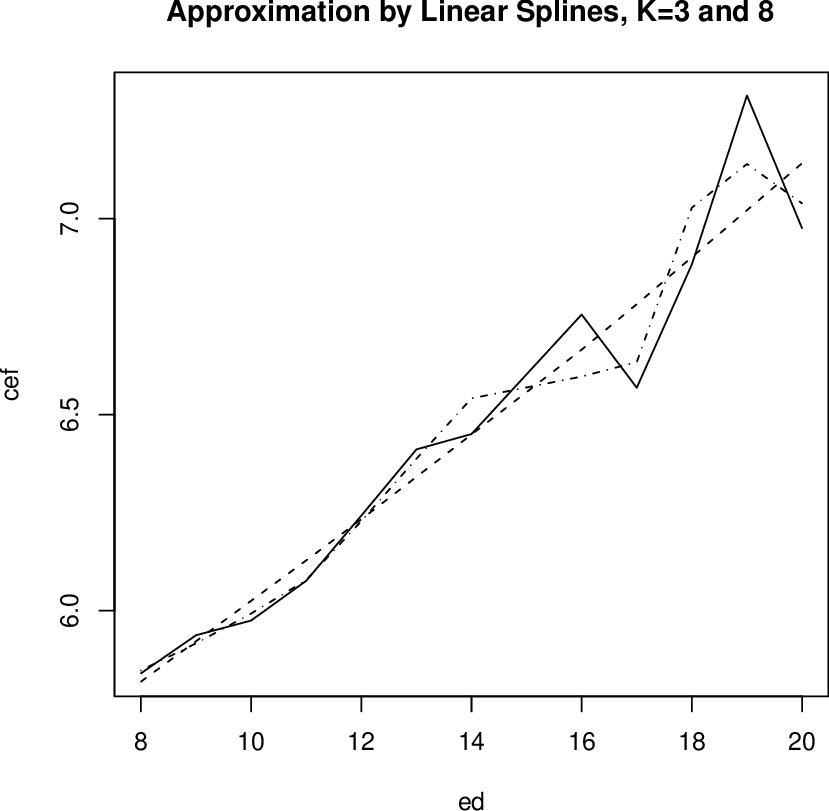}
&\includegraphics[scale=0.5]{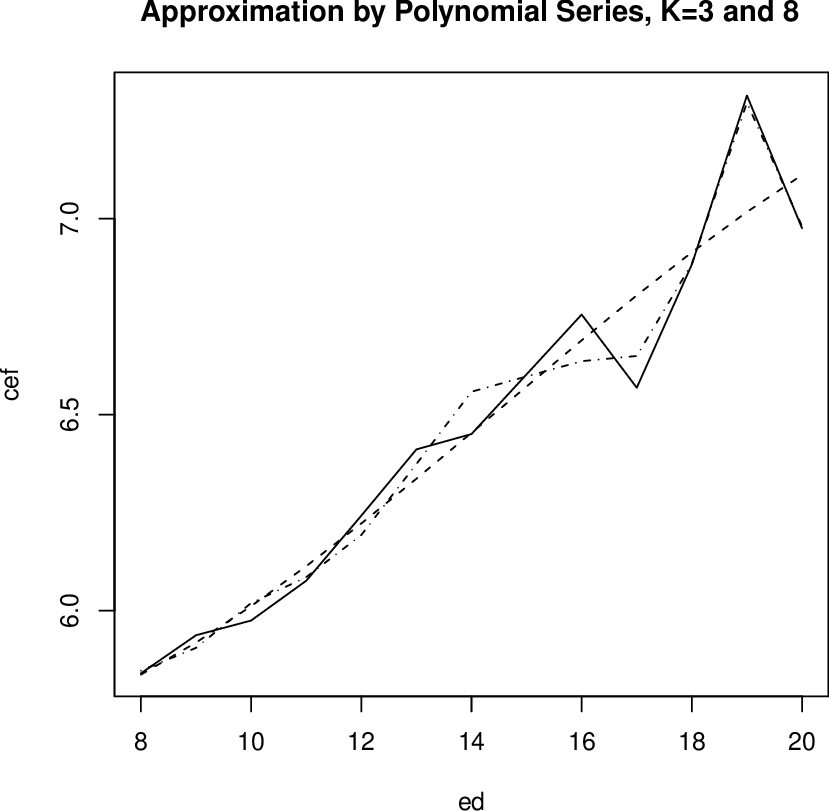}
\end{tabular}
\caption{\small{Conditional expectation function (cef) of $\log$ wage given education (ed) in the 1990 Census data for the U.S. men of prime age and its least squares approximation by spline (left panel) and polynomial series (right panel). Solid line - conditional expectation function; dashed line - approximation by $k=3$ series terms; dash-dot line - approximation by $k=8$ series terms}}
\end{figure}

In the table below, we also present $L^2$ and $L^\infty$ norms of approximating errors:

\begin{center}
\begin{tabular}{ c | c c c  c }
   &spline $k=3$ & spline $k=8$ & Poly $k=3$ & Poly $k=8$
  \\
  \hline
$L^2$ Error &  0.12 & 0.08 & 0.12 & 0.05 \\
$L^{\infty}$ Error &   0.29 & 0.17 & 0.30 & 0.12 \\
\hline
\end{tabular}
\end{center}
We see from the table that in this example, the Lebesgue factor, which is defined as the ratio of $L^\infty$ to $L^2$ errors, of the polynomial approximations is comparable to the Lebesgue factor of the spline approximations.
\qed
\end{example}

\section{Limit Theory}\label{Sec:LT}

\subsection{$L^{2}$ Limit Theory}\label{Sec:L2Rate}

After we have established the set-up, we proceed to derive our results. We start with a result on the $L^{2}$ rate of convergence. Recall that $\bar{\sigma}^2=\sup_{x\in\mathcal{X}}E[\epsilon_i^2|x_i=x]$. In the theorem below, we assume that $\overline{\sigma}^2\lesssim 1$. This is a mild regularity condition.

\begin{theorem}[$L^{2}$ rate of convergence]\label{theorem: L2 rate}  Assume that Conditions A.1-A.3 are satisfied. In addition, assume that $\xi_k^2\log k/n\to 0$ and $\overline{\sigma}^2\lesssim 1$. Then
under $c_k \to 0$,
\begin{equation}\label{eq: l2 rate 1}
\| \widehat g - g\|_{F,2} \lesssim_P  \sqrt{k/n} + c_k,
\end{equation}
and under $c_k \not\to 0$,
\begin{equation}\label{eq: l2 rate 2}
\|\widehat g - p'\beta\|_{F,2}\lesssim_P \sqrt{k/n}+(\ell_kc_k\sqrt{k/n})\wedge(\xi_kc_k/\sqrt{n}),
\end{equation}

\end{theorem}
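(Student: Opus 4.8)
The plan is to reduce the statement to a single bound on $\|\widehat\beta-\beta\|$. Since $\widehat g(x)-p(x)'\beta=p(x)'(\widehat\beta-\beta)$ and $\int p(x)p(x)'\,dF(x)=Q=I$, one has the exact identity $\|\widehat g-p'\beta\|_{F,2}=\|\widehat\beta-\beta\|$; combined with the triangle inequality and Condition A.3 this gives $\|\widehat g-g\|_{F,2}\le\|\widehat\beta-\beta\|+\|r\|_{F,2}\le\|\widehat\beta-\beta\|+c_k$. Hence both (\ref{eq: l2 rate 1}) and (\ref{eq: l2 rate 2}) follow once I show
\[
\|\widehat\beta-\beta\|\lesssim_P\sqrt{k/n}+\bigl(\ell_kc_k\sqrt{k/n}\bigr)\wedge\bigl(\xi_kc_k/\sqrt n\bigr),
\]
where for the correctly specified case I additionally use that $\xi_k/\sqrt n\to0$ (a consequence of $\xi_k^2\log n/n\to0$), so $\xi_kc_k/\sqrt n=o(c_k)$ and the last term is absorbed into $\sqrt{k/n}+c_k$.

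First I would invoke the matrix law of large numbers obtained from the non-commutative Khinchine inequality (ingredient 1 in the introduction), applied to the i.i.d.\ vectors $p_i$, which satisfy $\|p_i\|\le\xi_k$ a.s.\ and $E[p_ip_i']=I$. It yields, for $\widehat Q:=\En[p_ip_i']$, a bound $\|\widehat Q-I\|\lesssim_P\sqrt{\xi_k^2\log k/n}+\xi_k^2\log k/n$, which is $o_P(1)$ under $\xi_k^2\log n/n\to0$ and $\log k\lesssim\log n$. Consequently, with probability approaching one $\widehat Q$ is invertible and $\|\widehat Q^{-1}\|\le2$, and on this event I have the decomposition
\[
\widehat\beta-\beta=\widehat Q^{-1}\En[p_iu_i]=\widehat Q^{-1}\bigl(\En[p_i\epsilon_i]+\En[p_ir_i]\bigr),
\]
because $u_i=y_i-p_i'\beta=r_i+\epsilon_i$.

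Next I would control the two random vectors via second moments. For the stochastic part, using $E[\epsilon_i\mid x_i]=0$ and $\overline\sigma^2\lesssim1$,
\[
E\bigl[\|\En[p_i\epsilon_i]\|^2\bigr]=\tfrac1nE\bigl[\|p_i\|^2\sigma_i^2\bigr]\le\tfrac{\overline\sigma^2}{n}E\bigl[\|p_i\|^2\bigr]=\tfrac{\overline\sigma^2k}{n}\lesssim\tfrac kn,
\]
so $\|\En[p_i\epsilon_i]\|\lesssim_P\sqrt{k/n}$ by Chebyshev. For the approximation-error part, the first-order conditions defining $\beta_g$ give $E[p_ir_i]=0$, hence $E[\|\En[p_ir_i]\|^2]\le\tfrac1nE[\|p_i\|^2r_i^2]$, which I bound in two ways: $\le\xi_k^2\|r\|_{F,2}^2/n\le\xi_k^2c_k^2/n$, and $\le\|r\|_{F,\infty}^2E[\|p_i\|^2]/n\le\ell_k^2c_k^2k/n$. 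Thus $\|\En[p_ir_i]\|\lesssim_P(\xi_kc_k/\sqrt n)\wedge(\ell_kc_k\sqrt{k/n})$ — the minimum of two deterministic bounds, each holding with probability approaching one, itself holds with probability approaching one. Combining these estimates with $\|\widehat Q^{-1}\|\le2$ on the good event yields the displayed bound on $\|\widehat\beta-\beta\|$, completing the proof.

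The main obstacle is the operator-norm concentration of $\widehat Q$ under only $\xi_k^2\log n/n\to0$. A crude Frobenius-norm argument gives only $\|\widehat Q-I\|\lesssim_P\xi_k\sqrt{k/n}$, which is $o_P(1)$ merely under the classical $\xi_k^2k/n\to0$ (i.e.\ $k^2/n\to0$ for bounded bases); it is exactly the non-commutative Khinchine inequality that sharpens this to the stated rate and hence permits the weak condition $k/n\to0$ up to logs. Everything else is a routine second-moment computation plus bookkeeping of events of probability approaching one; a minor point is that $\widehat\beta$ need only be defined where $\widehat Q$ is invertible, but the complement is asymptotically negligible and affects none of the $\lesssim_P$ conclusions.
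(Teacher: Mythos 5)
Your proof is correct and follows the same overall route as the paper: reduce to $\|\widehat\beta-\beta\|$ via $Q=I$ and the triangle inequality, use the matrix LLN (Rudelson/non-commutative Khinchine) under $\xi_k^2\log n/n\to0$ to get $\|\widehat Q-I\|=o_P(1)$ and hence $\|\widehat Q^{-1}\|\lesssim_P 1$, and then bound $\En[p_i\epsilon_i]$ and $\En[p_ir_i]$ by second-moment/Chebyshev arguments. The one substantive divergence is in the correctly specified case: the paper handles $\widehat Q^{-1}\En[p_ir_i]$ by an in-sample projection argument, noting that $\widehat r_i:=p_i'\widehat Q^{-1}\En[p_ir_i]$ is the empirical projection of $r_i$ on $p_i$, so $\|\widehat Q^{-1/2}\En[p_ir_i]\|^2=\En[\widehat r_i^2]\leq\En[r_i^2]\lesssim_P c_k^2$, which yields the bound $\lesssim_P c_k$ without using $\xi_k$, $\ell_k$, or the rate condition at all; you instead reuse the misspecified-case bound $(\xi_kc_k/\sqrt n)\wedge(\ell_kc_k\sqrt{k/n})$ and absorb $\xi_kc_k/\sqrt n=o(c_k)$ using $\xi_k^2\log n/n\to0$. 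Under the theorem's hypotheses both arguments are valid and deliver (\ref{eq: l2 rate 1}); yours is slightly more economical in that a single bound covers both displays, while the paper's projection trick is the more robust device (it gives the $c_k$ bound independently of the growth condition on $\xi_k$ and of the Lebesgue factor, which is why the paper isolates it). Your remaining steps --- $E\|\En[p_i\epsilon_i]\|^2\leq\bar\sigma^2k/n$, the two bounds on $E\|\En[p_ir_i]\|^2$ via the envelope $\xi_k$ and via $\|r\|_{F,\infty}\leq\ell_kc_k$, and the handling of the event where $\widehat Q$ is invertible --- match the paper's computations.
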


\begin{remark}
(i) This is our first main result in this paper. The condition $\xi_k^2 \log k /n \to 0$, which we impose, weakens (hence generalizes) the conditions imposed in \cite{Newey1997} who required $k\xi_k^2/n \to 0$. For series satisfying $\xik \lesssim \sqrt{k}$, the condition $\xi_k^2 \log k /n \to 0$ amounts to
\begin{equation}\label{eq: superweak condition}
k \log k/n \to 0.
\end{equation}
This condition is the same as that imposed in \cite{Stone1994}, \cite{Huang2003}, and recently by \cite{CF2013} but the result (\ref{eq: l2 rate 1}) is obtained under the condition (\ref{eq: superweak condition}) in \cite{Stone1994} and \cite{Huang2003} only for spline series and in \cite{CF2013} only for local polynomial partition series. 
Therefore, our result improves on those in the literature by weakening the rate requirements on the growth of $k$ (with respect to $n$) and/or by allowing for a wider set of series functions.

(ii) Under the correct specification ($c_k\to 0$), the fastest $L^2$ rate of convergence is achieved by setting $k$ so that the approximation error and the sampling error are of the same order,
$$
\sqrt{k/n}  \asymp c_k.
$$
One consequence of this result is that for H\"{o}lder classes of smoothness order $s$, $\Sigma_s(\mathcal{X})$, with $c_k\lesssim k^{-s/d}$, we obtain the optimal $L^2$ rate of convergence by setting $k\asymp n^{d/(d+2s)}$, which is allowed under our conditions for all $s>0$ if $\xi_k\lesssim \sqrt{k}$ (Fourier, spline, wavelet, and local polynomial partition series). On the other hand, if $\xi_k$ is growing faster than $\sqrt{k}$, then it is not possible to achieve optimal $L^2$ rate of convergence for some $s>0$. For example, for polynomial series considered above, $\xi_k\lesssim k$, and so the condition $\xi_k^2\log k/n \to 0$ becomes $k^2\log k/n \to 0$. Hence, optimal $L^2$ rate of convergence is achieved by polynomial series only if $d/(d+2s)<1/2$ or, equivalently, $s>d/2$. Even though this condition is somewhat restrictive, it weakens the condition in \cite{Newey1997} who required $k^3/n\to 0$ for polynomial series, so that optimal $L^2$ rate in his analysis could be achieved only if $d/(d+2s)\leq 1/3$ or, equivalently, $s\geq d$. Therefore, our results allow to achieve optimal $L^2$ rate of convergence in a larger set of classes of functions for particular series.

(iii) The result (\ref{eq: l2 rate 2}) is concerned with the case when the model is misspecified ($c_k\not \to 0$). It shows that when $k/n\to 0$ and $(\ell_kc_k\sqrt{k/n})\wedge(\xi_kc_k/\sqrt{n})\to 0$, the estimator $\widehat{g}(\cdot)$ converges in $L^2$ to the surrogate function $p(\cdot)'\beta$ that provides the best linear approximation to the target function $g(\cdot)$. In this case, the estimator $\widehat{g}(\cdot)$ does not generally converge in $L^2$ to the target function $g(\cdot)$.
\qed
\end{remark}

\subsection{Pointwise Limit Theory}\label{Sec:Pointwise}

Next we focus on pointwise limit theory (some authors refer to pointwise limit theory as local asymptotics; see \cite{Huang2003b}). That is, we study asymptotic behavior of $\sqrt{n}\alpha'(\widehat{\beta}-\beta)$ and $\sqrt{n}(\widehat{g}(x)-g(x))$ for particular $\alpha\in S^{k-1}$ and $x\in\mathcal{X}$. Here $S^{k-1}$ denotes the space of vectors $\alpha$ in $\mathbb{R}^k$ with unit Euclidean norm: $\|\alpha\|=1$. Note that both $\alpha$ and $x$ implicitly depend on $n$. As we will show, pointwise results can be achieved under weak conditions similar to those we required in Theorem \ref{theorem: L2 rate}. The following lemma plays a key role in our asymptotic pointwise normality result.

\begin{lemma}[Pointwise Linearization]\label{lemma:linearization}  Assume that Conditions A.1-A.3 are satisfied. In addition, assume that $\xi_k^2\log k/n\to 0$ and $\overline{\sigma}^2\lesssim 1$. Then for any $\alpha \in S^{k-1}$,
 \begin{equation}\label{eq: lin1}
\sqrt{n} \alpha'( \widehat \beta - \beta) = \alpha' \Gn[ p_i (\epsilon_i + r_i)] + R_{1n}(\alpha),
 \end{equation}
where the term $R_{1n}(\alpha)$, summarizing the impact of unknown design, obeys
 \begin{equation}\label{eq: lin2}
R_{1n}(\alpha) \lesssim_P \sqrt{\frac{ \xik^2 \log k }{ n}}(1 + \sqrt{k} \ell_k c_k).
\end{equation}
Moreover,
 \begin{equation}\label{eq: lin3}
\sqrt{n} \alpha'( \widehat \beta - \beta) = \alpha' \Gn[ p_i \epsilon_i] + R_{1n}(\alpha) + R_{2n}(\alpha),
 \end{equation}
where the term $R_{2n}(\alpha)$, summarizing the impact of approximation error on the sampling error of the estimator, obeys
 \begin{equation}\label{eq: lin4}
R_{2n}(\alpha) \lesssim_P \ell_k c_k.
 \end{equation}
\end{lemma}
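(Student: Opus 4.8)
The plan is to start from the closed form of the estimator and peel off the nonlinear parts. Write $\widehat{Q}:=\En[p_ip_i']$ and recall the normalization $Q:=E[p_ip_i']=I$. On the event that $\widehat{Q}$ is invertible (an event of probability tending to one, see below), $\widehat\beta-\beta=\widehat{Q}^{-1}\En[p_iu_i]$ with $u_i:=\epsilon_i+r_i$; since $E[p_i\epsilon_i]=0$ by $E[\epsilon_i|x_i]=0$ and $E[p_ir_i]=0$ by the first-order conditions that define $\beta=\beta_g$, we have $\sqrt{n}\,\En[p_iu_i]=\Gn[p_iu_i]$. Therefore
\begin{equation*}
\sqrt{n}\,\alpha'(\widehat\beta-\beta)=\alpha'\Gn[p_i(\epsilon_i+r_i)]+\alpha'(\widehat{Q}^{-1}-I)\Gn[p_i(\epsilon_i+r_i)],
\end{equation*}
which identifies $R_{1n}(\alpha):=\alpha'(\widehat{Q}^{-1}-I)\Gn[p_i(\epsilon_i+r_i)]$ for \eqref{eq: lin1}; splitting the leading term as $\alpha'\Gn[p_i\epsilon_i]+\alpha'\Gn[p_ir_i]$ then identifies $R_{2n}(\alpha):=\alpha'\Gn[p_ir_i]$ for \eqref{eq: lin3}. (On the negligible event where $\widehat{Q}$ is singular, define $R_{1n}(\alpha)$ through the same identity.)

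The first input is the matrix law of large numbers coming from the non-commutative Khinchine inequality (ingredient (i)): using $\sup_x\|p(x)\|\le\xi_k$ and $Q=I$, one gets $\|\widehat{Q}-I\|\lesssim_P\sqrt{\xi_k^2\log n/n}=:\delta_n$, which is $o_P(1)$ under the maintained assumption $\xi_k^2\log n/n\to0$. Hence, with probability approaching one, $\widehat{Q}$ is invertible, $\|\widehat{Q}^{-1}\|\le2$, and $\|\widehat{Q}^{-1}-I\|\le\|\widehat{Q}^{-1}\|\,\|\widehat{Q}-I\|\lesssim_P\delta_n$.

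For the noise part of $R_{1n}$ I would condition on the design $X:=(x_1,\dots,x_n)$; the point is that bounding $\|\Gn[p_i\epsilon_i]\|$ outright would cost a spurious factor $\sqrt{k}$. With $M:=\widehat{Q}^{-1}-I$ fixed given $X$, the scalar $\alpha'M\Gn[p_i\epsilon_i]=(M'\alpha)'\Gn[p_i\epsilon_i]$ has conditional mean zero and conditional variance $(M'\alpha)'(n^{-1}\sum_i\sigma_i^2p_ip_i')(M'\alpha)\le\|M\|^2\,\overline{\sigma}^2\|\widehat{Q}\|\lesssim_P\delta_n^2$, using $\overline{\sigma}^2\lesssim1$ and $\|\widehat{Q}\|\lesssim_P1$; conditional Chebyshev, on the high-probability event that controls $\|M\|$ and $\|\widehat{Q}\|$, then yields $\alpha'M\Gn[p_i\epsilon_i]\lesssim_P\delta_n$. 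For the approximation-error part, conditioning does not help because $r_i=r(x_i)$ is a deterministic function of the design; here I use the crude bound $|\alpha'M\Gn[p_ir_i]|\le\|M\|\,\|\Gn[p_ir_i]\|$ together with $E\|\Gn[p_ir_i]\|^2=E[\|p_i\|^2r_i^2]\le\|r\|_{F,\infty}^2\,E\|p_i\|^2\le(\ell_kc_k)^2k$, so that $\|\Gn[p_ir_i]\|\lesssim_P\sqrt{k}\,\ell_kc_k$ and $\alpha'M\Gn[p_ir_i]\lesssim_P\delta_n\sqrt{k}\,\ell_kc_k$. Adding the two pieces gives \eqref{eq: lin2}. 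For \eqref{eq: lin4}, $R_{2n}(\alpha)=\alpha'\Gn[p_ir_i]=n^{-1/2}\sum_i\alpha'p_ir_i$ has mean zero and variance $E[(\alpha'p_i)^2r_i^2]\le\|r\|_{F,\infty}^2\,\alpha'Q\alpha=\|r\|_{F,\infty}^2\le(\ell_kc_k)^2$, so Chebyshev gives $R_{2n}(\alpha)\lesssim_P\ell_kc_k$; all of these bounds are uniform over $\alpha\in S^{k-1}$ since they use only $\|\alpha\|=1$.

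The main obstacle is the matrix law of large numbers $\|\widehat{Q}-I\|\lesssim_P\sqrt{\xi_k^2\log n/n}$: this is where the non-commutative Khinchine (equivalently, matrix Rosenthal/Bernstein) machinery is essential and where the improvement over the classical requirement $k\xi_k^2/n\to0$ is earned. The remainder is bookkeeping, the one real subtlety being the decision to condition on the design for the noise term — so as not to lose a factor $\sqrt{k}$ — while handling the approximation-error term unconditionally through the uniform bound $\|r\|_{F,\infty}\le\ell_kc_k$.
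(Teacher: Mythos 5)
Your proposal is correct and follows essentially the same route as the paper's proof: the same decomposition with $R_{1n}(\alpha)=\alpha'(\widehat{Q}^{-1}-I)\Gn[p_i(\epsilon_i+r_i)]$ and $R_{2n}(\alpha)=\alpha'\Gn[p_ir_i]$, the matrix LLN giving $\|\widehat{Q}-I\|\lesssim_P\sqrt{\xi_k^2\log n/n}$, a conditional-on-design variance bound plus Chebyshev for the noise part, the crude norm bound with $\|\Gn[p_ir_i]\|\lesssim_P\sqrt{k}\,\ell_kc_k$ for the approximation part, and an unconditional Chebyshev bound for $R_{2n}$. The only cosmetic difference is that you derive $\|\Gn[p_ir_i]\|\lesssim_P\sqrt{k}\,\ell_kc_k$ directly from its second moment, whereas the paper imports the same bound from the proof of its $L^2$-rate theorem.
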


\begin{remark}
(i)  In summary, the only condition that generally matters for linearization (\ref{eq: lin1})-(\ref{eq: lin2}) is that $R_{1n}(\alpha) \to 0$, which holds if $\xi_k^2\log k/n\to 0$ and $k\xi_k^2\ell_k^2c_k^2\log k/n\to 0$. In particular, linearization (\ref{eq: lin1})-(\ref{eq: lin2}) allows for misspecification ($c_k\to 0$ is not required). 
In principle, linearization (\ref{eq: lin3})-(\ref{eq: lin4}) also allows for misspecification but the bounds are only useful if the model is correctly specified, so that $\ell_k c_k \to 0$. As in the theorem on $L^2$ rate of convergence, our main condition is that $\xi_k^2\log k/n \to 0$.

(ii) We conjecture that the bound on $R_{1n}(\alpha)$ can be improved for splines to
\begin{equation}\label{eq: lin5}
R_{1n}(\alpha) \lesssim_P \sqrt{\frac{ \xik^2 \log k }{ n}}(1 + \sqrt{\log k} \cdot \ell_k c_k).
\end{equation}
since it is attained by local polynomials and splines are also similarly localized.\qed
\end{remark}

With the help of Lemma \ref{lemma:linearization}, we derive our asymptotic pointwise normality result. We will use the following additional notation:
$$\widetilde \Omega:= Q^{-1} E [ (\epsilon_i + r_i)^2 p_i p_i'  ]Q^{-1}\text{ and }\Omega_0 := Q^{-1}E [\epsilon_i^2 p_i p_i'  ]Q^{-1}.
$$
In the theorem below, we will impose the condition that $\sup_{x \in \X} E\[\epsilon_i^2 1\{ |\epsilon_i|  > M \}|x_i =x\] \to 0$ as $M\to \infty$ uniformly over $n$. This is a mild uniform integrability condition. Specifically, it holds if for some $m>2$, $\sup_{x \in \mathcal{X}} E[|\epsilon_{i}|^{m} | x_{i} = x] \lesssim 1$. In addition, we will impose the condition that $1\lesssim \underline{\sigma}^2$. This condition is used to properly normalize the estimator.


\begin{theorem}[Pointwise Normality]\label{theorem: pointwise} Assume that Conditions A.1-A.3 are satisfied. In addition, assume that (i) $\sup_{x \in \X} E\[\epsilon_i^2 1\{ |\epsilon_i|  > M \}|x_i =x\] \to 0$ as $M\to \infty$ uniformly over $n$, (ii) $1\lesssim\underline{\sigma}^2$, and (iii) $(\xi_k^2\log k/n)^{1/2}(1+k^{1/2}\ell_kc_k)\to 0$. Then for any $\alpha \in S^{k-1}$,
\begin{equation}\label{eq: pointwise normality 1}
\sqrt{n} \frac{\alpha'(\widehat \beta - \beta)}{\|\alpha' \Omega^{1/2}\|} =_d  N(0,1) + o_P(1),
\end{equation}
where we set $\Omega =\widetilde \Omega$
but if $R_{2n}(\alpha) \to_P 0$, then we can set
$\Omega = \Omega_0$.
Moreover, for any $x \in \X$ and $s(x):= \Omega^{1/2}p(x)$,
\begin{equation}\label{eq: pointwise normality 2}
\sqrt{n} \frac{p(x)'(\widehat \beta - \beta)}{\|s(x)\|} =_d N(0,1) + o_P(1),
\end{equation}
and if  the approximation error  is negligible relative to the estimation error, namely $\sqrt{n} r(x)=o(\|s(x)\|)$, then
\begin{equation}\label{eq: pointwise normality 3}
\sqrt{n} \frac{\widehat g(x) - g(x)}{\|s(x)\|} =_d N(0,1) + o_P(1).
\end{equation}
\end{theorem}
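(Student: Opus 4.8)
The plan is to reduce all three displays to a single Lindeberg--Feller central limit theorem for a triangular array of independent summands, with the passage from $\widehat\beta$ to this array handled entirely by the Pointwise Linearization Lemma \ref{lemma:linearization}. Write $v_i := \epsilon_i + r_i$. By \ref{lemma:linearization},
$$
\sqrt{n}\,\alpha'(\widehat\beta-\beta) = \alpha'\Gn[p_i v_i] + R_{1n}(\alpha), \qquad R_{1n}(\alpha) \lesssim_P \Big(\tfrac{\xi_k^2\log n}{n}\Big)^{1/2}\big(1 + k^{1/2}\ell_k c_k\big),
$$
so condition (iii) is exactly what makes $R_{1n}(\alpha) = o_P(1)$. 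Since $E[p_i\epsilon_i]=0$ (because $E[\epsilon_i\mid x_i]=0$) and $E[p_i r_i]=0$ (the first-order conditions defining $\beta_g$), the quantity $\alpha'\Gn[p_i v_i] = n^{-1/2}\sum_{i=1}^n \alpha' p_i v_i$ is already a centered sum of i.i.d.\ (within the $n$-th array) terms, whose variance, using the normalization $Q=I$ and $E[v_i^2\mid x_i] = \sigma_i^2 + r_i^2$, equals $\sigma_n^2 := \alpha' E[(\epsilon_i^2+r_i^2)p_i p_i']\alpha = \alpha'\widetilde\Omega\,\alpha = \|\alpha'\widetilde\Omega^{1/2}\|^2$. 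Because $E[v_i^2\mid x_i]\ge\underline\sigma^2$ and $\alpha'Q\alpha=1$, condition (ii) gives $\sigma_n^2 \ge \underline\sigma^2\gtrsim 1$; this lower bound is what lets the remainder survive the normalization, i.e.\ $R_{1n}(\alpha)/\sigma_n = o_P(1)$.

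It then remains to show that $\alpha'\Gn[p_i v_i]/\sigma_n$ converges in distribution to $N(0,1)$, for which the plan is to verify the Lindeberg condition
$$
L_n(\varepsilon) := \frac{1}{\sigma_n^2}\, E\!\left[(\alpha' p_i v_i)^2\, 1\{|\alpha' p_i v_i| > \varepsilon\sqrt{n}\,\sigma_n\}\right] \longrightarrow 0, \qquad \forall\,\varepsilon > 0
$$
(the $n^{-1}$ from averaging cancels the $n$ identically distributed summands). This is the technical heart of the argument and I expect it to be the main obstacle. Since $|\alpha' p_i| \le \|p_i\| \le \xi_k$, on the truncation event one has $|v_i| > M_n := \varepsilon\sqrt{n}\,\sigma_n/\xi_k$, with $M_n\to\infty$ since condition (iii) forces $\xi_k/\sqrt{n}\to 0$. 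Using $|r_i|\le \ell_k c_k$ together with $\xi_k\ell_k c_k = o(\sqrt{n})$ (also a consequence of (iii)), one gets $1\{|v_i|>M_n\}\le 1\{|\epsilon_i|>M_n/2\}$ for all large $n$, so the bound $v_i^2\le 2\epsilon_i^2+2r_i^2$ splits $L_n(\varepsilon)$ into two pieces. For the $\epsilon$-piece, conditioning on $x_i$ and using $E[(\alpha' p_i)^2]=\alpha'Q\alpha=1$ bounds it by $2\sigma_n^{-2}\sup_x E[\epsilon_i^2\,1\{|\epsilon_i|>M_n/2\}\mid x_i=x]$, which tends to zero by the uniform integrability condition (i) and $\sigma_n\gtrsim 1$. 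For the $r$-piece, Chebyshev's inequality gives $1\{|v_i|>M_n\}\le v_i^2/M_n^2$, so its contribution is at most $(\ell_k c_k)^2 E[(\alpha' p_i)^2 v_i^2]/(\sigma_n^2 M_n^2) = (\ell_k c_k)^2/M_n^2 \lesssim (\xi_k\ell_k c_k)^2/(\varepsilon^2 n)\to 0$. Hence $L_n(\varepsilon)\to 0$ and Lindeberg--Feller applies.

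Combining this CLT with $R_{1n}(\alpha)=o_P(1)$ and dividing by $\sigma_n = \|\alpha'\widetilde\Omega^{1/2}\|\gtrsim 1$ yields (\ref{eq: pointwise normality 1}) with $\Omega=\widetilde\Omega$. For the version with $\Omega=\Omega_0$, I would instead start from (\ref{eq: lin3})--(\ref{eq: lin4}), $\sqrt{n}\,\alpha'(\widehat\beta-\beta) = \alpha'\Gn[p_i\epsilon_i] + R_{1n}(\alpha) + R_{2n}(\alpha)$ with $R_{2n}(\alpha)\lesssim_P \ell_k c_k$; when $R_{2n}(\alpha)\to_P 0$, the same Lindeberg argument applied to the cleaner array $n^{-1/2}\sum\alpha' p_i\epsilon_i$ (whose variance is $\alpha'\Omega_0\alpha$, bounded between $\underline\sigma^2\gtrsim 1$ and $\overline\sigma^2\lesssim 1$, the latter bound being implied by (i)) gives the claim. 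Statement (\ref{eq: pointwise normality 2}) is then the special case $\alpha = p(x)/\|p(x)\|$ (when $p(x)\neq 0$; otherwise both sides vanish), upon noting $\|s(x)\| = \|\Omega^{1/2}p(x)\| = \|p(x)\|\cdot\|\alpha'\Omega^{1/2}\|$. Finally, for (\ref{eq: pointwise normality 3}) I would use the exact identity $\widehat g(x) - g(x) = p(x)'(\widehat\beta-\beta) - r(x)$, so that $\sqrt{n}(\widehat g(x)-g(x))/\|s(x)\|$ equals the left-hand side of (\ref{eq: pointwise normality 2}) minus $\sqrt{n}\,r(x)/\|s(x)\|$, and the second term is $o(1)$ by the hypothesis $\sqrt{n}\,r(x) = o(\|s(x)\|)$. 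The delicate point throughout is the Lindeberg verification, where the growing dimension (through $\xi_k$) and the possibly non-vanishing approximation error $r_i$ must be controlled simultaneously — which is precisely where conditions (i) and (iii) enter.
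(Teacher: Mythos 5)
Your proposal is correct and follows essentially the same route as the paper's proof: pointwise linearization via Lemma \ref{lemma:linearization} with condition (iii) killing $R_{1n}(\alpha)$ (and $R_{2n}(\alpha)$ when $\Omega=\Omega_0$), a Lindeberg--Feller verification for the array $\alpha'p_i(\epsilon_i+r_i)/\sqrt{n}$ split into an $\epsilon$-part controlled by the uniform integrability in (i) and an $r$-part controlled by a Chebyshev-type bound using $\xi_k\ell_kc_k=o(\sqrt{n})$, and then (\ref{eq: pointwise normality 2})--(\ref{eq: pointwise normality 3}) obtained by specializing $\alpha=p(x)/\|p(x)\|$ and subtracting the negligible $\sqrt{n}r(x)/\|s(x)\|$. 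The only differences are cosmetic (you keep $\sigma_n^2=\alpha'\widetilde\Omega\alpha$ explicit rather than normalizing the weights $\omega_{ni}$ to unit variance), so no changes are needed.
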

\begin{remark}
(i) This is our second main result in this paper. The result delivers pointwise convergence in distribution for any sequences $\alpha=\alpha_n$ and $x=x_n$ with $\alpha\in S^{k-1}$ and $x\in\mathcal{X}$. In fact, the proof of the theorem implies that the convergence is uniform over all sequences. Note that the normalization
factor $\|s(x)\|$ is the pointwise standard error, and it is of a typical order
$ \|s(x)\| \propto \sqrt{k} $ at most points.
In this case the condition for negligibility of approximation error $\sqrt{n} r(x)/\|s(x)\|\to 0$, which can be understood as an undersmoothing condition, can be replaced by $$\sqrt{n/k}\cdot  \ell_k c_k\to 0.$$
When $\ell_k c_k\lesssim k^{-s/d}\log k$, which is often the case if $\G$ is contained in a ball in $\Sigma_s(\mathcal{X})$ of finite radius (see our examples in the previous section), this condition substantially weakens an assumption in \cite{Newey1997} who required $\sqrt{n}k^{-s/d} \to 0$ in a similar set-up. 

(ii) When applied to splines, our result is somewhat less sharp than that of \cite{Huang2003b}. Specifically, Huang required that $\xi_k^2\log k/n \to 0$ and $(n/k)^{1/2}\cdot\ell_k c_k\to 0$ whereas we require $(k\xi_k^2\log k/n)^{1/2}\ell_k c_k\to 0$ in addition to Huang's conditions (see condition (iii) of the theorem). The difference can likely be explained by the fact that we use linearization bound (\ref{eq: lin2}) whereas for splines it is likely that (\ref{eq: lin5}) holds as well.

(iii) More generally, our asymptotic pointwise normality result, as well as other related results in this paper, applies to any problem
where the estimator of  $g(x) = p(x)'\beta + r(x)$ takes the form $p(x)'\widehat \beta$, where $\widehat \beta$ admits linearization
of the form (\ref{eq: lin1})-(\ref{eq: lin4}).  \qed
\end{remark}

\subsection{Uniform Limit Theory}\label{Sec:Uniform}
Finally, we turn to a uniform limit theory. Not surprising, stronger conditions are required for our results to hold when compared to the pointwise case. Let $m>2$. We will need the following assumption on the tails of the regression errors.

\textbf{Condition A.4} (Disturbances) \textit{Regression errors satisfy $\sup_{x \in \mathcal{X}} E[|\epsilon_{i}|^{m} | x_{i} = x] \lesssim 1$.}

It will be convenient to denote $\alpha(x):=p(x)/\|p(x)\|$ in this subsection. Moreover, denote
$$
\xi_k^L:=\sup_{x,x'\in\mathcal{X}: \, x\neq x'}\frac{\|\alpha(x)-\alpha(x')\|}{\|x - x' \|}
$$
We will also need the following assumption on the basis functions to hold with the same $m>2$ as that in Condition A.4.

\textbf{Condition A.5} (Basis) \textit{Basis functions are such that (i) $\xi_{k}^{2m/(m-2)} \log k/n \lesssim 1$, (ii) $\log \xi^L_k\lesssim \log k$, and (iii) $\log \xi_k\lesssim \log k$.}

The following lemma provides uniform linearization of the series estimator and plays a key role in our derivation of the uniform rate of convergence.
\begin{lemma}[Uniform Linearization]\label{lemma: uniform linearization}  Assume that Conditions A.1-A.5 are satisfied. Then
\begin{equation}\label{eq: lin1U}
\sqrt{n} \alpha(x)'( \widehat \beta - \beta) = \alpha(x)' \mathbb{G}_n[ p_i (\epsilon_i + r_i)] + R_{1n}(\alpha(x)),
\end{equation}
where  $R_{1n}(\alpha(x))$, summarizing the impact of unknown design, obeys
\begin{equation}\label{eq: lin2U}
R_{1n}(\alpha(x)) \lesssim_P \sqrt{\frac{ \xi_k^2 \log k }{n}} (n^{1/m} \sqrt{\log k}  + \sqrt{k} \cdot \ell_kc_{k})=:\bar{R}_{1n}
\end{equation}
uniformly over $x \in \mathcal{X}$.
Moreover,
\begin{equation}\label{eq: lin3U}
\sqrt{n} \alpha(x)'( \widehat \beta - \beta) = \alpha(x)' \mathbb{G}_n[ p_i \epsilon_i ] + R_{1n}(\alpha(x)) +R_{2n}(\alpha(x)),
\end{equation}
where  $R_{2n}(\alpha(x))$, summarizing the impact of approximation error on the sampling error of the estimator, obeys
\begin{equation}\label{eq: lin4U}
R_{2n}(\alpha(x)) \lesssim_P \sqrt{\log{k}} \cdot \ell_kc_{k}=:\bar{R}_{2n}
\end{equation}
uniformly over $x\in\mathcal{X}$.
\end{lemma}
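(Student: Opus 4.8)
The plan is to start from the explicit form of the least squares estimator and isolate the contribution of having to estimate the Gram matrix $Q$. With $\widehat{Q}:=\En[p_ip_i']$, the normal equations together with $E[p_i\epsilon_i]=0$ and $E[p_ir_i]=0$ — the latter being exactly the first-order conditions (\ref{eq: beta g def}) defining $\beta_g$ — give, on the event that $\widehat{Q}$ is invertible, the identity $\sqrt{n}\,\alpha(x)'(\widehat{\beta}-\beta)=\alpha(x)'\widehat{Q}^{-1}\mathbb{G}_n[p_i(\epsilon_i+r_i)]$, since $E[p_i(\epsilon_i+r_i)]=0$ makes $\sqrt n\En[p_i(\epsilon_i+r_i)]=\mathbb{G}_n[p_i(\epsilon_i+r_i)]$. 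Next I would invoke the matrix law of large numbers (our ingredient~1): under Condition~A.5(i), which forces $\xi_k^2\log n/n\to 0$, it yields $\|\widehat{Q}-I\|\lesssim_P\rho_n:=\sqrt{\xi_k^2\log n/n}\to 0$, hence $\widehat{Q}$ is invertible with probability approaching one, $\|\widehat{Q}^{-1}\|\lesssim_P 1$, and $\|\widehat{Q}^{-1}-I\|\le\|\widehat{Q}^{-1}\|\,\|\widehat{Q}-I\|\lesssim_P\rho_n$. Writing $\widehat{Q}^{-1}=I+(\widehat{Q}^{-1}-I)$ produces (\ref{eq: lin1U}) with $R_{1n}(\alpha(x))=\alpha(x)'(\widehat{Q}^{-1}-I)\mathbb{G}_n[p_i(\epsilon_i+r_i)]$, which I split into a noise part and an approximation part. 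The approximation part is immediate: the vector $(\widehat{Q}^{-1}-I)\mathbb{G}_n[p_ir_i]$ does not depend on $x$ and $\|\alpha(x)\|=1$, so Cauchy--Schwarz gives $\sup_x|\alpha(x)'(\widehat{Q}^{-1}-I)\mathbb{G}_n[p_ir_i]|\le\|\widehat{Q}^{-1}-I\|\,\|\mathbb{G}_n[p_ir_i]\|$, and $E\|\mathbb{G}_n[p_ir_i]\|^2=\trace E[r_i^2p_ip_i']\le\|r\|_{F,\infty}^2\trace Q\le(\ell_kc_k)^2k$ by Condition~A.3 and the normalization $Q=I$; Markov's inequality then bounds this part by $\rho_n\sqrt{k}\,\ell_kc_k$, the second summand of $\bar{R}_{1n}$.

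The genuinely delicate step is the uniform control of the noise part $\sup_{x\in\X}|\alpha(x)'(\widehat{Q}^{-1}-I)\mathbb{G}_n[p_i\epsilon_i]|$, since the crude bound $\rho_n\|\mathbb{G}_n[p_i\epsilon_i]\|\lesssim_P\rho_n\sqrt{k}$ is too lossy. Here I would condition on the design $(x_1,\dots,x_n)$: conditionally, $\widehat{Q}$ and the vectors $v(x):=(\widehat{Q}^{-1}-I)\alpha(x)$ are deterministic with $\sup_x\|v(x)\|\lesssim_P\rho_n$, and $\alpha(x)'(\widehat{Q}^{-1}-I)\mathbb{G}_n[p_i\epsilon_i]=n^{-1/2}\sum_i(v(x)'p_i)\epsilon_i$ is a sum of conditionally independent, mean-zero terms whose conditional variance is $n^{-1}\sum_i(v(x)'p_i)^2\sigma_i^2\le\overline{\sigma}^2\,v(x)'\widehat{Q}v(x)\lesssim_P\rho_n^2$, uniformly in $x$. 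Moreover $x\mapsto v(x)$ is Lipschitz with modulus $\lesssim_P\rho_n\xi_k^L$, so the intrinsic pseudo-metric satisfies $d(x,x')\lesssim_P\rho_n(\xi_k^L\|x-x'\|\wedge 1)$, and by Condition~A.5(ii) ($\log\xi_k^L\lesssim\log k\lesssim\log n$) the covering numbers of $\X$ in $d$ are polynomial with a $\log$-factor of order $\log n$, so the entropy integral is $\lesssim_P\rho_n\sqrt{\log n}$. Applying the maximal inequality for processes whose entropy integral diverges (our ingredient~3) — in which the heavy ($L^m$) tails of $\epsilon_i$ are accommodated by truncating $|\epsilon_i|$ at a level of order $n^{1/m}$, with Condition~A.4 controlling the discarded part and Condition~A.5(i) absorbing the envelope term of the truncated process — then gives $\sup_x|\alpha(x)'(\widehat{Q}^{-1}-I)\mathbb{G}_n[p_i\epsilon_i]|\lesssim_P\rho_n\, n^{1/m}\sqrt{\log n}$, the first summand of $\bar{R}_{1n}$. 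Combining the two parts yields (\ref{eq: lin2U}).

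For the second linearization (\ref{eq: lin3U})--(\ref{eq: lin4U}) I would simply write $\mathbb{G}_n[p_i(\epsilon_i+r_i)]=\mathbb{G}_n[p_i\epsilon_i]+\mathbb{G}_n[p_ir_i]$, so $R_{2n}(\alpha(x))=\alpha(x)'\mathbb{G}_n[p_ir_i]=\mathbb{G}_n[(\alpha(x)'p_i)r_i]$, an empirical process indexed by $x\in\X$. The class $\{w\mapsto(\alpha(x)'p(x_i))r(x_i):x\in\X\}$ has $L^2(P)$-fluctuation at most $\|r\|_{F,\infty}(\alpha(x)'Q\alpha(x))^{1/2}\le\ell_kc_k$, a bounded envelope $\xi_k\|r\|_{F,\infty}\le\xi_k\ell_kc_k$, and (again by the Lipschitz property of $\alpha(\cdot)$ and Condition~A.5(ii)) covering numbers with a $\log n$ entropy factor; the same maximal inequality, with Condition~A.5(i) rendering the envelope term negligible, gives $\sup_x|\mathbb{G}_n[(\alpha(x)'p_i)r_i]|\lesssim_P\ell_kc_k\sqrt{\log n}=\bar{R}_{2n}$, which is (\ref{eq: lin4U}).

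I expect the second step to be the main obstacle: it is the only place where the random factor $\widehat{Q}^{-1}-I$ multiplies an empirical process driven by only $m$-integrable disturbances, and one must simultaneously (i) condition on the design to turn $\widehat{Q}$ into a known operator of small norm while keeping the $\epsilon_i$ conditionally independent, and (ii) feed the resulting weighted sum into a maximal inequality with diverging entropy integral and only an $L^m$ envelope. Verifying that Condition~A.5(i) is precisely what absorbs the envelope-driven remainder in that inequality, so that the bound collapses to $\rho_n\,n^{1/m}\sqrt{\log n}$, is the crux of the argument; the remaining steps are routine moment computations and book-keeping.
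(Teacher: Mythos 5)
Your proposal is correct and follows essentially the same route as the paper: the identical decomposition $R_{1n}(\alpha(x))=\alpha(x)'(\widehat Q^{-1}-I)\mathbb{G}_n[p_i(\epsilon_i+r_i)]$ with the matrix LLN giving $\|\widehat Q^{-1}-I\|\lesssim_P\sqrt{\xi_k^2\log n/n}$, the Cauchy--Schwarz bound $\|\widehat Q^{-1}-I\|\,\|\mathbb{G}_n[p_ir_i]\|\lesssim_P\sqrt{\xi_k^2\log n/n}\,\sqrt{k}\,\ell_kc_k$ for the approximation part, an entropy/chaining argument driven by the Lipschitz constant $\xi_k^L$ and Condition A.4 for the noise part, and the bounded-class maximal inequality (Theorem \ref{GK06}) applied to $\{(\alpha(x)'p(\cdot))r(\cdot)\}$ for $R_{2n}$. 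The only cosmetic difference is in the noise part of $R_{1n}$: the paper conditions on the data, symmetrizes, and applies Dudley's inequality with the random scale $\max_i|\epsilon_i|$ (bounded via $E[\max_i|\epsilon_i|\mid X]\lesssim_P n^{1/m}$), whereas you condition on the design and truncate $\epsilon_i$ at $n^{1/m}$ in the spirit of Proposition \ref{prop1}; both devices use A.4 and A.5 in the same way and yield the same bound $\sqrt{\xi_k^2\log n/n}\cdot n^{1/m}\sqrt{\log n}$.
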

\begin{remark}
As in the case of pointwise linearization, our results on uniform linearization (\ref{eq: lin1U})-(\ref{eq: lin2U}) allow for misspecification ($c_k\to 0$ is not required). In principle, linearization (\ref{eq: lin3U})-(\ref{eq: lin4U}) also allows for misspecification but the bounds are most useful if the model is correctly specified so that $(\log k)^{1/2}\ell_k c_k \to 0$. We are not aware of any similar uniform linearization result in the literature. We believe that this result is useful in a variety of problems. Below we use this result to derive good uniform rate of convergence of the series estimator. Another application of this result would be in testing shape restrictions in the nonparametric model. \qed
\end{remark}
The following theorem provides uniform rate of convergence of the series estimator.

%

\begin{theorem}[Uniform Rate of Convergence]\label{theorem: uniform rate}
Assume that Conditions A.1-A.5 are satisfied. Then
\begin{equation}\label{eq: 4.13}
\sup_{x \in \mathcal{X}} |\alpha(x)' \mathbb{G}_n[ p_i \epsilon_i]| \lesssim_P \sqrt{\log k}.
\end{equation}
Moreover, for $\bar{R}_{1n}$ and $\bar{R}_{2n}$ given above we have
\begin{equation}\label{eq: uniform rate 2}
\sup_{x \in \mathcal{X}} |p(x)'( \widehat \beta - \beta)| \lesssim_P \frac{\xi_k}{\sqrt{n}}( \sqrt{\log k}  + \bar{R}_{1n} + \bar{R}_{2n})
\end{equation}
and
\begin{equation}\label{eq: uniform rate 3}
\sup_{x \in \mathcal{X}} |\widehat g(x) - g(x)| \lesssim_P \frac{\xi_k}{\sqrt{n}}( \sqrt{\log k}  + \bar{R}_{1n} + \bar{R}_{2n}) + \ell_kc_{k}.
\end{equation}
\end{theorem}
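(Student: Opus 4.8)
The plan is to isolate the one substantive step, the uniform empirical-process bound \eqref{eq: 4.13}, and then to obtain \eqref{eq: uniform rate 2} and \eqref{eq: uniform rate 3} from it essentially mechanically, using the uniform linearization of Lemma \ref{lemma: uniform linearization} and the approximation bound of Condition A.3.

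For \eqref{eq: 4.13}, I would view $x\mapsto\alpha(x)'\Gn[p_i\epsilon_i]=\Gn[f_x]$ as an empirical process indexed by $x\in\X$, where $f_x(w_i):=\alpha(x)'p_i\epsilon_i$ and $w_i:=(y_i,x_i')'$; these functions are centered because $E[\epsilon_i\mid x_i]=0$. Three facts control this process. First, an envelope: since $|\alpha(x)'p_i|\le\|p_i\|\le\xi_k$, the class has envelope $F(w_i):=\xi_k|\epsilon_i|$, and Condition A.4 gives $\|F\|_{P,m}\lesssim\xi_k$. Second, a small weak variance: using the normalization $Q=I$, $\|\alpha(x)\|=1$, and $\bar{\sigma}^2\lesssim 1$ (which follows from Condition A.4 by Jensen's inequality),
$$
\sup_{x\in\X}E\big[f_x(w_i)^2\big]\;\le\;\bar{\sigma}^2\sup_{x\in\X}\alpha(x)'Q\,\alpha(x)\;=\;\bar{\sigma}^2\;\lesssim\;1 .
$$
Third, a VC-type entropy bound: the class is Lipschitz in its index, $|f_x(w_i)-f_{x'}(w_i)|\le\|\alpha(x)-\alpha(x')\|\,\|p_i\|\,|\epsilon_i|\le\xi_k^L\,\|x-x'\|\,F(w_i)$, so, since $\diam(\X)\lesssim 1$, a standard volumetric argument gives $N(\eps\|F\|_{Q,2},\{f_x:x\in\X\},L^2(Q))\lesssim(A_n/\eps)^{d}$ for every finitely discrete $Q$ and $\eps\in(0,1]$, with $\log A_n\lesssim\log n$ by Condition A.5(ii) together with $\log k\lesssim\log n$. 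Feeding these three facts into a sharp maximal inequality for VC-type classes that separates the weak variance from the envelope, one obtains a bound of the shape ``variance term $+$ envelope term'': the variance term is $\lesssim\sqrt{\log n}$ by the display above and $\log A_n\lesssim\log n$, while the envelope term is of order $\xi_k\,n^{1/m-1/2}$ up to logarithmic factors and is therefore also $\lesssim\sqrt{\log n}$ precisely because Condition A.5(i) imposes $\xi_k^{2m/(m-2)}\log n/n\lesssim 1$. Markov's inequality then yields \eqref{eq: 4.13}.

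Given \eqref{eq: 4.13}, I would combine \eqref{eq: lin1U} and \eqref{eq: lin3U} with the uniform remainder bounds \eqref{eq: lin2U} and \eqref{eq: lin4U} of Lemma \ref{lemma: uniform linearization} via the triangle inequality to get
$$
\sup_{x\in\X}\sqrt{n}\,\big|\alpha(x)'(\widehat{\beta}-\beta)\big|\;\le\;\sup_{x\in\X}\big|\alpha(x)'\Gn[p_i\epsilon_i]\big|+\sup_{x\in\X}\big|R_{1n}(\alpha(x))\big|+\sup_{x\in\X}\big|R_{2n}(\alpha(x))\big|\;\lesssim_P\;\sqrt{\log n}+\bar{R}_{1n}+\bar{R}_{2n}.
$$
Since $|p(x)'(\widehat{\beta}-\beta)|=\|p(x)\|\,|\alpha(x)'(\widehat{\beta}-\beta)|\le\xi_k\,|\alpha(x)'(\widehat{\beta}-\beta)|$ for every $x\in\X$, dividing by $\sqrt{n}$ and multiplying by $\xi_k$ yields \eqref{eq: uniform rate 2}. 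For \eqref{eq: uniform rate 3} I would invoke the decomposition $\widehat{g}(x)-g(x)=p(x)'(\widehat{\beta}-\beta)-r(x)$ from Section \ref{Sec:Setup}, so that $\sup_{x\in\X}|\widehat{g}(x)-g(x)|\le\sup_{x\in\X}|p(x)'(\widehat{\beta}-\beta)|+\sup_{x\in\X}|r(x)|$, and bound $\sup_{x\in\X}|r(x)|=\|r\|_{F,\infty}\le\ell_kc_k$ by Condition A.3 applied to $g\in\G$; combining with \eqref{eq: uniform rate 2} gives \eqref{eq: uniform rate 3}.

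The hard part is \eqref{eq: 4.13}: getting the $\sqrt{\log n}$ rate uniformly over an $n$-dependent class whose envelope $\xi_k|\epsilon_i|$ diverges with $n$ and whose disturbances have only $m>2$ finite moments. The two features that make this possible — and that must be exploited with care — are that the weak variance of the class is $O(1)$ (so that the leading term is driven by $\sqrt{\log n}$ rather than inflated by $\xi_k$) and that Condition A.5(i) is exactly the calibration under which the envelope/tail contribution in the sharp maximal inequality remains no larger than $\sqrt{\log n}$; the residual logarithmic factors, in particular those from the covering numbers, are absorbed using $\log k\lesssim\log n$ and Condition A.5(ii). Everything after \eqref{eq: 4.13} is bookkeeping with Lemma \ref{lemma: uniform linearization} and Condition A.3.
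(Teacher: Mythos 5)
Your overall architecture matches the paper's: \eqref{eq: uniform rate 2} and \eqref{eq: uniform rate 3} are indeed pure bookkeeping from \eqref{eq: 4.13}, Lemma \ref{lemma: uniform linearization}, the identity $|p(x)'(\widehat\beta-\beta)|=\|p(x)\|\,|\alpha(x)'(\widehat\beta-\beta)|\le\xi_k|\alpha(x)'(\widehat\beta-\beta)|$, and $\sup_{x\in\X}|r(x)|\le\ell_kc_k$ from Condition A.3, and your treatment of the envelope, the unit weak variance, and the VC-type entropy of the class $\{(\epsilon,x)\mapsto\epsilon\,\alpha(v)'p(x):v\in\X\}$ is exactly what the paper does. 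The gap is in the one step you flag as hard: the quantitative claim that the envelope contribution is ``of order $\xi_k n^{1/m-1/2}$ up to logarithmic factors and is therefore $\lesssim\sqrt{\log n}$ precisely because of Condition A.5(i)'' does not hold. The term $\xi_k n^{1/m-1/2}$ arises when you control the random envelope through $E[\max_{i\le n}|\epsilon_i|]\lesssim n^{1/m}$ (as in the usual VC-type maximal inequalities with an $L^2$ or $L^m$ bound on $\max_i F(w_i)$), and to make $\xi_k n^{1/m-1/2}\log n\lesssim\sqrt{\log n}$ you need $\xi_k^2\log n/n^{1-2/m}\lesssim1$, which is strictly stronger than A.5(i): at the boundary permitted by A.5(i), $\xi_k\asymp(n/\log n)^{(m-2)/(2m)}$, one gets $\xi_k n^{1/m-1/2}\log n\asymp(\log n)^{1/2+1/m}$, so your route proves \eqref{eq: 4.13} only with the weaker rate $(\log n)^{1/2+1/m}$ (or proves the stated rate under a strengthened A.5(i)).

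The missing idea is the truncation device embodied in the paper's Proposition \ref{prop1}: split $\epsilon_i$ at the level $\tau=\xi_k^{2/(m-2)}$ (not at $n^{1/m}$), apply the Gin\'e--Koltchinskii bound (Theorem \ref{GK06}) to the truncated part, whose envelope is then the deterministic quantity $\tau\xi_k=\xi_k^{m/(m-2)}$, and control the tail part by a crude entropy-times-$L^2$ bound, which contributes only $\sqrt{n\log n}$ because $\xi_k\tau^{-(m-2)/2}=1$. After dividing by $\sqrt n$, the envelope contribution becomes $\xi_k^{m/(m-2)}\log n/\sqrt n=\sqrt{\log n}\cdot\bigl(\xi_k^{2m/(m-2)}\log n/n\bigr)^{1/2}$, and it is exactly this quantity — not $\xi_k n^{1/m-1/2}\log n$ — that Condition A.5(i) is calibrated to keep below $\sqrt{\log n}$. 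So to obtain \eqref{eq: 4.13} as stated you should replace your generic ``sharp maximal inequality'' step by an appeal to Proposition \ref{prop1} (or reproduce its truncation argument); with that substitution the rest of your proof goes through and coincides with the paper's.
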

\begin{remark}This is our third main result in this paper. Assume that $\G$ is a ball in $\Sigma_s(\mathcal{X})$ of finite radius, $\ell_k c_k\lesssim k^{-s/d}$, $\xi_k\lesssim \sqrt{k}$, and $\bar{R}_{1n}+\bar{R}_{2n}\lesssim (\log k)^{1/2}$. Then the bound in (\ref{eq: uniform rate 3}) becomes
$$
\sup_{x \in \mathcal{X}} |\widehat g(x) - g(x)| \lesssim_P \sqrt{\frac{k\log k}{n}}+k^{-s/d}.
$$
Therefore, setting $k\asymp(\log n/n)^{-d/(2s+d)}$, we obtain
$$
\sup_{x \in \mathcal{X}} |\widehat g(x) - g(x)|  \lesssim_P \left(\frac{\log n}{n}\right)^{s/(2s+d)},
$$
which is the optimal uniform rate of convergence in the function class $\Sigma_s(\mathcal{X})$; see \cite{Stone1982}. To the best of our knowledge, our paper is the first to show that the series estimator attains the optimal uniform rate of convergence under these rather general conditions; see the next comment. We also note here that it has been known for a long time that a local polynomial (kernel) estimator achieves the same optimal uniform rate of convergence; see, for example, \cite{Tsybakov2003}, and it was also shown recently by \cite{CF2013} that local polynomial partition series estimator also achieves the same rate. { Recently, in an effort to relax the independence assumption, the working paper \cite{CC2013}, which appeared in ArXiv in 2013, approximately 1 year after our paper was posted to ArXiv and submitted for publication, \footnote{Our paper was submitted for publication and to ArXiv on December 3, 2012.  Our result as stated here did not change since the original submission.} derived similar uniform rate of convergence result allowing for $\beta$-mixing conditions, see their Theorem 4.1 for specific conditions.}\qed
 \end{remark}

\begin{remark} Primitive conditions leading to inequalities $\ell_k c_k\lesssim k^{-s/d}$ and $\xi_k\lesssim \sqrt{k}$ are discussed in the previous section. Also, under the assumption that $\ell_k c_k\lesssim k^{-s/d}$, inequality $\bar{R}_{2n}\lesssim (\log k)^{1/2}$ follows automatically from the definition of $\bar{R}_{2n}$.
Thus, one of the critical conditions to attain the optimal uniform rate of convergence is that we require $\bar{R}_{1n}\lesssim (\log k)^{1/2}$. Under our other assumptions, this condition holds if $k\log k/n^{1-2/m}\lesssim 1$ and $k^{2-2s/d}/n\lesssim 1$, and so we can set $k\asymp(\log n/n)^{-d/(2s+d)}$ if $d/(2s+d)<1-2/m$ and $(2d-2s)/(2s+d)<1$ or, equivalently, $m>2+d/s$ and $s/d>1/4$.
\qed
\end{remark}

After establishing the auxiliary results on the uniform rate of convergence, we present two results on inference based on the series estimator.
The first result on inference is concerned with the strong approximation of a series process by a Gaussian process and is a (relatively) minor extension of the result obtained by  \cite{CLR2008}. The extension is undertaken to allow for a non-vanishing specification error to cover misspecified models. In particular, we make a distinction between $\widetilde \Omega= Q^{-1} E [ (\epsilon_i + r_i)^2 p_i p_i'  ]Q^{-1},$
and $\Omega_0 = Q^{-1}E [\epsilon_i^2 p_i p_i'  ]Q^{-1}$ which are potentially asymptotically different if $\bar{R}_{2n} \not\to_P 0$.
To state the result, let $a_n$ be some sequence of positive numbers satisfying $a_n \to \infty$.

\begin{theorem}[Strong Approximation by a Gaussian Process]\label{theorem: uniform normality}
Assume that Conditions A.1-A.5 are satisfied with $m\geq 3$. In addition, assume that (i) $\bar{R}_{1n} = o_P(a_n^{-1})$, (ii) $1\lesssim \underline{\sigma}^2$,
and (iii)
$a_n^6  k^4 \xi^2_k  (1 + \ell_k^3 c_k^3)^2 \log^2 n/n  \to 0.$
Then for some $\mathcal{N}_k \sim N(0, I_k)$,
\begin{equation}\label{eq: strong approximation process 1}
\sqrt{n} \frac{\alpha(x)'(\widehat \beta - \beta)}{\|\alpha(x)' \Omega^{1/2}\|} =_d  \frac{\alpha(x)' \Omega^{1/2}}{\|\alpha(x)' \Omega^{1/2}\|}  \mathcal{N}_k + o_P(a_n^{-1}) \text{ in } \ell^{\infty}(\mathcal{X}),
\end{equation}
so that for $s(x)= \Omega^{1/2}p(x)$,
\begin{equation}\label{eq: strong approximation process 2}
 \sqrt{n} \frac{p(x)'(\widehat \beta - \beta)}{\|s(x)\|} =_d \frac{s(x)'}{\|s(x)\|} \mathcal{N}_k + o_P(a_n^{-1})  \text{ in } \ell^{\infty}(\X),
\end{equation}
and if  $\sup_{x \in \X}\sqrt{n} |r(x)|/\|s(x)\| =o(a_n^{-1}) $, then
\begin{equation}\label{eq: strong approximation process 3}
\sqrt{n} \frac{\widehat g(x) - g(x)}{\|s(x)\|} =_d \frac{s(x)'}{\|s(x)\|} \mathcal{N}_k + o_P(a_n^{-1})  \text{ in } \ell^{\infty}(\X),
\end{equation}
where we set $\Omega =\widetilde \Omega$ but if $\bar{R}_{2n} = o_P(a_n^{-1})$, then we can set
$\Omega = \Omega_0$.
\end{theorem}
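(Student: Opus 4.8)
\emph{Proof proposal.} The plan is to combine the uniform linearization lemma with a Yurinskii-type coupling for the $k$-dimensional sum $\mathbb{G}_n[p_i u_i]$, $u_i := \epsilon_i + r_i$, and then push the coupling through the normalization by $\|\alpha(x)'\Omega^{1/2}\|$ and the passage to $\sup_{x\in\mathcal{X}}$. First I would reduce to a statement about the leading term: by Lemma \ref{lemma: uniform linearization}, equations (\ref{eq: lin1U})--(\ref{eq: lin2U}), and assumption (i),
\[
\sqrt{n}\,\alpha(x)'(\widehat\beta-\beta)=\alpha(x)'\mathbb{G}_n[p_i u_i]+R_{1n}(\alpha(x)),\qquad \sup_{x\in\mathcal{X}}|R_{1n}(\alpha(x))|=o_P(a_n^{-1}).
\]
Since the first-order conditions give $E[p_i r_i]=0$ and $E[p_i\epsilon_i]=0$, we have $E[p_i u_i]=0$, so $\mathbb{G}_n[p_i u_i]=n^{-1/2}\sum_{i=1}^n p_i u_i$, and under the normalization $Q=I$ its covariance matrix is exactly $\widetilde\Omega=E[u_i^2 p_ip_i']=E[(\sigma_i^2+r_i^2)p_ip_i']$. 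Thus it suffices to construct, on a suitably enlarged probability space, $\mathcal{N}_k\sim N(0,I_k)$ with $\|\mathbb{G}_n[p_i u_i]-\widetilde\Omega^{1/2}\mathcal{N}_k\|=o_P(a_n^{-1})$.

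The core step is Yurinskii's coupling applied to the independent, mean-zero vectors $\zeta_i:=p_i u_i/\sqrt{n}$, whose sum is $\mathbb{G}_n[p_i u_i]$. Its total third absolute moment satisfies
\[
\beta_3:=\sum_{i=1}^n E\|\zeta_i\|^3=n^{-1/2}E\big[\|p_i\|^3|u_i|^3\big]\leq n^{-1/2}\xi_k\,E\big[\|p_i\|^2\,E[|u_i|^3\mid x_i]\big]\lesssim n^{-1/2}\xi_k\,k\,(1+\ell_k^3c_k^3),
\]
using $\|p_i\|\leq\xi_k$, $E\|p_i\|^2=\operatorname{trace}(Q)=k$, $|r_i|\leq\ell_k c_k$, and $E[|\epsilon_i|^3\mid x_i]\lesssim 1$ from Condition A.4 with $m\geq 3$. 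Since $\widetilde\Omega\succeq\underline\sigma^2 I\succ 0$ (below), Yurinskii's coupling supplies $\mathcal{N}_k\sim N(0,I_k)$ with $\|\mathbb{G}_n[p_i u_i]-\widetilde\Omega^{1/2}\mathcal{N}_k\|\lesssim_P (k\,\beta_3\,\log n)^{1/3}\lesssim (k^2\xi_k(1+\ell_k^3c_k^3)\log n/\sqrt n)^{1/3}$, and assumption (iii) is precisely the requirement that $a_n$ times this bound tend to zero, i.e. the coupling error is $o_P(a_n^{-1})$. Because $\|\alpha(x)\|=1$ for all $x$, Cauchy--Schwarz gives $\sup_{x\in\mathcal{X}}|\alpha(x)'(\mathbb{G}_n[p_i u_i]-\widetilde\Omega^{1/2}\mathcal{N}_k)|\leq\|\mathbb{G}_n[p_i u_i]-\widetilde\Omega^{1/2}\mathcal{N}_k\|=o_P(a_n^{-1})$, so combining with the reduction above yields $\sqrt n\,\alpha(x)'(\widehat\beta-\beta)=\alpha(x)'\widetilde\Omega^{1/2}\mathcal{N}_k+o_P(a_n^{-1})$ in $\ell^\infty(\mathcal{X})$.

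To finish I would normalize and rewrite. By assumption (ii) and $r_i^2\geq 0$, $\widetilde\Omega=E[(\sigma_i^2+r_i^2)p_ip_i']\succeq\underline\sigma^2 I$, so $\|\alpha(x)'\widetilde\Omega^{1/2}\|^2=\alpha(x)'\widetilde\Omega\alpha(x)\geq\underline\sigma^2\gtrsim 1$ uniformly in $x$; dividing the last display by $\|\alpha(x)'\widetilde\Omega^{1/2}\|$ therefore preserves the $o_P(a_n^{-1})$ remainder and gives (\ref{eq: strong approximation process 1}) with $\Omega=\widetilde\Omega$. Taking $\alpha(x)=p(x)/\|p(x)\|$, multiplying numerator and denominator by $\|p(x)\|$, and using $\|p(x)'\widetilde\Omega^{1/2}\|=\|\widetilde\Omega^{1/2}p(x)\|=\|s(x)\|$ turns this into (\ref{eq: strong approximation process 2}); adding $-r(x)$ to the numerator via $\widehat g(x)-g(x)=p(x)'(\widehat\beta-\beta)-r(x)$ and using $\sup_{x}\sqrt n|r(x)|/\|s(x)\|=o(a_n^{-1})$ absorbs the approximation error and yields (\ref{eq: strong approximation process 3}). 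For the $\Omega_0$ version, when in addition $\bar{R}_{2n}=o_P(a_n^{-1})$, replace (\ref{eq: lin1U}) by (\ref{eq: lin3U})--(\ref{eq: lin4U}), so the leading term is $\alpha(x)'\mathbb{G}_n[p_i\epsilon_i]$ up to a uniform $o_P(a_n^{-1})$ remainder; then repeat the coupling with $\zeta_i=p_i\epsilon_i/\sqrt n$ (covariance $\Omega_0=E[\epsilon_i^2 p_ip_i']\succeq\underline\sigma^2 I$, and a $\beta_3$ bound that is only smaller) and argue identically.

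The main obstacle is the Yurinskii coupling step: producing the Gaussian vector so that the cube of the $\ell_2$-coupling error reproduces, up to the stated $\log$-powers, the rate in assumption (iii). This is exactly where the third-moment control of $\beta_3$ (hence the strengthening to $m\geq 3$ in Condition A.4) and careful bookkeeping of the powers of $k$ and of $\log n$ in Yurinskii's inequality are essential, including the choice of the divergent factor ($\log n$) calibrating the coupling level. Everything else---the reduction via Lemma \ref{lemma: uniform linearization}, the passage from $\ell_2$-coupling to the uniform-in-$x$ statement (immediate from $\|\alpha(x)\|=1$), the lower bound $\|\alpha(x)'\Omega^{1/2}\|\gtrsim 1$, and the renormalizations---is routine.
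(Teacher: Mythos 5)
Your proposal is correct and follows essentially the same route as the paper's proof: reduce via the uniform linearization lemma, apply Yurinskii's coupling to the i.i.d.\ vectors built from $p_i(\epsilon_i+r_i)$ with the third-moment bound $E\|p_i(\epsilon_i+r_i)\|^3\lesssim k\xi_k(1+\ell_k^3c_k^3)$, observe that condition (iii) is exactly what makes the coupling error $o_P(a_n^{-1})$, and then pass to (\ref{eq: strong approximation process 1})--(\ref{eq: strong approximation process 3}) using $\|\alpha(x)'\Omega^{1/2}\|\gtrsim 1$, with the $\Omega_0$ case handled through (\ref{eq: lin3U})--(\ref{eq: lin4U}). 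The only cosmetic difference is that you couple the unstandardized sum with $N(0,\widetilde\Omega)$ and use Cauchy--Schwarz with $\|\alpha(x)\|=1$, whereas the paper pre-multiplies by $\Omega^{-1/2}$ before coupling; the two are interchangeable.
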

\begin{remark}\label{com: on strong approximations}One might hope to have a result of the form
\begin{equation}\label{eq: ideal result 2}
\sqrt{n} \frac{\widehat g(x) - g(x)}{\|s(x)\|}\to_d G(x)  \text{ in } \ell^{\infty}(\X),
\end{equation}
where $\{G(x):x\in\mathcal{X}\}$ is some fixed zero-mean Gaussian process.    However, one can show that the process on the left-hand side of (\ref{eq: ideal result 2}) is not asymptotically equicontinuous, and so it does not have a limit distribution.
 Instead, Theorem \ref{theorem: uniform normality} provides an approximation of the series process by \textit{a sequence} of zero-mean Gaussian processes $\{G_k(x):x\in\mathcal{X}\}$
  $$
G_k(x):=   \frac{\alpha(x)' \Omega^{1/2}}{\|\alpha(x)' \Omega^{1/2}\|}  \mathcal{N}_k ,
  $$
with the stochastic error of size $o_P(a^{-1}_n)$.  Since $a_n\to\infty$, under our conditions the theorem implies that the series process is well approximated by a Gaussian process, and so the theorem can be interpreted as saying that in large samples, the distribution of the series process depends on the distribution of the data only via covariance matrix $\Omega$; hence, it allows us to perform inference based on the whole series process. Note that the conditions of the theorem are quite strong in terms of growth requirements on $k$, but the result of the theorem is also much stronger than the pointwise normality result: it asserts that the entire series process is uniformly close to a Gaussian process of the stated form.
\qed
\end{remark}


Our result on the strong approximation by a Gaussian process plays an important role in our second result on inference that is concerned with the weighted bootstrap. Consider a set of weights
$h_1,\ldots,h_n$ that are i.i.d. draws from the standard exponential
distribution and are independent of the data. For each draw of such weights, define the weighted
bootstrap draw of the least squares estimator as a solution to the least squares problem
weighted by $h_1,\ldots,h_n$, namely
\begin{equation}\label{eq: weighted bootstrap problem}
\widehat \beta^b \in \arg \min_{b \in \RR^k} \En[ h_i (y_i-p_i'b)^2].
\end{equation}
For all $x\in\mathcal{X}$, denote $\widehat{g}^b(x)=p(x)'\widehat{\beta}^b$. The following theorem establishes a new result that states that the weighted bootstrap
distribution is valid for approximating the distribution of the series process.

\begin{theorem}[Weighted Bootstrap Method]\label{Thm:MainBootstrap}
(1) Assume that Conditions A.1-A.5 are satisfied. In addition, assume that $(\xi_k(\log n)^{1/2})^{2m/(m-2)}\lesssim 1$. Then the weighted bootstrap process satisfies
$$
\sqrt{n} \alpha(x)'( \widehat \beta^b - \widehat\beta) = \alpha(x)' \Gn[ (h_i-1)p_i (\epsilon_i + r_i)] + R_{1n}^b(\alpha(x)),
$$
where $R_{1n}^b(\alpha(x))$ obeys
\begin{equation}\label{eq: weighting bootstrap approximation 1}
R_{1n}^b(\alpha(x))  \lesssim_P \sqrt{\frac{\xi_k^2\log^3 n}{n}}(n^{1/m}\sqrt{\log n} + \sqrt{k}\cdot \ell_kc_k) =:\bar{R}_{1n}^b
\end{equation}
uniformly over $x\in\mathcal{X}$.

(2) If, in addition, Conditions A.4 and A.5 are satisfied with $m\geq 3$ and (i) $\bar{R}_{1n}^b=o_P(a_n^{-1})$, (ii) $1\lesssim \underline{\sigma}^2$, and (iii) $a_n^6  k^4 \xi^2_k  (1 + \ell_k^3 c_k^3)^2 \log^2 n/n  \to 0$ hold, then for $s(x)= \Omega^{1/2}p(x)$ and some $\mathcal{N}_k\sim N(0,I_k)$,
\begin{equation}\label{eq: weighted bootstrap approximation 2}
 \sqrt{n} \frac{p(x)'(\widehat \beta^b - \widehat\beta)}{\|s(x)\|} =_d \frac{s(x)'}{\|s(x)\|} \mathcal{N}_k + o_P(a_n^{-1})  \text{ in } \ell^{\infty}(\X),
\end{equation}
and so
\begin{equation}\label{eq: weighted bootstrap approximation 3}
\sqrt{n} \frac{\widehat g^b(x) - \widehat{g}(x)}{\|s(x)\|} =_d \frac{s(x)'}{\|s(x)\|} \mathcal{N}_k + o_P(a_n^{-1})  \text{ in } \ell^{\infty}(\X).
\end{equation}
where we set $\Omega =\widetilde \Omega$, but if $\bar{R}_{2n} = o_P(a_n^{-1})$, then we can set
$\Omega = \Omega_0.$

(3) Moreover, the bounds (\ref{eq: weighting bootstrap approximation 1}), (\ref{eq: weighted bootstrap approximation 2}), and (\ref{eq: weighted bootstrap approximation 3}) continue to hold in $P$-probability if we replace the
unconditional probability $P$ by the conditional probability computed given the data, namely if we replace $P$ by $P^*(\cdot \mid D)$ where $D=\{(x_i,y_i):i=1,\dots,n\}$.

\end{theorem}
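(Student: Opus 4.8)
The plan is to mirror the structure of the proof of Theorem \ref{theorem: uniform normality} (strong approximation by a Gaussian process), since the weighted bootstrap estimator $\widehat\beta^b$ solves a weighted analogue of the least squares problem and admits an analogous linearization. First I would establish part (1). Writing the first-order conditions for (\ref{eq: weighted bootstrap problem}) and for (\ref{eq: original problem}), one obtains
\begin{equation*}
\sqrt{n}(\widehat\beta^b-\widehat\beta) = \left(\En[h_i p_i p_i']\right)^{-1}\mathbb{G}_n[(h_i-1)p_i(y_i-p_i'\widehat\beta)] = \left(\En[h_i p_i p_i']\right)^{-1}\mathbb{G}_n[(h_i-1)p_i(\epsilon_i+r_i)] + (\text{terms in }\widehat\beta-\beta).
\end{equation*}
The key matrix step is to show $\|\En[h_i p_i p_i'] - I\| \lesssim_P \sqrt{\xi_k^2\log^3 n/n}$, which follows from the matrix LLN (ingredient 1, as used to control $\|\En[p_ip_i']-I\|$ in the non-bootstrap analysis) applied conditionally on the data, together with control of the exponential weights $h_i$; the extra $\log n$ factors in (\ref{eq: weighting bootstrap approximation 1}) relative to (\ref{eq: lin2U}) come precisely from the sub-exponential tails of the $h_i$, which replace the deterministic bound on $p_ip_i'$ used in Lemma \ref{lemma: uniform linearization}. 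Then I would expand $(\En[h_ip_ip_i'])^{-1}$ around $I$, use Condition A.5 and the maximal inequality (ingredient 3) to bound $\sup_x|\alpha(x)'\mathbb{G}_n[(h_i-1)p_i(\epsilon_i+r_i)]|$, and collect all remainder terms into $R_{1n}^b(\alpha(x))$, verifying the stated bound $\bar R_{1n}^b$ uniformly in $x$. The decomposition into an $\epsilon_i$-part plus an $r_i$-part (yielding $R_{2n}^b$ with the same $\sqrt{\log n}\,\ell_k c_k$ bound as $\bar R_{2n}$) is handled exactly as in Lemma \ref{lemma: uniform linearization}.

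For part (2), given the linearization from part (1) with $\bar R_{1n}^b = o_P(a_n^{-1})$, the remaining task is to strongly approximate the leading process $\alpha(x)'\mathbb{G}_n[(h_i-1)p_i(\epsilon_i+r_i)]$ (normalized by $\|\alpha(x)'\Omega^{1/2}\|$) by a Gaussian process of the form $\frac{\alpha(x)'\Omega^{1/2}}{\|\alpha(x)'\Omega^{1/2}\|}\mathcal{N}_k$. This is where I would invoke ingredient 4 (strong approximations to series-type processes), i.e. essentially the same coupling argument behind Theorem \ref{theorem: uniform normality}, but applied to the $k$-dimensional vector $\mathbb{G}_n[(h_i-1)p_i(\epsilon_i+r_i)]$. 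The point is that, conditionally on the data, this vector is a normalized sum of independent mean-zero terms $(h_i-1)p_i(\epsilon_i+r_i)$ whose conditional covariance is $\En[(\epsilon_i+r_i)^2 p_ip_i']$, which converges to $\widetilde\Omega$ (equivalently $\Omega_0$ when $\bar R_{2n}=o_P(a_n^{-1})$); condition (iii), $a_n^6 k^4\xi_k^2(1+\ell_k^3c_k^3)^2\log^2 n/n\to 0$, is exactly the growth condition needed to make the Yurinskii-type coupling error $o_P(a_n^{-1})$, just as in Theorem \ref{theorem: uniform normality}. Replacing $\widehat\beta-\beta$ by $\widehat\beta^b-\widehat\beta$ changes nothing in the coupling geometry. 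Passing from (\ref{eq: weighted bootstrap approximation 2}) to (\ref{eq: weighted bootstrap approximation 3}) is immediate since $\widehat g^b(x)-\widehat g(x)=p(x)'(\widehat\beta^b-\widehat\beta)$ exactly, with no approximation-error term.

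For part (3), I would observe that every bound used in parts (1) and (2) is either a statement about the data alone or a statement that holds conditionally on the data after conditioning out the weights $h_i$; the matrix LLN and the maximal inequality applied to the weighted sums are naturally conditional-on-$D$ statements, and the only data-dependent quantities entering are $\En[p_ip_i']$, $\En[(\epsilon_i+r_i)^2p_ip_i']$, and $\sup_x|\alpha(x)'\mathbb{G}_n[p_i\epsilon_i]|$, all of which concentrate at the required rates with unconditional $P$-probability by the non-bootstrap results already proved. Hence replacing $P$ by $P^*(\cdot\mid D)$ on a sequence of events of $P$-probability approaching one leaves all displayed bounds intact.

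I expect the main obstacle to be the bootstrap matrix concentration $\|\En[h_ip_ip_i']-I\|\lesssim_P\sqrt{\xi_k^2\log^3 n/n}$ and, relatedly, the extra logarithmic factors in $\bar R_{1n}^b$: one must carefully combine the non-commutative Khinchine / matrix LLN bound with the sub-exponential tails of the exponential weights (via truncation at level $\lesssim\log n$ and a union bound over the $n$ weights), and then track how these extra $\log n$ factors propagate through the expansion of $(\En[h_ip_ip_i'])^{-1}$ and through the maximal inequality. The Gaussian coupling in part (2) is, by contrast, essentially a conditional restatement of the argument already developed for Theorem \ref{theorem: uniform normality}, since the conditional law of the weighted empirical process is again a normalized sum of independent terms with a well-behaved covariance.
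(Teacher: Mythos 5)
Your overall strategy matches the paper's for parts (1) and (3): the paper obtains part (1) more quickly by noting that $\widehat\beta^b$ is ordinary least squares on the rescaled data $\{(\sqrt{h_i}y_i,\sqrt{h_i}p_i)\}$, so Lemma \ref{lemma: uniform linearization} can be applied verbatim to the weighted problem with the envelope $\xi_k$ inflated to $\xi_k\log n$ (using $\max_{1\leq i\leq n}h_i\lesssim_P\log n$, $E[h_i]=1$, independence of $h_i$ from the data), and then subtracted from the linearization of the original problem; your direct expansion of the bootstrap first-order conditions with the matrix bound on $\|\En[h_ip_ip_i']-I\|$ is the same mechanism done by hand and would deliver the same $\bar R_{1n}^b$. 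For part (3) the paper's argument is just the one-line observation that $B_n\lesssim_P 1$ implies $B_n\lesssim_{P^*}1$ by Markov's inequality applied to $E[P^*(|B_n|>\ell_n)]=P(|B_n|>\ell_n)$, which is the crisp version of what you describe.

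The genuine gap is in your part (2), where you frame the Yurinskii coupling \emph{conditionally on the data}. Conditionally, the summands $(h_i-1)p_i(\epsilon_i+r_i)$ have covariance $\En[(\epsilon_i+r_i)^2p_ip_i']=\widehat\Sigma$, not $\Omega$, so the coupling you would obtain is to a $N(0,\widehat\Sigma)$ vector; to reach the stated conclusion with $\Omega^{1/2}\mathcal{N}_k$ you must additionally replace $\widehat\Sigma^{1/2}$ by $\Omega^{1/2}$ uniformly over $x$ with error $o_P(a_n^{-1})$, which requires a quantitative Gaussian comparison step of order $\|\widehat\Sigma-\Sigma\|\sqrt{k}$ (via Lemma \ref{lem: matrices} and $\|\mathcal{N}_k\|\lesssim_P\sqrt{k}$) and hence conditions like $(v_n\vee 1+\ell_kc_k)\sqrt{k\,\xi_k^2\log n/n}=o(a_n^{-1})$ that are not among the theorem's hypotheses; saying the conditional covariance "converges to $\widetilde\Omega$" does not by itself control this at the $a_n^{-1}$ scale. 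The paper avoids the issue entirely by applying Yurinskii's coupling \emph{unconditionally} to the i.i.d. vectors $\zeta_i=\Omega^{-1/2}(h_i-1)p_i(\epsilon_i+r_i)$: since $h_i$ is independent of $(y_i,x_i)$ with $\mathrm{Var}(h_i)=1$ and $E|h_i-1|^3\lesssim 1$, the unconditional covariance is exactly $\widetilde\Omega$ and the third-moment bound is the same as in Theorem \ref{theorem: uniform normality}, so condition (iii) suffices with no covariance-estimation step; the conditional statement is then recovered only at the level of part (3) via the Markov argument. Your proof becomes correct if you simply run the coupling jointly over data and weights as the paper does.
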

\begin{remark}
(i) This is our fourth main and new result in this paper. The theorem implies that the weighted bootstrap process can be approximated by a copy of the same Gaussian process as that used to approximate original series process.

(ii) We emphasize that the theorem does not require the correct specification, that is the case $c_k\not\to 0$ is allowed. 
Also, in this theorem, symbol $P$ refers to a joint probability measure with respect to the data $D=\{(x_i,y_i):i=1,\dots,n\}$ and the set of bootstrap weights $\{h_i:i=1,\dots,n\}$.\qed
\end{remark}

We close this section by establishing sufficient conditions for consistent estimation of $\Omega$. Recall that $Q=E[p_ip_i']=I$. In addition, denote $\Sigma = E[(\epsilon_i+r_i)^2p_ip_i']$, $\widehat Q = \En[p_ip_i']$, and $\widehat \Sigma = \En[\widehat \epsilon_i^2 p_ip_i']$ where $\widehat \epsilon_i = y_i-p_i'\widehat \beta$, and let $v_n =(E[\max_{1\leq i \leq n} | \epsilon_i|^2])^{1/2}$.

\begin{theorem}[Matrices Estimation]\label{thm:Matrices}
Assume that Conditions A.1-A.5 are satisfied. In addition, assume that $\bar{R}_{1n}+\bar{R}_{2n}\lesssim(\log k)^{1/2}$. Then
$$
\|\widehat Q - Q\| \lesssim_P \sqrt{\frac{\xi_k^2\log k}{n}}=o(1) \ \ \mbox{and} \ \  \|\widehat \Sigma - \Sigma\|\lesssim_P (v_n \vee 1+\ell_kc_k) \sqrt{\frac{\xi_k^2\log k}{n}}=o(1).
$$
Moreover, for $\widehat \Omega = \widehat Q^{-1}\widehat \Sigma\widehat Q^{-1}$ and $\Omega=Q^{-1}\Sigma Q^{-1}$,
$$
\|\widehat{\Omega} - \Omega\|\lesssim_P (v_n \vee 1+\ell_kc_k) \sqrt{\frac{\xi_k^2\log k}{n}}=o(1).
$$
\end{theorem}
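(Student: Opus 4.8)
The plan is to establish the three bounds in turn, using two workhorses: the matrix law of large numbers coming from the non-commutative Khinchine inequality (one of the paper's main tools) for the ``population'' fluctuation terms, and the uniform rate of Theorem \ref{theorem: uniform rate} for the terms involving $\widehat\beta-\beta$. Throughout put $\delta:=\widehat\beta-\beta$ and $u_i:=\epsilon_i+r_i$, and recall the normalization $Q=I$. Note first that the standing hypotheses have three immediate consequences: $\bar{\sigma}^2\lesssim1$ (by A.4 and Jensen), $\ell_kc_k\lesssim1$ (since $\bar{R}_{2n}=(\log n)^{1/2}\ell_kc_k\lesssim(\log n)^{1/2}$), and $\xi_k^2\log n/n=o(1)$ (since $\bar{R}_{1n}\lesssim(\log n)^{1/2}$ forces $\xi_k^2 n^{2/m}\log n/n\lesssim1$, hence $\xi_k^2\log n/n\lesssim n^{-2/m}\to0$). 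The first claim is then immediate: $\widehat Q-Q$ is an average of i.i.d.\ centered symmetric matrices $p_ip_i'-Q$ with $\|p_ip_i'-Q\|\lesssim\xi_k^2$ a.s.\ and $\|E[(p_ip_i'-Q)^2]\|\le\|E[\|p_i\|^2p_ip_i']\|\le\xi_k^2$, so the matrix LLN and Markov give $\|\widehat Q-Q\|\lesssim_P\sqrt{\xi_k^2\log k/n}+\xi_k^2\log k/n\lesssim_P\sqrt{\xi_k^2\log n/n}=o(1)$, using $\log k\lesssim\log n$.

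For the second claim, since $y_i=p_i'\beta+u_i$ we have $\widehat\epsilon_i=u_i-p_i'\delta$, and expanding $\widehat\epsilon_i^2$ and using $\Sigma=E[u_i^2p_ip_i']$ gives
\begin{equation*}
\widehat\Sigma-\Sigma=\big(\En[u_i^2p_ip_i']-E[u_i^2p_ip_i']\big)-2\,\En[u_i(p_i'\delta)p_ip_i']+\En[(p_i'\delta)^2p_ip_i']=:A-2B+C.
\end{equation*}
The term $A$ is the crux. I would write $u_i^2=\epsilon_i^2+2\epsilon_ir_i+r_i^2$. Since $|r_i|\le\ell_kc_k\lesssim1$ and (by Markov, using $v_n^2=E[\max_i\epsilon_i^2]$) $\max_{i\le n}|\epsilon_i|\lesssim_Pv_n$, the $r_i^2$- and $\epsilon_ir_i$-parts are centered averages to which the matrix LLN applies directly, contributing $\lesssim_P\ell_kc_k\sqrt{\xi_k^2\log n/n}$ up to lower-order terms. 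For the $\epsilon_i^2$-part I would truncate $\epsilon_i$ at level $Cv_n$: on the event $\{\max_i\epsilon_i^2\le C^2v_n^2\}$, which has probability $\ge1-\eps$ for $C$ large, $\En[\epsilon_i^2p_ip_i']$ coincides with the truncated average, whose summands have operator norm $\le C^2v_n^2\xi_k^2$ and second moment $\|E[\epsilon_i^4 1\{\epsilon_i^2\le C^2v_n^2\}\|p_i\|^2p_ip_i']\|\le C^2v_n^2\xi_k^2\bar{\sigma}^2\lesssim v_n^2\xi_k^2$; the matrix LLN then yields $\lesssim_P v_n\sqrt{\xi_k^2\log n/n}+v_n^2\xi_k^2\log n/n\lesssim_P v_n\sqrt{\xi_k^2\log n/n}$, where the last step uses $v_n\sqrt{\xi_k^2\log n/n}\lesssim\sqrt{\xi_k^2 n^{2/m}\log n/n}\lesssim1$, i.e.\ exactly what $\bar{R}_{1n}\lesssim(\log n)^{1/2}$ delivers together with $v_n\lesssim n^{1/m}$ (from A.4). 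Collecting the pieces, $\|A\|\lesssim_P(v_n\vee1+\ell_kc_k)\sqrt{\xi_k^2\log n/n}=o_P(1)$.

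The terms $B$ and $C$ are controlled by the uniform rate: because $\bar{R}_{1n}+\bar{R}_{2n}\lesssim(\log n)^{1/2}$, Theorem \ref{theorem: uniform rate} gives $\sup_{x\in\X}|p(x)'\delta|\lesssim_P\xi_k\sqrt{\log n/n}$, hence $\max_{i\le n}(p_i'\delta)^2\lesssim_P\xi_k^2\log n/n$. For $C$, note $\En[(p_i'\delta)^2p_ip_i']\preceq\max_i(p_i'\delta)^2\,\widehat Q$, so $\|C\|\le\max_i(p_i'\delta)^2\|\widehat Q\|\lesssim_P\xi_k^2\log n/n\le\sqrt{\xi_k^2\log n/n}$. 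For the symmetric matrix $B$, for any unit $\alpha$ Cauchy--Schwarz gives $|\alpha'B\alpha|=|\En[(u_i\alpha'p_i)(p_i'\delta\,\alpha'p_i)]|\le\sqrt{\En[u_i^2(\alpha'p_i)^2]}\cdot\sqrt{\En[(p_i'\delta)^2(\alpha'p_i)^2]}$; the first factor is $\le\sqrt{\|\En[u_i^2p_ip_i']\|}=\sqrt{\|\Sigma+A\|}\lesssim_P1$ (using $\|\Sigma\|\lesssim1$, which follows from $\bar{\sigma}^2\lesssim1$ and $\ell_kc_k\lesssim1$), and the second is $\le\sqrt{\max_i(p_i'\delta)^2\cdot\|\widehat Q\|}\lesssim_P\xi_k\sqrt{\log n/n}$; since neither bound depends on $\alpha$, $\|B\|\lesssim_P\sqrt{\xi_k^2\log n/n}$. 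Hence $\|\widehat\Sigma-\Sigma\|\le\|A\|+2\|B\|+\|C\|\lesssim_P(v_n\vee1+\ell_kc_k)\sqrt{\xi_k^2\log n/n}=o(1)$.

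Finally, $\Omega=Q^{-1}\Sigma Q^{-1}=\Sigma$, and since $\|\widehat Q-I\|=o_P(1)$, on an event of probability $\ge1-\eps$ the matrix $\widehat Q$ is invertible with $\|\widehat Q^{-1}\|\lesssim1$ and $\|\widehat Q^{-1}-I\|\lesssim\|\widehat Q-I\|\lesssim_P\sqrt{\xi_k^2\log n/n}$. Writing $\widehat\Omega-\Omega=\widehat Q^{-1}(\widehat\Sigma-\Sigma)\widehat Q^{-1}+(\widehat Q^{-1}-I)\Sigma\widehat Q^{-1}+\Sigma(\widehat Q^{-1}-I)$ and using $\|\Sigma\|\lesssim1$ with the bound just obtained for $\widehat\Sigma-\Sigma$ yields $\|\widehat\Omega-\Omega\|\lesssim_P(v_n\vee1+\ell_kc_k)\sqrt{\xi_k^2\log n/n}=o(1)$. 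The main obstacle is the term $A$: the unbounded $\epsilon_i$ must be truncated so that the Bernstein-type ``variance plus max'' estimate produces precisely the advertised $v_n$-dependence and so that the truncation remainder is absorbed, which is exactly where Condition A.4 (through $v_n\lesssim n^{1/m}$) and the hypothesis $\bar{R}_{1n}\lesssim(\log n)^{1/2}$ enter.
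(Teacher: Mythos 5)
Your overall architecture matches the paper's: the same decomposition $\widehat\Sigma-\Sigma=\big(\En[u_i^2p_ip_i']-\Sigma\big)-2\En[u_i(p_i'\delta)p_ip_i']+\En[(p_i'\delta)^2p_ip_i']$, the matrix LLN for $\widehat Q$, the uniform rate of Theorem \ref{theorem: uniform rate} together with $\max_i|\epsilon_i|\lesssim_P v_n$ and $|r_i|\le\ell_kc_k$ for the two terms involving $\delta=\widehat\beta-\beta$, and the same final assembly for $\widehat\Omega$; those parts are fine. The genuine gap is in the centerpiece term $A=\En[u_i^2p_ip_i']-\Sigma$, which is exactly where the advertised $v_n$-dependence has to be produced. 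Truncating $\epsilon_i$ at the deterministic level $Cv_n$ and applying a Bernstein/LLN bound controls only the deviation of the truncated average from the \emph{truncated} mean; the truncation bias $\|E[\epsilon_i^21\{|\epsilon_i|>Cv_n\}p_ip_i']\|$ is never bounded in your argument, and Condition A.4 alone cannot absorb it at the claimed rate. The best bound A.4 gives is $\sup_{x}E[\epsilon_i^21\{|\epsilon_i|>Cv_n\}\mid x_i=x]\lesssim (Cv_n)^{-(m-2)}$, which for fixed $C$ is a non-vanishing constant whenever $v_n\lesssim1$; and $v_n\lesssim 1$ is perfectly compatible with A.1--A.5 (e.g.\ errors that are bounded except on an $x$-set of probability of order $1/n$, on which the conditional tail saturates the $m$-th moment bound), in which case your bound does not even yield $o_P(1)$, let alone $(v_n\vee1+\ell_kc_k)\sqrt{\xi_k^2\log n/n}$. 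Letting $C=C_n\to\infty$ repairs consistency but costs an extra factor $C_n$ in the variance/max estimate plus an additive $C_n^{-(m-2)}$, and no choice of $C_n$ recovers the stated rate. A secondary instance of the same problem: the claim that the matrix LLN ``applies directly'' to the $\epsilon_ir_i$ cross part is not legitimate either, since those summands are neither non-negative nor a.s.\ bounded, which is what Lemma \ref{rudelson} requires.

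The paper avoids truncation altogether, and this is the idea your sketch is missing. It symmetrizes and applies the matrix Khinchine inequality (Lemma \ref{lem: Khinchin}) to get
\begin{equation*}
E\big[\|\En[u_i^2p_ip_i']-\Sigma\|\big]\lesssim\sqrt{\tfrac{\xi_k^2\log n}{n}}\,E\Big[\max_{1\le i\le n}|u_i|\,\big(\|\En[u_i^2p_ip_i']\|\big)^{1/2}\Big],
\end{equation*}
then uses Cauchy--Schwarz so that $v_n$ enters through $\big(E[\max_i|u_i|^2]\big)^{1/2}$ (a data-dependent envelope, so no bias term ever arises), and closes the resulting self-referential inequality with the elementary step ``$a\le R(a+b)^{1/2}$ implies $a\le R^2+R\sqrt{b}$'' together with $\|\Sigma\|\lesssim1$. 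If you want to keep your decomposition, you must replace the deterministic truncation of the $\epsilon_i^2$- and $\epsilon_ir_i$-parts by this symmetrization/Khinchine plus Cauchy--Schwarz plus self-bounding argument (or an equivalent device that lets the random maximum, not a fixed truncation level, play the role of the envelope); as written, the step you yourself identify as ``the main obstacle'' is the one that fails.
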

\begin{remark}
Theorem \ref{thm:Matrices} allows for consistent estimation of the matrix $Q$ under the mild condition $\xi_k^2\log k/n \to 0$ and for consistent estimation of the matrices $\Sigma$ and $\Omega$ under somewhat more restricted conditions. Not surprisingly, the estimation of $\Sigma$ and $\Omega$ depends on the tail behavior of the error term via the value of $v_n$.
Note that under Condition A.4, we have that $v_n \lesssim n^{1/m}$. \qed
\end{remark}

\section{Rates and Inference on Linear Functionals}\label{Sec:LinearFunctionals}

In this section, we derive rates and inference results for linear functionals $\theta(w), w\in \I$ of the conditional expectation function such as its derivative, average derivative, or conditional average derivative. To a large extent, with the exception of Theorem \ref{thm: confidence bands validity}, the results presented in this section can be considered as an extension of results presented in Section \ref{Sec:LT}, and so similar comments can be applied as those given in Section \ref{Sec:LT}. Theorem \ref{thm: confidence bands validity} deals with construction of uniform confidence bands for linear functionals under weak conditions and is a new result.

By the linearity of the series approximations, the linear functionals can be seen as linear functions of the least squares coefficients $\beta$ up to an approximation error, that is
$$
\theta(w) = \ell_\theta(w)' \beta +  r_\theta(w),  \ \   w \in \I,
$$
where $\ell_\theta(w)' \beta$ is the series approximation, with
$\ell_\theta(w)$ denoting the $k$-vector of loadings on the coefficients,
and $r_\theta(w)$ is the remainder term, which corresponds to the
approximation error.
Throughout this section, we assume that $\mathcal{I}$ is a subset of some Euclidean space $\mathbb{R}^l$ equipped with its usual norm $\|\cdot\|$. We allow $\I=\mathcal{I}_n$ to depend on $n$ but for simplicity, we assume that the diameter of $\mathcal{I}$ is bounded from above uniformly over $n$. Results allowing for the case where $\mathcal{I}$ is expanding as $n$ grows can be covered as well with slightly more technicalities.

In order to perform inference, we construct estimators of
$ \sigma_\theta^2(w) = \ell_\theta(w)'\Omega\ell_\theta(w)/n$, the
variance of the associated linear functionals, as
\begin{equation}\label{Def:hatsigma2n}
\widehat \sigma_\theta^2(w) =  \ell_\theta(w)'  \widehat \Omega \ell_\theta(w)/n.
\end{equation}

In what follows, it will be convenient to have the following result on consistency of $\widehat{\sigma}_\theta(w)$:
\begin{lemma}[Variance Estimation for Linear Functionals]\label{lem: variance consistency}
Assume that Conditions A.1-A.5 are satisfied. In addition, assume that (i) $\bar{R}_{1n}+\bar{R}_{2n}\lesssim(\log k)^{1/2}$ and (ii) $1\lesssim \underline{\sigma}^2$. Then
$$
\left|\frac{\widehat{\sigma}_\theta(w)}{\sigma_\theta(w)}-1\right|\lesssim_P \|\widehat{\Omega}-\Omega\|\lesssim_P (v_n \vee 1+\ell_kc_k) \sqrt{\frac{\xi_k^2\log k}{n}}=o(1)
$$
uniformly over $w\in\mathcal{I}$.
\end{lemma}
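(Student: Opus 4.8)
The plan is to reduce the claim to the operator-norm bound on $\widehat\Omega-\Omega$ already supplied by Theorem \ref{thm:Matrices}, plus a cheap lower bound on the eigenvalues of $\Omega$. First, for any $w\in\mathcal{I}$ with $\ell_\theta(w)\neq 0$ I would write
\begin{equation*}
\frac{\widehat\sigma_\theta^2(w)}{\sigma_\theta^2(w)}-1=\frac{\ell_\theta(w)'(\widehat\Omega-\Omega)\ell_\theta(w)}{\ell_\theta(w)'\Omega\ell_\theta(w)},
\end{equation*}
bound the numerator by $\|\widehat\Omega-\Omega\|\,\|\ell_\theta(w)\|^2$ and the denominator from below by $\lambda_{\min}(\Omega)\,\|\ell_\theta(w)\|^2$. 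The factor $\|\ell_\theta(w)\|^2$ then cancels, leaving the $w$-free (and, given the data, deterministic) bound $\|\widehat\Omega-\Omega\|/\lambda_{\min}(\Omega)$, so uniformity over $w$ is automatic; at points with $\ell_\theta(w)=0$ the statement is interpreted trivially.

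The one small input needed is $\lambda_{\min}(\Omega)\gtrsim 1$. Since $Q=E[p_ip_i']=I$ under our normalization, $\Omega$ equals either $\Omega_0=E[\epsilon_i^2p_ip_i']$ or $\widetilde\Omega=E[(\epsilon_i+r_i)^2p_ip_i']$; conditioning on $x_i$ and using $E[\epsilon_i\mid x_i]=0$ gives $E[(\epsilon_i+r_i)^2\mid x_i]=\sigma_i^2+r_i^2\ge\underline\sigma^2$ (and likewise $E[\epsilon_i^2\mid x_i]=\sigma_i^2\ge\underline\sigma^2$), so for every $v$
\begin{equation*}
v'\Omega v=E\big[(p_i'v)^2\,E[(\epsilon_i+r_i)^2\mid x_i]\big]\ge\underline\sigma^2\,E[(p_i'v)^2]=\underline\sigma^2\|v\|^2,
\end{equation*}
whence $\lambda_{\min}(\Omega)\ge\underline\sigma^2\gtrsim 1$ by condition (ii). Combining with the previous display yields $|\widehat\sigma_\theta^2(w)/\sigma_\theta^2(w)-1|\lesssim_P\|\widehat\Omega-\Omega\|$ uniformly over $w\in\mathcal{I}$.

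To pass from squares to the quantity in the statement I would set $a=\widehat\sigma_\theta(w)/\sigma_\theta(w)\ge 0$ (nonnegative because $\widehat\Omega=\widehat Q^{-1}\widehat\Sigma\widehat Q^{-1}$ is positive semidefinite) and use $|a-1|\le|a-1|(a+1)=|a^2-1|$, giving $|\widehat\sigma_\theta(w)/\sigma_\theta(w)-1|\le|\widehat\sigma_\theta^2(w)/\sigma_\theta^2(w)-1|\lesssim_P\|\widehat\Omega-\Omega\|$ uniformly over $w$. Finally, since Conditions A.1--A.5 and $\bar R_{1n}+\bar R_{2n}\lesssim(\log n)^{1/2}$ are in force, Theorem \ref{thm:Matrices} gives $\|\widehat\Omega-\Omega\|\lesssim_P(v_n\vee 1+\ell_kc_k)\sqrt{\xi_k^2\log n/n}=o(1)$, which closes the chain.

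There is no genuine obstacle here: all the substantive work — the matrix concentration inequality behind $\|\widehat\Omega-\Omega\|=o_P(1)$ — is delegated to Theorem \ref{thm:Matrices}, and what remains is bookkeeping. The only point requiring a little care is the uniform-in-$w$ lower bound $\lambda_{\min}(\Omega)\gtrsim 1$ and, correspondingly, making sure the ratio bound does not depend on $\ell_\theta(w)$; both are handled above.
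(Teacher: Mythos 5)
Your proof is correct, but it takes a mildly different route from the paper's. The paper bounds $\bigl|\widehat{\sigma}_\theta(w)/\sigma_\theta(w)-1\bigr|$ directly at the level of square roots, via the triangle inequality $\bigl|\widehat{\sigma}_\theta(w)/\sigma_\theta(w)-1\bigr|\leq \|\ell_\theta(w)'(\widehat{\Omega}^{1/2}-\Omega^{1/2})\|/\|\ell_\theta(w)'\Omega^{1/2}\|$, and then invokes a matrix square-root perturbation bound (its Lemma \ref{lem: matrices}, $\|A^{1/2}-B^{1/2}\|\leq\|A-B\|\,\|B^{-1}\|^{1/2}$, the Horn--Johnson exercise) to reduce to $\|\widehat{\Omega}-\Omega\|$. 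You instead work with the squared standard errors, where the exact identity $\widehat{\sigma}^2_\theta(w)/\sigma^2_\theta(w)-1=\ell_\theta(w)'(\widehat{\Omega}-\Omega)\ell_\theta(w)/\ell_\theta(w)'\Omega\ell_\theta(w)$ makes the reduction immediate, and then pass back via the elementary scalar inequality $|a-1|\leq|a^2-1|$ for $a\geq 0$ (justified by $\widehat{\Omega}=\widehat{Q}^{-1}\widehat{\Sigma}\widehat{Q}^{-1}\succeq 0$). Both arguments use the same two substantive inputs — the eigenvalue lower bound $\lambda_{\min}(\Omega)\geq\underline{\sigma}^2\gtrsim 1$ (your conditional-variance computation matches the paper's $\Omega\geq\Omega_0\geq\underline{\sigma}^2Q^{-1}$) and the operator-norm rate from Theorem \ref{thm:Matrices} — and both give a $w$-free bound so uniformity is automatic. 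What your version buys is economy: it dispenses with the square-root perturbation lemma entirely, at no cost in generality for this statement; what the paper's version buys is reuse, since the bound $\|\widehat{\Omega}^{1/2}-\Omega^{1/2}\|\lesssim\|\widehat{\Omega}-\Omega\|$ is needed again later (in the proof of Theorem \ref{thm: confidence bands validity}), so establishing it here serves double duty.
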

By Lemma \ref{lem: variance consistency}, under our
conditions, (\ref{Def:hatsigma2n}) is uniformly consistent for
$\sigma_\theta^2(w)$ in the sense that $\widehat
\sigma_\theta^2(w)/ \sigma_\theta^2(w) = 1 + o_P(1)$ uniformly over $w\in \I$.

\subsection{Pointwise Limit Theory for Linear Functionals}\label{Sec:LinearPointwise}



We now present a result on pointwise rate of convergence for linear functionals. The rate we derive is $\|\ell_\theta(w)\|/\sqrt{n}$. Some examples with explicit bounds on $\|\ell_\theta(w)\|$ are given below.
\begin{theorem}[Pointwise Rate of Convergence for Linear Functionals]
\label{theorem: pointwise rate}  Assume that Conditions A.1-A.3 are satisfied. In addition, assume that (i) $\sqrt{n}|r_\theta(w)|/\|\ell_\theta(w)\| \to 0$,  (ii) $\bar{\sigma}^2\lesssim 1$, (iii) $(\xi_k^2\log k/n)^{1/2}(1+k^{1/2}\ell_kc_k) \to 0$, and (iv) $\ell_kc_k\to 0$. Then
$$
| \widehat \theta(w) -  \theta(w)| \lesssim_P  \frac{\|\ell_{\theta}(w)\|}{\sqrt{n}}.
$$
\end{theorem}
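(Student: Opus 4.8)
The plan is to reduce the statement to the pointwise linearization of Lemma~\ref{lemma:linearization}. Since $\widehat\theta(w)=\ell_\theta(w)'\widehat\beta$ and $\theta(w)=\ell_\theta(w)'\beta+r_\theta(w)$, one has
$$
\widehat\theta(w)-\theta(w)=\ell_\theta(w)'(\widehat\beta-\beta)-r_\theta(w).
$$
Assuming $\ell_\theta(w)\neq 0$ (the claim being trivial otherwise), set $\alpha:=\ell_\theta(w)/\|\ell_\theta(w)\|\in S^{k-1}$, so the first term equals $\|\ell_\theta(w)\|\,\alpha'(\widehat\beta-\beta)$. Condition (i) gives $|r_\theta(w)|=o(\|\ell_\theta(w)\|/\sqrt n)$, so the $r_\theta(w)$ term is negligible at the required order, and it suffices to show $\sqrt n\,\alpha'(\widehat\beta-\beta)\lesssim_P 1$.

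Next I would invoke Lemma~\ref{lemma:linearization}, whose hypotheses hold here: Conditions A.1--A.3 are assumed, $\bar{\sigma}^2\lesssim 1$ is condition (ii), and condition (iii) forces $(\xi_k^2\log n/n)^{1/2}\to 0$, hence $\xi_k^2\log n/n\to 0$. From (\ref{eq: lin3})--(\ref{eq: lin4}),
$$
\sqrt n\,\alpha'(\widehat\beta-\beta)=\alpha'\Gn[p_i\epsilon_i]+R_{1n}(\alpha)+R_{2n}(\alpha),
$$
where, by (\ref{eq: lin2}) and condition (iii), $R_{1n}(\alpha)\lesssim_P(\xi_k^2\log n/n)^{1/2}(1+\sqrt k\,\ell_kc_k)=o_P(1)$, and by (\ref{eq: lin4}) and condition (iv), $R_{2n}(\alpha)\lesssim_P\ell_kc_k=o_P(1)$.

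It then remains to show the leading term is $O_P(1)$. Because $E[p_i\epsilon_i]=E[p_iE[\epsilon_i|x_i]]=0$, the quantity $\alpha'\Gn[p_i\epsilon_i]$ is a normalized sum of i.i.d.\ mean-zero random variables with variance
$$
E[(\alpha'p_i\epsilon_i)^2]=\alpha'E[\epsilon_i^2p_ip_i']\alpha\le\bar{\sigma}^2\,\alpha'E[p_ip_i']\alpha=\bar{\sigma}^2\lesssim 1,
$$
using the normalization $Q=E[p_ip_i']=I$, $\|\alpha\|=1$, and condition (ii); Markov's inequality then yields $\alpha'\Gn[p_i\epsilon_i]=O_P(1)$. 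Adding the three bounds gives $\sqrt n\,\alpha'(\widehat\beta-\beta)=O_P(1)$, whence $|\widehat\theta(w)-\theta(w)|\le\|\ell_\theta(w)\|\,|\alpha'(\widehat\beta-\beta)|+|r_\theta(w)|\lesssim_P\|\ell_\theta(w)\|/\sqrt n$, which is the claim.

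The argument is essentially mechanical once Lemma~\ref{lemma:linearization} is in hand, so I do not expect a genuine obstacle; the only points requiring care are checking that the lemma's hypotheses are implied by the stated conditions and bounding the variance of the leading term (precisely where the normalization $Q=I$ and condition (ii) enter). A slightly different but equally short route uses (\ref{eq: lin1})--(\ref{eq: lin2}) instead: there the leading term is $\alpha'\Gn[p_i(\epsilon_i+r_i)]$, whose variance is at most $2\bar{\sigma}^2+2\|r\|_{F,\infty}^2\le 2\bar{\sigma}^2+2(\ell_kc_k)^2\lesssim 1$ under condition (iv), which makes the role of that condition transparent.
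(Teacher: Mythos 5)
Your proposal is correct and follows essentially the same route as the paper's own proof: decompose via $\ell_\theta(w)'(\widehat\beta-\beta)$ and $r_\theta(w)$, invoke Lemma \ref{lemma:linearization} with $\alpha=\ell_\theta(w)/\|\ell_\theta(w)\|$, use conditions (iii)--(iv) to make $R_{1n}(\alpha)+R_{2n}(\alpha)=o_P(1)$, and bound the leading term $\alpha'\Gn[p_i\epsilon_i]$ by a variance computation ($\leq\bar{\sigma}^2\lesssim 1$ under the normalization $Q=I$) plus Chebyshev's inequality. The verification of the lemma's hypotheses and the treatment of $r_\theta(w)$ via condition (i) match the paper's argument as well.
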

\begin{remark}\label{Remark:LinearFunctional}
(i) This theorem shows in particular that $\widehat{\theta}(w)$ is $\sqrt{n}$-consistent whenever $\|\ell_{\theta}(w)\|\lesssim 1$. A simple example of this case is $\theta=\theta(w)=E[g(x_1)]$. In this example, $\ell=\ell(w)=E[p(x_1)]$, and so $\|\ell\|=\|E[p(x_1)]\|\lesssim 1$ where the last inequality follows from the argument used in the proof of Proposition \ref{lem: simple bound on l}. Another simple example is $\theta=\theta(w)=E[p(x_1)g(x_1)]=\beta_1$. In this example, $\ell=\ell(w)$ is a $k$-vector whose first component is 1 and all other components are 0, and so $\|\ell\|\lesssim 1$. This example trivially implies $\sqrt{n}$-consistency of the series estimator of the linear part of the partially linear model. Yet another example, which is discussed in \cite{Newey1997}, is the average partial derivative.

(ii) Condition $\sqrt{n}|r_{\theta}(w)|/\|\ell_\theta(w)\|\to 0$ imposed in this theorem can be understood as undersmoothing condition. Unfortunately, to the best of our knowledge, there is no theoretically justified practical procedure in the literature that would lead to a desired level of undersmoothing. Some ad hoc suggestions include using cross validation or ``plug-in'' method to determine the number of series terms that would minimize the asymptotic integrated mean-square error of the series estimator \citep[see][]{Hardle1990} and then blow up the estimated number of series terms by some number that grows to infinity as the sample size increases.\qed
\end{remark}

To perform pointwise inference, we consider the t-statistic:
$$
t(w) = \frac{  \widehat \theta(w) - \theta(w) }{ \widehat \sigma_\theta(w)}.
$$
We can carry out standard inference based on this statistic because of the following theorem.

\begin{theorem}[Pointwise Inference for Linear Functionals]\label{Thm:DistributionsInferentialpointwise}
Assume that the conditions of Theorem \ref{theorem: pointwise}
and Lemma \ref{lem: variance consistency} are satisfied. In addition, assume that $\sqrt{n}|r_\theta(w)|/\|\ell_{\theta}(w)\|\to 0$. Then
$$
t(w) \to_d N(0,1).
$$
\end{theorem}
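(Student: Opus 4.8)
The plan is to reduce the claim to Theorem~\ref{theorem: pointwise} (pointwise normality of $\alpha'(\widehat\beta-\beta)$) and Lemma~\ref{lem: variance consistency} (uniform consistency of $\widehat\sigma_\theta$) by an elementary decomposition of $t(w)$ followed by Slutsky's lemma; essentially no new estimates are needed.

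First I would rewrite the numerator. Using $\widehat\theta(w)=\ell_\theta(w)'\widehat\beta$ and the representation (\ref{eq: cool}), which gives $\theta(w)=\ell_\theta(w)'\beta+r_\theta(w)$, we obtain
$$
\widehat\theta(w)-\theta(w)=\ell_\theta(w)'(\widehat\beta-\beta)-r_\theta(w).
$$
Set $\alpha:=\ell_\theta(w)/\|\ell_\theta(w)\|\in S^{k-1}$ (the case $\ell_\theta(w)=0$ is trivial and is in any case excluded by the normalization implicit in the undersmoothing hypothesis). Taking $\Omega=\widetilde\Omega$ --- which, since $Q=I$, is exactly the matrix estimated consistently by $\widehat\Omega$ in Theorem~\ref{thm:Matrices} --- the definition (\ref{Def:hatsigma2n}) yields the identity
$$
\sigma_\theta(w)=\sqrt{\ell_\theta(w)'\Omega\ell_\theta(w)/n}=\frac{\|\ell_\theta(w)\|\,\|\alpha'\Omega^{1/2}\|}{\sqrt n},
$$
so that
$$
\frac{\ell_\theta(w)'(\widehat\beta-\beta)}{\sigma_\theta(w)}=\frac{\sqrt n\,\alpha'(\widehat\beta-\beta)}{\|\alpha'\Omega^{1/2}\|},\qquad
\frac{r_\theta(w)}{\sigma_\theta(w)}=\frac{\sqrt n\,r_\theta(w)}{\|\ell_\theta(w)\|\,\|\alpha'\Omega^{1/2}\|}.
$$

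Next I would verify that the variance normalizer stays bounded away from zero. Since $E[\epsilon_i\mid x_i]=0$, the cross term vanishes and $\widetilde\Omega=E[\epsilon_i^2p_ip_i']+E[r_i^2p_ip_i']\succeq E[\epsilon_i^2p_ip_i']\succeq\underline\sigma^2Q=\underline\sigma^2 I$; using $1\lesssim\underline\sigma^2$ this gives $\|\alpha'\Omega^{1/2}\|=\sqrt{\alpha'\widetilde\Omega\alpha}\gtrsim 1$ uniformly in $n$. Combining this with the undersmoothing hypothesis $\sqrt n\,|r_\theta(w)|/\|\ell_\theta(w)\|\to 0$ yields $r_\theta(w)/\sigma_\theta(w)=o(1)$, while Theorem~\ref{theorem: pointwise} (applicable because $\alpha\in S^{k-1}$ and its hypotheses are assumed) gives $\sqrt n\,\alpha'(\widehat\beta-\beta)/\|\alpha'\Omega^{1/2}\|=_d N(0,1)+o_P(1)$. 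Hence $(\widehat\theta(w)-\theta(w))/\sigma_\theta(w)=_d N(0,1)+o_P(1)$. Finally, Lemma~\ref{lem: variance consistency} gives $\widehat\sigma_\theta(w)/\sigma_\theta(w)=1+o_P(1)$, hence $\sigma_\theta(w)/\widehat\sigma_\theta(w)=1+o_P(1)$, and therefore
$$
t(w)=\frac{\sigma_\theta(w)}{\widehat\sigma_\theta(w)}\cdot\frac{\widehat\theta(w)-\theta(w)}{\sigma_\theta(w)}=(1+o_P(1))\bigl(N(0,1)+o_P(1)\bigr)\to_d N(0,1)
$$
by Slutsky's lemma.

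The argument is almost entirely bookkeeping; the only step that requires a genuine (short) computation rather than a substitution is the bound $\|\alpha'\Omega^{1/2}\|\gtrsim 1$ uniformly in $n$, which both guarantees that the undersmoothing bound is not vacuous and controls the ratio $\sigma_\theta(w)/\widehat\sigma_\theta(w)$; it follows from $1\lesssim\underline\sigma^2$, $Q=I$, and $E[\epsilon_i\mid x_i]=0$. The one bookkeeping subtlety worth stating explicitly is that the matrix $\Omega$ appearing in Theorem~\ref{theorem: pointwise}, in (\ref{Def:hatsigma2n}), and in Lemma~\ref{lem: variance consistency} must be the same object; taking $\Omega=\widetilde\Omega$ throughout makes this automatic, and if in addition $R_{2n}(\alpha)\to_P 0$ one may equally use $\Omega=\Omega_0$.
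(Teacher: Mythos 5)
Your proposal is correct and follows essentially the same route as the paper's proof: bound the eigenvalues of $\Omega$ from below via $\Omega \geq \Omega_0 \geq \underline{\sigma}^2 I$ so that the undersmoothing condition kills $r_\theta(w)/\widehat{\sigma}_\theta(w)$, invoke Theorem \ref{theorem: pointwise} with $\alpha = \ell_\theta(w)/\|\ell_\theta(w)\|$ for the main term, and conclude by Lemma \ref{lem: variance consistency} and Slutsky's lemma. The only cosmetic difference is that you normalize by $\sigma_\theta(w)$ first and rescale by $\sigma_\theta(w)/\widehat{\sigma}_\theta(w)$ at the end, whereas the paper divides by $\widehat{\sigma}_\theta(w)$ directly; the substance is identical.
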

The same comments apply here as those given in Section \ref{Sec:Pointwise} for pointwise results on estimating the function $g$ itself.

\subsection{Uniform Limit Theory for Linear Functionals}\label{Sec:LinearUniform}
In obtaining uniform rates of convergence and inference results for linear functionals, we will denote
$$
\xi_{k,\theta}:=\sup_{w\in\I}\|\ell_{\theta}(w)\|\,\,\text{ and }\,\,\xi_{k,\theta}^L:=\sup_{w,w'\in\mathcal{I}: \, w\neq w'}\frac{\|\ell_{\theta}(w)-\ell_{\theta}(w')\|}{\|w - w' \|}.
$$

The value  of $\xi_{k,\theta}$ depends on the choice of the basis for
the series estimator and on the linear functional. \cite{Newey1997}
and \cite{Chen2006} provide several examples. In the case of splines with $\mathcal{X}=[0,1]^d$, it has been  established that
$\xi_k \lesssim \sqrt{k}$ and $\sup_{x\in\mathcal{X}}\|\partial_j^mp(x)\| \lesssim k^{1/2 + m}$; see, for example, \cite{Newey1997}. With this basis we have for
\begin{itemize}
\item[1.]  the function $g$ itself: $\theta(x) = g(x)$, $\ell_{\theta}(x)=p(x)$, and $\xi_{k,\theta}\lesssim \sqrt{k}$;

\item[2.]   the derivatives:  $\theta(x) = \partial_j g(x)$, $\ell_\theta(x)= \partial_j p(x)$, $\xi_{k,\theta} \lesssim k^{3/2}$;
\item[3.]  the average derivatives: $\theta = \int \partial_j g(x) d\mu(x)$, $\ell_\theta = \int \partial_j p(x)  d\mu(x)$, and $\xi_{k,\theta} \lesssim 1$,
\end{itemize}
where in the last example it is assumed that ${\rm supp}(\mu) \subset {\rm int}\mathcal{X}$, $x_1$ is continuously distributed with the density bounded below from zero on ${\rm supp}(\mu)$, and  $x \mapsto \partial_l \mu(x)$ is continuous
on ${\rm supp}(\mu)$ with  $|\partial_l \mu(x)| \lesssim 1$ uniformly in $x \in {\rm supp}(\mu)$ for all $l=1,\dots,k$.


We will impose the following regularity condition on the loadings on the coefficients $\ell_\theta(w)$:

\textbf{Condition A.6} (Loadings) \textit{Loadings on the coefficients satisfy (i) $\sup_{w\in\I}1/\|\ell_{\theta}(w)\|\lesssim 1$ and (ii) $\log \xi^L_{k,\theta}\lesssim \log k$.}

The first part of this condition implies that the linear functional is normalized appropriately. The second part is a very mild restriction on the rate of the growth of the Lipschitz coefficient of the map $w \mapsto \theta(w)$.

Under Conditions A.1-A.6, results presented in Lemma \ref{lemma: uniform linearization} on uniform linearization can be extended to cover general linear functionals considered here:
\begin{lemma}[Uniform Linearization for Linear Functionals]\label{lem: uniform linearization functionals}
Assume that Conditions A.1-A.6 are satisfied. 
Then for $\alpha_{\theta}(w)=\ell_{\theta}(w)/\|\ell_{\theta}(w)\|$,
$$
\sqrt{n}\alpha_{\theta}(w)'(\widehat{\beta}-\beta)=\alpha_{\theta}(w)'\mathbb{G}_n[p_i(\epsilon_i+r_i)]+R_{1n}(\alpha_{\theta}(w)),
$$
where $R_{1n}(\alpha_{\theta}(w))$, summarizing the impact of unknown design, obeys
$$
R_{1n}(\alpha(w)) \lesssim_P \sqrt{\frac{\xi_k^2 \log k }{n}} (n^{1/m} \sqrt{\log k}  + \sqrt{k} \cdot \ell_kc_{k})=\bar{R}_{1n}
$$
uniformly over $w \in \mathcal{I}$.
Moreover,
$$
\sqrt{n} \alpha_{\theta}(w)'( \widehat \beta - \beta) = \alpha_{\theta}(w)' \mathbb{G}_n[ p_i \epsilon_i ] + R_{1n}(\alpha_{\theta}(w)) +R_{2n}(\alpha_{\theta}(w)),
$$
where  $R_{2n}(\alpha_{\theta}(w))$, summarizing the impact of approximation error on the sampling error of the estimator, obeys
$$
R_{2n}(\alpha_{\theta}(w)) \lesssim_P \sqrt{\log{k}} \cdot \ell_kc_{k}=\bar{R}_{2n}
$$
uniformly over $w\in\mathcal{I}$.
\end{lemma}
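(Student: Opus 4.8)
The plan is to follow, almost verbatim, the proof of Lemma \ref{lemma: uniform linearization}, with the index set $\mathcal{X}$ replaced by $\mathcal{I}$ and the normalized regressor vector $\alpha(x)=p(x)/\|p(x)\|$ replaced by the normalized loading $\alpha_\theta(w)=\ell_\theta(w)/\|\ell_\theta(w)\|$. I would start from the exact linearization $\sqrt{n}(\widehat\beta-\beta)=\widehat Q^{-1}\mathbb{G}_n[p_i(\epsilon_i+r_i)]$, which holds because $Q=I$ and the first-order conditions give $E[p_ir_i]=E[p_i\epsilon_i]=0$. Taking the inner product with $\alpha_\theta(w)$ and writing $\widehat Q^{-1}=I+(\widehat Q^{-1}-I)$ identifies the first remainder as $R_{1n}(\alpha_\theta(w))=\alpha_\theta(w)'(\widehat Q^{-1}-I)\mathbb{G}_n[p_i(\epsilon_i+r_i)]$; splitting $\mathbb{G}_n[p_i(\epsilon_i+r_i)]=\mathbb{G}_n[p_i\epsilon_i]+\mathbb{G}_n[p_ir_i]$ then identifies $R_{2n}(\alpha_\theta(w))=\alpha_\theta(w)'\mathbb{G}_n[p_ir_i]$.

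The crux is to verify that every property of $\alpha(x)$ used in the proof of Lemma \ref{lemma: uniform linearization} transfers to $\alpha_\theta(w)$. First, $\|\alpha_\theta(w)\|=1$, so $E[(\alpha_\theta(w)'p_i)^2]=\alpha_\theta(w)'Q\alpha_\theta(w)=1$ and $|\alpha_\theta(w)'p_i|\le\|p_i\|\le\xi_k$, exactly as for $\alpha(x)$; in particular $\xi_{k,\theta}$ will not enter the remainder bounds, only $\xi_k$ will, which is why the conclusion carries the same $\bar{R}_{1n}$, $\bar{R}_{2n}$. Second, $w\mapsto\alpha_\theta(w)$ is Lipschitz on $\mathcal{I}$: since $\|\alpha_\theta(w)-\alpha_\theta(w')\|\le 2\|\ell_\theta(w)-\ell_\theta(w')\|/\min\{\|\ell_\theta(w)\|,\|\ell_\theta(w')\|\}$ and $\inf_w\|\ell_\theta(w)\|\gtrsim 1$ by Condition A.6(i), its Lipschitz constant is $\lesssim\xi_{k,\theta}^L$, and $\log\xi_{k,\theta}^L\lesssim\log k$ by Condition A.6(ii)---the exact analogue of $\log\xi_k^L\lesssim\log k$ in Condition A.5(ii). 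Third, $\mathcal{I}\subset\mathbb{R}^l$ has diameter bounded uniformly in $n$, so its covering numbers obey the same bounds, up to constants depending on the fixed $l$, as those of $\mathcal{X}$. Consequently the matrix law of large numbers $\|\widehat Q-I\|\lesssim_P(\xi_k^2\log n/n)^{1/2}$ from Theorem \ref{thm:Matrices} (and hence $\|\widehat Q^{-1}-I\|$ of the same order), and the maximal inequalities for empirical processes with diverging entropy integrals, apply verbatim with $w\in\mathcal{I}$ in place of $x\in\mathcal{X}$.

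With these in hand the estimates reproduce those of Lemma \ref{lemma: uniform linearization}. The part of $R_{1n}$ coming from $\mathbb{G}_n[p_ir_i]$ is bounded crudely and uniformly in $w$ by $\|\widehat Q^{-1}-I\|\,\|\mathbb{G}_n[p_ir_i]\|$, using $E\|\mathbb{G}_n[p_ir_i]\|^2\le\|r\|_{F,\infty}^2E\|p_i\|^2\le k\ell_k^2c_k^2$, which yields $(\xi_k^2\log n/n)^{1/2}\sqrt{k}\,\ell_k c_k$; the part coming from $\mathbb{G}_n[p_i\epsilon_i]$ is controlled exactly as in the proof of Lemma \ref{lemma: uniform linearization}---via the matrix LLN, the bound on the conditional (given the design) variance of $\alpha_\theta(w)'(\widehat Q-I)\mathbb{G}_n[p_i\epsilon_i]$, a truncation of $\epsilon_i$ at level $\asymp n^{1/m}$ licensed by Condition A.4, and a maximal inequality over $w\in\mathcal{I}$---and contributes $(\xi_k^2\log n/n)^{1/2}n^{1/m}\sqrt{\log n}$; adding these gives $\sup_{w\in\mathcal{I}}|R_{1n}(\alpha_\theta(w))|\lesssim_P\bar{R}_{1n}$. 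Finally $\sup_{w\in\mathcal{I}}|R_{2n}(\alpha_\theta(w))|=\sup_{w\in\mathcal{I}}|\mathbb{G}_n[\alpha_\theta(w)'p_ir_i]|\lesssim_P\sqrt{\log n}\,\ell_k c_k=\bar{R}_{2n}$ by the maximal inequality, using $|\alpha_\theta(w)'p_ir_i|\le\xi_k\ell_k c_k$ and $E[(\alpha_\theta(w)'p_ir_i)^2]\le\ell_k^2c_k^2$.

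The only genuinely new step is the regularity check for $\alpha_\theta(\cdot)$ in the second paragraph, in particular converting the Lipschitz bound on $\ell_\theta(\cdot)$ into one on the normalized $\alpha_\theta(\cdot)$ via Condition A.6; the rest is bookkeeping, and the conclusion inherits the same $\bar{R}_{1n}$, $\bar{R}_{2n}$. I expect the main obstacle, inherited unchanged from Lemma \ref{lemma: uniform linearization}, to be the truncation-plus-maximal-inequality control of the $\mathbb{G}_n[p_i\epsilon_i]$ contribution to $R_{1n}$, since that is the only place where unbounded errors and uniformity over the index interact; but Conditions A.4 and A.5, assumed here, are precisely what powers that step, so no new difficulty should arise.
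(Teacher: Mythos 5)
Your proposal is correct and takes essentially the same route as the paper: the paper's entire proof consists of exactly your second-paragraph step---showing $\|\alpha_\theta(w_1)-\alpha_\theta(w_2)\|\leq 2\|\ell_\theta(w_1)-\ell_\theta(w_2)\|/\|\ell_\theta(w_1)\|\lesssim \xi_{k,\theta}^L\|w_1-w_2\|$ via Condition A.6---and then invoking the arguments of Lemma \ref{lemma: uniform linearization} verbatim with $\alpha_\theta(w)$, $w\in\mathcal{I}$, in place of $\alpha(x)$, $x\in\mathcal{X}$. Your additional bookkeeping (unit norm of $\alpha_\theta(w)$, envelope $\xi_k$, variance bounds, and the $\sqrt{k}\,\ell_k c_k$ and $\sqrt{\log n}\,\ell_k c_k$ estimates) matches what that transfer requires.
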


From Lemma \ref{lem: uniform linearization functionals}, we can derive the following theorem on uniform rate of convergence for linear functionals.
\begin{theorem}[Uniform Rate of Convergence for Linear Functionals]\label{theorem: Linearuniform rate}
Assume that Conditions A.1-A.6 are satisfied. Then
\begin{equation}\label{eq: uniform linearization functional 1}
\sup_{w\in\mathcal{I}}\left|\alpha_{\theta}(w)'\mathbb{G}_n[p_i\epsilon_i]\right|\lesssim_P \sqrt{\log k}.
\end{equation}
If, in addition, we assume that (i) $\bar{R}_{1n}+\bar{R}_{2n}\lesssim (\log k)^{1/2}$ and (ii) $\sup_{w\in\I}|r_{\theta}(w)|/\|\ell_{\theta}(w)\| = o((\log k/n)^{1/2})$, then
\begin{equation}\label{eq: uniform linearization functional 2}
\sup_{w\in \I} | \widehat \theta(w) -  \theta(w)| \lesssim_P  \sqrt{\frac{\xi_{k,\theta}^2\log k}{\n}}.
\end{equation}
\end{theorem}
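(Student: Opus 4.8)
The plan is to follow the blueprint of the proof of Theorem~\ref{theorem: uniform rate}, with the index set $\mathcal{X}$ and the normalized regressors $\alpha(x)=p(x)/\|p(x)\|$ replaced by the index set $\mathcal{I}$ and the normalized loadings $\alpha_\theta(w)=\ell_\theta(w)/\|\ell_\theta(w)\|$, and then to pass from a uniform bound on $\alpha_\theta(w)'(\widehat\beta-\beta)$ to a uniform bound on $\widehat\theta(w)-\theta(w)$ via the decomposition (\ref{eq: cool}), $\theta(w)=\ell_\theta(w)'\beta+r_\theta(w)$.

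First I would prove (\ref{eq: uniform linearization functional 1}). The empirical process $w\mapsto\alpha_\theta(w)'\mathbb{G}_n[p_i\epsilon_i]$ is indexed by the class of functions $\{\alpha_\theta(w)'p_i\epsilon_i:w\in\mathcal{I}\}$, whose envelope is bounded by $\xi_k|\epsilon_i|$ because $|\alpha_\theta(w)'p_i|\le\|p_i\|\le\xi_k$. Condition A.6(ii), $\log\xi_{k,\theta}^L\lesssim\log k\lesssim\log n$, together with the uniform bound on the diameter of $\mathcal{I}$, implies that $\{\alpha_\theta(w):w\in\mathcal{I}\}$ has covering numbers growing only polynomially in $n$, so the entropy integral of this class diverges only at a logarithmic rate --- exactly the setting of the maximal inequality used to prove (\ref{eq: 4.13}). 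I would invoke that inequality, using Condition A.4 to control the $m$-th moments of $\epsilon_i$ and Condition A.5(i) to guarantee $\xi_k^{2m/(m-2)}\log n/n\lesssim1$, obtaining $\sup_{w\in\mathcal{I}}|\alpha_\theta(w)'\mathbb{G}_n[p_i\epsilon_i]|\lesssim_P\sqrt{\log n}$. This step is essentially a verbatim transcription of the corresponding step in Theorem~\ref{theorem: uniform rate}, now with $\alpha_\theta$ and $\xi_{k,\theta}^L$ playing the roles of $\alpha$ and $\xi_k^L$.

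Next, assuming (i), I would combine the previous bound with the uniform linearization of Lemma~\ref{lem: uniform linearization functionals}. Since
$$
\sqrt{n}\,\alpha_\theta(w)'(\widehat\beta-\beta)=\alpha_\theta(w)'\mathbb{G}_n[p_i\epsilon_i]+R_{1n}(\alpha_\theta(w))+R_{2n}(\alpha_\theta(w))
$$
with $|R_{1n}(\alpha_\theta(w))|\lesssim_P\bar{R}_{1n}$ and $|R_{2n}(\alpha_\theta(w))|\lesssim_P\bar{R}_{2n}$ uniformly over $w\in\mathcal{I}$, assumption (i), namely $\bar{R}_{1n}+\bar{R}_{2n}\lesssim(\log n)^{1/2}$, gives $\sqrt{n}\,\sup_{w\in\mathcal{I}}|\alpha_\theta(w)'(\widehat\beta-\beta)|\lesssim_P\sqrt{\log n}$. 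Multiplying through by $\|\ell_\theta(w)\|\le\xi_{k,\theta}$ yields $\sup_{w\in\mathcal{I}}|\ell_\theta(w)'(\widehat\beta-\beta)|\lesssim_P\xi_{k,\theta}\sqrt{\log n/n}$. Finally, writing $\widehat\theta(w)-\theta(w)=\ell_\theta(w)'(\widehat\beta-\beta)-r_\theta(w)$ and bounding $\sup_{w\in\mathcal{I}}|r_\theta(w)|\le\xi_{k,\theta}\,\sup_{w\in\mathcal{I}}\bigl(|r_\theta(w)|/\|\ell_\theta(w)\|\bigr)=o\bigl(\xi_{k,\theta}\sqrt{\log n/n}\bigr)$ using assumption (ii), I would conclude (\ref{eq: uniform linearization functional 2}).

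The part that carries the real analytic content --- the diverging-entropy maximal inequality behind (\ref{eq: 4.13}) and the uniform linearization of Lemma~\ref{lem: uniform linearization functionals} --- is already established, so I do not expect a serious obstacle. The one point requiring genuine care is the verification in the first step that the function class indexed by $\mathcal{I}$ satisfies exactly the same envelope and entropy bounds that made the argument for (\ref{eq: 4.13}) work; this is where Condition A.6(ii) enters in place of Condition A.5(ii), and it is the only place where the structure of the linear functional $\theta$, rather than $g$ itself, plays a role.
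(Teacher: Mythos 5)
Your proposal is correct and follows essentially the same route as the paper: equation (\ref{eq: uniform linearization functional 1}) is obtained by repeating the maximal-inequality argument behind (\ref{eq: 4.13}) with $\alpha_\theta(w)$ in place of $\alpha(x)$ (using the Lipschitz bound $\|\alpha_\theta(w_1)-\alpha_\theta(w_2)\|\lesssim \xi_{k,\theta}^L\|w_1-w_2\|$ established via Condition A.6 in the proof of Lemma \ref{lem: uniform linearization functionals}), and (\ref{eq: uniform linearization functional 2}) follows from the same triangle-inequality decomposition, Lemma \ref{lem: uniform linearization functionals} with condition (i), and the bound $\sup_{w\in\I}|r_\theta(w)|\leq \xi_{k,\theta}\sup_{w\in\I}|r_\theta(w)|/\|\ell_\theta(w)\|$ with condition (ii). No gaps.
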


Theorem \ref{theorem: Linearuniform rate} establishes uniform rates that are up to $\sqrt{\log k}$ factor agree with the pointwise rates. The requirement (ii) on the approximation error can be seen as an undersmoothing condition as discussed in Comment \ref{Remark:LinearFunctional}.

Next, we consider the problem of uniform inference for linear functionals based on the series estimator.
We base our inference on the t-statistic process:
\begin{equation}\label{eq: t process definition}
\left \{t(w) = \frac{  \widehat \theta(w) - \theta(w) }{ \widehat \sigma_\theta(w)},  \ \  w \in \I \right\}.
\end{equation}

We present two results for inference on linear functionals. The first result is an extension of Theorem \ref{theorem: uniform normality} on strong approximations to cover the case of linear functionals. As we discussed in Comment \ref{com: on strong approximations}, in order to perform uniform in $w\in\mathcal{I}$ inference on $\theta(w)$, we would like to approximate the distribution of the \textit{whole} process (\ref{eq: t process definition}). However, one can show that this process typically does not have a limit distribution in $\ell^\infty(\mathcal{I})$. Yet, we can construct a Gaussian process that would be close to the process (\ref{eq: t process definition}) for all $w\in\I$ simultaneously with a high probability. Specifically, we will approximate the $t$-statistic process by the following Gaussian coupling:
\begin{equation}
\begin{array}{llll}
\\
& \displaystyle  \Big \{ t_n^*(w) = \frac{  \ell(w)' \Omega^{1/2} \mathcal{N}_k/\sqrt{n} }{  \sigma_\theta(w)},   &  w \in \I \Big \} \\
\end{array}
\end{equation}
where $\mathcal{N}_k$ denotes a vector of $k$ i.i.d. $N(0,1)$ random variables.

\begin{theorem}[Strong Approximation by a Gaussian Process for Linear Functionals]\label{Thm:DistributionsInferential}
Assume that the conditions of Theorem \ref{theorem: uniform normality} and Condition A.6 are satisfied. In addition, assume that (i) $\bar{R}_{2n}\lesssim (\log k)^{1/2}$ and (ii) $\sup_{w\in\I}\sqrt{n}|r_{\theta}(w)|/\|\ell_{\theta}(w)\|=o(a_n^{-1})$. Then
$$
t(w) =_d t^*(w) + o_P(a_n^{-1}) \text{ in } \ell^{\infty}(\I).
$$
\end{theorem}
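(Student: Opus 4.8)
The plan is to reduce the $t$-statistic process to the linearized series process already controlled by Theorem~\ref{theorem: uniform normality}, stripping off the approximation-error term via assumption~(iii) and the estimated-variance term via Lemma~\ref{lem: variance consistency}, and then to re-use the very Gaussian coupling constructed in the proof of Theorem~\ref{theorem: uniform normality}. First I would record the algebraic identity obtained by writing $\ell_\theta(w)=\|\ell_\theta(w)\|\,\alpha_\theta(w)$ with $\alpha_\theta(w)=\ell_\theta(w)/\|\ell_\theta(w)\|$ and substituting $\widehat\theta(w)=\ell_\theta(w)'\widehat\beta$, $\theta(w)=\ell_\theta(w)'\beta+r_\theta(w)$, and $\sigma_\theta^2(w)=\ell_\theta(w)'\Omega\ell_\theta(w)/n=\|\ell_\theta(w)\|^2\|\alpha_\theta(w)'\Omega^{1/2}\|^2/n$:
$$
t(w)=\frac{\sigma_\theta(w)}{\widehat\sigma_\theta(w)}\left(\frac{\sqrt n\,\alpha_\theta(w)'(\widehat\beta-\beta)}{\|\alpha_\theta(w)'\Omega^{1/2}\|}-\frac{\sqrt n\,r_\theta(w)}{\|\ell_\theta(w)\|\,\|\alpha_\theta(w)'\Omega^{1/2}\|}\right),\qquad t^*(w)=\frac{\alpha_\theta(w)'\Omega^{1/2}\mathcal N_k}{\|\alpha_\theta(w)'\Omega^{1/2}\|}.
$$
Since the eigenvalues of $\Omega$ are bounded away from zero (using $1\lesssim\underline\sigma^2$, inherited from Theorem~\ref{theorem: uniform normality}), $\|\alpha_\theta(w)'\Omega^{1/2}\|\gtrsim1$ uniformly in $w$, so by assumption~(iii) the second term in the parentheses is $o_P(a_n^{-1})$ uniformly over $w\in\I$. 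It then remains to (a) show $\sqrt n\,\alpha_\theta(w)'(\widehat\beta-\beta)/\|\alpha_\theta(w)'\Omega^{1/2}\|=_d t^*(w)+o_P(a_n^{-1})$ in $\ell^\infty(\I)$ with the same $\mathcal N_k$, and (b) control $\sigma_\theta(w)/\widehat\sigma_\theta(w)-1$ uniformly in $w$.

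For~(a) I would apply Lemma~\ref{lem: uniform linearization functionals}, which gives $\sqrt n\,\alpha_\theta(w)'(\widehat\beta-\beta)=\alpha_\theta(w)'\Gn[p_i(\epsilon_i+r_i)]+R_{1n}(\alpha_\theta(w))$ with $\sup_{w\in\I}|R_{1n}(\alpha_\theta(w))|\lesssim_P\bar R_{1n}$, and $\bar R_{1n}=o_P(a_n^{-1})$ is condition~(i) of Theorem~\ref{theorem: uniform normality}. The proof of Theorem~\ref{theorem: uniform normality} produces $\mathcal N_k\sim N(0,I_k)$ together with the $\Bbb{R}^k$-level coupling $\|\Gn[p_i(\epsilon_i+r_i)]-\Omega^{1/2}\mathcal N_k\|=o_P(a_n^{-1})$ (this is where condition~(iii) of Theorem~\ref{theorem: uniform normality}, the growth restriction on $k$, enters, and where $\Omega=\widetilde\Omega$ may be replaced by $\Omega_0$ once $\bar R_{2n}=o_P(a_n^{-1})$). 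Because $\|\alpha_\theta(w)\|=1$, contracting this $\Bbb{R}^k$ bound with $\alpha_\theta(w)$ and taking $\sup_{w\in\I}$ --- exactly as was done over $x\in\X$ in the proof of Theorem~\ref{theorem: uniform normality} --- gives $\sup_{w\in\I}|\alpha_\theta(w)'\Gn[p_i(\epsilon_i+r_i)]-\alpha_\theta(w)'\Omega^{1/2}\mathcal N_k|=o_P(a_n^{-1})$; dividing by $\|\alpha_\theta(w)'\Omega^{1/2}\|\gtrsim1$ yields~(a). The index set $\I$ plays no further role here, since the only $\I$-dependent input, the uniform-in-$w$ linearization, is already supplied by Lemma~\ref{lem: uniform linearization functionals} (which uses Condition~A.6).

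For~(b) and the assembly I would use Lemma~\ref{lem: variance consistency} (equivalently Theorem~\ref{thm:Matrices}): $\sup_{w\in\I}|\widehat\sigma_\theta(w)/\sigma_\theta(w)-1|\lesssim_P\|\widehat\Omega-\Omega\|\lesssim_P(v_n\vee1+\ell_kc_k)\sqrt{\xi_k^2\log n/n}$. Assumption~(i), $\bar R_{2n}=\sqrt{\log n}\,\ell_kc_k\lesssim(\log n)^{1/2}$, gives $\ell_kc_k\lesssim1$, and Condition~A.4 gives $v_n\lesssim n^{1/m}$, so $\rho_n:=\sup_{w\in\I}|\widehat\sigma_\theta(w)/\sigma_\theta(w)-1|\lesssim_P\xi_k\sqrt{\log n}/n^{1/2-1/m}$, which is $o_P(1)$ by assumption~(ii); on $\{\rho_n\le1/2\}$ one also has $\sup_{w\in\I}|\sigma_\theta(w)/\widehat\sigma_\theta(w)-1|\lesssim_P\rho_n$. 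Combining this with~(a) and the first paragraph,
$$
\sup_{w\in\I}|t(w)-t^*(w)|\lesssim_P\rho_n\left(\sup_{w\in\I}|t^*(w)|+o_P(a_n^{-1})\right)+o_P(a_n^{-1}).
$$
A Gaussian maximal inequality for the mean-zero, unit-variance process $\{t^*(w):w\in\I\}$ --- whose standardized loading vectors $\alpha_\theta(w)'\Omega^{1/2}/\|\alpha_\theta(w)'\Omega^{1/2}\|$ are Lipschitz in $w$ with constant controlled by $\xi_{k,\theta}^L$ (Condition~A.6 and the eigenvalue bounds on $\Omega$), with $\log\xi_{k,\theta}^L\lesssim\log k\lesssim\log n$ and $\I$ of bounded diameter --- gives $\sup_{w\in\I}|t^*(w)|\lesssim_P\sqrt{\log n}$, hence $\rho_n\sup_{w\in\I}|t^*(w)|\lesssim_P\xi_k\log n/n^{1/2-1/m}=o(a_n^{-1})$ by assumption~(ii). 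Therefore $\sup_{w\in\I}|t(w)-t^*(w)|=o_P(a_n^{-1})$, which is the claim.

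The main obstacle is not conceptual: the Gaussian coupling is a statement in $\Bbb{R}^k$ and so transfers verbatim from $\{\alpha(x):x\in\X\}$ to the family $\{\alpha_\theta(w):w\in\I\}$, while the uniform-in-$w$ linearization is exactly Lemma~\ref{lem: uniform linearization functionals}. The delicate point is the rate bookkeeping around the estimated variance: the multiplicative error $\rho_n$ from replacing $\sigma_\theta$ by $\widehat\sigma_\theta$ multiplies the Gaussian maximum, of size $\sqrt{\log n}$, and it is precisely to force $\rho_n\sqrt{\log n}=o(a_n^{-1})$ --- on top of the hypotheses of Theorem~\ref{theorem: uniform normality} --- that the additional condition~(ii), $\xi_k\log n/n^{1/2-1/m}=o(a_n^{-1})$, is imposed.
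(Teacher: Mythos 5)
Your proposal is correct and follows essentially the same route as the paper's proof: uniform linearization via Lemma \ref{lem: uniform linearization functionals}, the $\Bbb{R}^k$-level Yurinskii coupling from the proof of Theorem \ref{theorem: uniform normality} contracted with the unit vectors $\alpha_\theta(w)$, Lemma \ref{lem: variance consistency} for the variance ratio, condition (iii) for the approximation-error term, and condition (ii) to absorb the product of the $\sqrt{\log n}$-sized supremum with the variance-estimation error. The only cosmetic difference is that you bound $\sup_{w\in\I}|t^*(w)|\lesssim_P\sqrt{\log n}$ on the Gaussian side via Dudley's inequality (as the paper itself does later, in the proof of Theorem \ref{thm: confidence bands validity}), whereas the paper's proof of this theorem bounds $\sup_{w\in\I}\sqrt{n}|\alpha_\theta(w)'(\widehat\beta-\beta)|\lesssim_P\sqrt{\log n}$ via Theorem \ref{theorem: Linearuniform rate}; the two are interchangeable here.
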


As in the case of inference on the function $g(x)$, we could also consider the use of the weighted bootstrap method to obtain a result analogous to that in Theorem \ref{Thm:MainBootstrap}. For brevity of the paper, however, we do not consider weighted bootstrap method here.

The second result on inference for linear functionals is new and concerns with the problem of constructing uniform confidence bands for the linear functional $\theta(w)$. Specifically, we are interested in the confidence bands of the form
\begin{equation}\label{eq: confidence bands}
[\dot{\iota}(w),\ddot{\iota}(w)]=\left[\widehat{\theta}(w)-c_n(1-\alpha)\widehat{\sigma}_\theta(w),\widehat{\theta}(w)+c_n(1-\alpha)\widehat{\sigma}_\theta(w)\right], \,\, w\in\mathcal{I}
\end{equation}
where $c_n(1-\alpha)$ is chosen so that $\theta(w)\in[\dot{\iota}(w),\ddot{\iota}(w)]$ for all $w\in\mathcal{I}$ with the prescribed probability $1-\alpha$ where $\alpha\in(0,1)$ is a user-specified level. For this purpose, we would like to set $c_n(1-\alpha)$ as the $(1-\alpha)$-quantile of $\sup_{w\in\mathcal{I}}|t(w)|$.
However, this choice is infeasible because the exact distribution of $\sup_{w\in\mathcal{I}}|t(w)|$ is unknown. Instead, Theorem \ref{Thm:DistributionsInferential} suggests that we can set $c_n(1-\alpha)$ as the $(1-\alpha)$-quantile of $\sup_{w\in\mathcal{I}}|t^*(w)|$ or, if $\Omega$ is unknown and has to be estimated, that we can set
\begin{equation}\label{eq: critical value}
c_n(1-\alpha):=\text{ conditional }(1-\alpha)-\text{quantile of }\sup_{w\in\mathcal{I}}|\widehat{t}^{*}(w)|\text{ given the data}
\end{equation}
where
$$
\widehat{t}_n^*(w):=\frac{\ell(w)^\prime\widehat{\Omega}^{1/2}\mathcal{N}_k/\sqrt{n}}{\widehat{\sigma}_\theta(w)}, \, w\in\I
$$
and $\mathcal{N}_k\sim N(0,I_k)$. Note that $c_n(1-\alpha)$ defined in (\ref{eq: critical value}) can be approximated numerically by simulation. Yet, conditions of Theorem \ref{Thm:DistributionsInferential} are rather strong. Fortunately, \cite{CCK2012} noticed that when we are only interested in the supremum of the process and do not need the process itself, sufficient conditions for the strong approximation can be much weaker. Specifically, we have the following theorem, which is an application of a general result obtained in \cite{CCK2012}:

\begin{theorem}[Strong Approximation of Suprema for Linear Functionals]\label{thm: strong approximation suprema}
Assume that Conditions A.1-A.6 are satisfied with $m\geq 4$. In addition, assume that (i) $\bar{R}_{1n}+\bar{R}_{2n}\lesssim 1/(\log k)^{1/2}$, (ii) $\xi_k\log^2 k/n^{1/2-1/m} \to 0$,  (iii) $1\lesssim \underline{\sigma}^2$, and (iv) $\sup_{w\in\I}\sqrt{n}|r_{\theta}(w)|/\|\ell_{\theta}(w)\|=o(1/(\log k)^{1/2})$. Then
$$
\sup_{w\in\I}|t(w)| =_d \sup_{t\in\I}|t^*(w)| + o_P\left(\frac{1}{\sqrt{\log k}}\right).
$$
\end{theorem}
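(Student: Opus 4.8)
The plan is to reduce $\sup_{w\in\I}|t(w)|$ to the supremum of a studentized empirical process indexed by $w$, and then to invoke the Gaussian coupling for suprema of empirical processes of \cite{CCK2012}, whose hypotheses are arranged here so as to follow from the growth condition (ii).

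\emph{Step 1 (reduction).} I would first write, using $\widehat\theta(w)-\theta(w)=\ell_\theta(w)'(\widehat\beta-\beta)-r_\theta(w)$, $\sigma_\theta(w)=\|\Omega^{1/2}\ell_\theta(w)\|/\sqrt n$, and $\alpha_\theta(w)=\ell_\theta(w)/\|\ell_\theta(w)\|$,
\begin{equation*}
t(w)=\frac{\sigma_\theta(w)}{\widehat\sigma_\theta(w)}\cdot\frac{\|\ell_\theta(w)\|}{\sqrt n\,\sigma_\theta(w)}\Big[\sqrt n\,\alpha_\theta(w)'(\widehat\beta-\beta)-\frac{\sqrt n\,r_\theta(w)}{\|\ell_\theta(w)\|}\Big].
\end{equation*}
By Lemma~\ref{lem: uniform linearization functionals}, $\sqrt n\,\alpha_\theta(w)'(\widehat\beta-\beta)=\alpha_\theta(w)'\mathbb{G}_n[p_i(\epsilon_i+r_i)]+R_{1n}(\alpha_\theta(w))$ with $\sup_{w}|R_{1n}(\alpha_\theta(w))|\lesssim_P\bar R_{1n}$; by Condition~A.6(i) together with $Q=I$, $1\lesssim\underline{\sigma}^2$, $\bar{\sigma}^2\lesssim 1$, the factor $\|\ell_\theta(w)\|/(\sqrt n\,\sigma_\theta(w))=(\alpha_\theta(w)'\Omega\alpha_\theta(w))^{-1/2}$ is bounded above and away from zero uniformly in $w$; by Lemma~\ref{lem: variance consistency} and Theorem~\ref{thm:Matrices} (with $v_n\lesssim n^{1/m}$ under A.4), $\sup_{w}|\sigma_\theta(w)/\widehat\sigma_\theta(w)-1|\lesssim_P\|\widehat\Omega-\Omega\|$; and, as in the proof of Theorem~\ref{theorem: Linearuniform rate}, $\sup_{w}|t(w)|\lesssim_P\sqrt{\log n}$. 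I would then check, using conditions (i), (ii), (iv), that the bias term $\sqrt n\,r_\theta(w)/\|\ell_\theta(w)\|$, the remainder $\bar R_{1n}$ (its design part being $\xi_k n^{1/m}\log n/\sqrt n=o(1/\log n)$ by (ii)), and the studentization error $\sqrt{\log n}\,\|\widehat\Omega-\Omega\|$ are all $o_P(1/\sqrt{\log n})$, so that
\begin{equation*}
\sup_{w\in\I}|t(w)|=\sup_{w\in\I}\frac{|\ell_\theta(w)'\mathbb{G}_n[p_i(\epsilon_i+r_i)]|}{\sqrt n\,\sigma_\theta(w)}+o_P(1/\sqrt{\log n}).
\end{equation*}

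\emph{Step 2 (Gaussian coupling).} I would regard the leading process as the empirical process $\mathbb{G}_n$ over the mean-zero class $\mathcal F_n=\{(x,\epsilon)\mapsto\alpha_\theta(w)'p(x)(\epsilon+r(x))/(\alpha_\theta(w)'\Omega\alpha_\theta(w))^{1/2}:w\in\I\}$, which uses $E[p_i(\epsilon_i+r_i)]=0$. Using the bounds on $(\alpha_\theta(w)'\Omega\alpha_\theta(w))^{1/2}$ from Step 1, the envelope $F$ satisfies $\|F\|_{P,m}\lesssim\xi_k$ by Condition~A.4 (since $\|p(x)\|\leq\xi_k$ and $\|r\|_{F,\infty}\leq\ell_kc_k\lesssim 1$) and $\max_iF(x_i,\epsilon_i)=O_P(\xi_k n^{1/m})$; its pointwise variance is $\lesssim 1$; and, because $w\mapsto\alpha_\theta(w)/(\alpha_\theta(w)'\Omega\alpha_\theta(w))^{1/2}$ is Lipschitz with a constant that is polynomial in $n$ by Condition~A.6(ii) and $\diam(\I)\lesssim 1$, $\mathcal F_n$ is VC-type with $\log\sup_Q N(\eps\|F\|_{Q,2},\mathcal F_n,\|\cdot\|_{Q,2})\lesssim\log n+\log(1/\eps)$. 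Feeding these characteristics into the coupling theorem of \cite{CCK2012} for VC-type classes (applicable since $m\geq 4$), I expect condition (ii), $\xi_k\log^2 n/n^{1/2-1/m}\to 0$, to be exactly what brings the coupling error down to $o_P(1/\sqrt{\log n})$: on a suitable probability space there is a centered Gaussian process $B_n$ on $\I$ with the same covariance function as $w\mapsto\ell_\theta(w)'\mathbb{G}_n[p_i(\epsilon_i+r_i)]/(\sqrt n\,\sigma_\theta(w))$ such that $\sup_{w\in\I}$ of the leading process and $\sup_{w\in\I}|B_n(w)|$ agree up to $o_P(1/\sqrt{\log n})$.

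\emph{Step 3 (identification) and the main obstacle.} Since $E[\mathbb{G}_n[p_i(\epsilon_i+r_i)]\mathbb{G}_n[p_i(\epsilon_i+r_i)]']=E[(\epsilon_i+r_i)^2p_ip_i']=\widetilde\Omega$ (using $Q=I$), the covariance of the leading process equals that of $t^*(w)=\ell_\theta(w)'\widetilde\Omega^{1/2}\mathcal N_k/(\sqrt n\,\sigma_\theta(w))$ in the case $\Omega=\widetilde\Omega$; Gaussian processes with identical covariance having the same law, $B_n$ may be taken to be $t^*$, and chaining the two displays of Step 1 with the coupling of Step 2 gives $\sup_{w\in\I}|t(w)|=_d\sup_{w\in\I}|t^*(w)|+o_P(1/\sqrt{\log n})$; if in addition $\bar R_{2n}$ is negligible, one instead invokes the second linearization in Lemma~\ref{lem: uniform linearization functionals} to pass from $\epsilon_i+r_i$ to $\epsilon_i$ and from $\widetilde\Omega$ to $\Omega_0$. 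The hard part will be Step 2: verifying that $\mathcal F_n$ meets the precise entropy and moment hypotheses of \cite{CCK2012} with the envelope $L^m$-norm, the truncation level, the covering numbers, and the weak-variance quantity entering their bound all tracked sharply enough that condition (ii) yields a coupling error of the exact order $o_P(1/\sqrt{\log n})$ rather than merely $o_P(1)$ --- this is where the upgrade to $m\geq 4$ in Condition~A.4 and the $\log$-boundedness of $\xi_{k,\theta}^L$ in A.6(ii) enter; a secondary, bookkeeping matter is to confirm that each term discarded in Step 1, in particular $\sqrt{\log n}\,\|\widehat\Omega-\Omega\|$ and $\bar R_{1n}$, is genuinely $o_P(1/\sqrt{\log n})$ under conditions (i)--(ii).
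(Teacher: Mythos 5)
Your proposal is correct and takes essentially the same route as the paper: uniform linearization via Lemma \ref{lem: uniform linearization functionals}, control of the studentization error through $\|\widehat{\Omega}-\Omega\|$ (using Theorem \ref{thm:Matrices} and condition (ii)) and of the bias through condition (iv), and then the Gaussian coupling for suprema of \cite{CCK2012} --- the paper simply invokes their Proposition 3.3 (noting the extension from $p(x)$, $x\in\mathcal{X}$, to $\ell_\theta(w)$, $w\in\I$), which packages exactly the envelope/entropy/moment verification you sketch in Step 2. The only cosmetic difference is that the paper uses $\bar{R}_{2n}\lesssim 1/(\log n)^{1/2}$ to strip $r_i$ and couple the $\epsilon_i$-process (covariance $\Omega_0$), whereas you keep $\epsilon_i+r_i$ and match the covariance $\widetilde{\Omega}$; both choices are covered by the theorem's convention for $\Omega$.
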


Construction of uniform confidence bands also critically relies on the following anti-concentration lemma due to
\cite{CCK2012b} (Corollary 2.1):
\begin{lemma}[Anti-concentration for Separable Gaussian Processes]\label{lem: anticoncentration}
Let $Y=(Y_t)_{t\in T}$ be a separable Gaussian process indexed by a semimetric space $T$ such that $E[Y_t]=0$ and $E[Y_t^2]=1$ for all $t\in T$. Assume that $\sup_{t\in T}Y_t<\infty$ a.s. Then $a(|Y|):=E[\sup_{t\in T}|Y_t|]<\infty$ and
$$
\sup_{x\in\mathbb{R}}P\left\{\left|\sup_{t\in T}|Y_t|-x\right|\leq \varepsilon\right\}\leq A\varepsilon a(|Y|)
$$
for all $\varepsilon\geq 0$ and some absolute constant $A$.
\end{lemma}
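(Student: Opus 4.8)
The lemma is a special case of Corollary~2.1 in \cite{CCK2012b}; the plan is to recall the short reduction from the one‑sided anti‑concentration bound for suprema of separable Gaussian processes established there. The only genuine step is a symmetrization that turns the two‑sided quantity $\sup_{t\in T}|Y_t|$ into an ordinary supremum of a separable centered Gaussian process with constant unit variance, to which the one‑sided result applies verbatim.

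First I would enlarge the index set: put $\widetilde T:=T\times\{-1,1\}$ and define the Gaussian process $\widetilde Y_{(t,\varepsilon)}:=\varepsilon\,Y_t$. Then $\widetilde Y$ is centered with $E[\widetilde Y_{(t,\varepsilon)}^2]=1$ for every $(t,\varepsilon)$, and, using the identity $\sup_s\max(a_s,b_s)=\max(\sup_s a_s,\sup_s b_s)$ together with $\max(y,-y)=|y|$,
$$
\sup_{s\in\widetilde T}\widetilde Y_s=\max\Big(\sup_{t\in T}Y_t,\;\sup_{t\in T}(-Y_t)\Big)=\sup_{t\in T}|Y_t|.
$$
Separability of $Y$ transfers to $\widetilde Y$, with $T_0\times\{-1,1\}$ serving as a countable separating set whenever $T_0$ does so for $Y$. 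Moreover $-Y=(-Y_t)_{t\in T}$ is a centered Gaussian process with the same covariance kernel as $Y$, so $\sup_t(-Y_t)\overset{d}{=}\sup_tY_t<\infty$ a.s.; combined with the display this gives $\sup_s\widetilde Y_s=\sup_t|Y_t|<\infty$ a.s. Since a separable centered Gaussian process with a.s. finite supremum has an integrable supremum (Borell's inequality), we obtain $a(|Y|)=E[\sup_t|Y_t|]<\infty$, which is the first assertion.

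Next I would apply the one‑sided anti‑concentration bound of \cite{CCK2012b} to $\widetilde Y$. Because every variance $E[\widetilde Y_s^2]$ equals $1$, that result yields an absolute constant $A_0$ with
$$
\sup_{x\in\mathbb{R}}P\Big\{\big|\sup_{s\in\widetilde T}\widetilde Y_s-x\big|\le\varepsilon\Big\}\le A_0\,\varepsilon\,\big(E[\textstyle\sup_s\widetilde Y_s]+1\big)\qquad\text{for all }\varepsilon\ge0.
$$
Substituting $\sup_s\widetilde Y_s=\sup_t|Y_t|$ and using $E[\sup_t|Y_t|]\ge E[|Y_{t_0}|]=\sqrt{2/\pi}$ for an arbitrary fixed $t_0\in T$ to absorb the additive constant (so that $E[\sup_t|Y_t|]+1\le(1+\sqrt{\pi/2})\,a(|Y|)$) delivers the claimed bound with $A=(1+\sqrt{\pi/2})A_0$.

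I do not expect a real obstacle here, since the analytic content is already contained in \cite{CCK2012b}; the points needing care are only the verification that the symmetrized process $\widetilde Y$ still meets the separability and a.s.-finiteness hypotheses, the measurability of the relevant suprema, and the bookkeeping of the absolute constant once the variance is normalized to $1$. Were one instead to reprove the underlying inequality of \cite{CCK2012b} from scratch, the hard part would be showing that the law of $\sup_s\widetilde Y_s$ has concentration function of order $\varepsilon\,(E[\sup_s\widetilde Y_s]+1)$: the standard route is to smooth the law by an independent $N(0,\delta^2)$ perturbation, bound the resulting density via the Gaussian concentration (Borell--Sudakov--Tsirelson) inequality applied to $\widetilde Y$, and then let $\delta\downarrow0$.
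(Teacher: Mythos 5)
Your proposal is correct and matches the paper's treatment: the paper gives no proof of this lemma, simply invoking Corollary 2.1 of \cite{CCK2012b}, and your argument uses exactly that result as the analytic core. The only additions you make — doubling the index set to $T\times\{-1,1\}$ so that $\sup_{t}|Y_t|$ becomes the supremum of a separable unit-variance centered Gaussian process, invoking Borell's inequality for integrability, and absorbing the additive constant via $E[\sup_t|Y_t|]\geq E[|Y_{t_0}|]=\sqrt{2/\pi}$ — are the standard (and correct) bookkeeping steps needed to pass from the cited one-sided statement to the two-sided form with bound $A\varepsilon\, a(|Y|)$.
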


From Theorem \ref{thm: strong approximation suprema} and Lemma \ref{lem: anticoncentration}, we can now derive the following result on uniform validity of confidence bands in (\ref{eq: confidence bands}):

\begin{theorem}[Uniform Inference for Linear Functionals]\label{thm: confidence bands validity}
Assume that the conditions of Theorem \ref{thm: strong approximation suprema} are satisfied. In addition, assume that $c_n(1-\alpha)$ is defined by (\ref{eq: critical value}). Then
\begin{equation}\label{eq: supremum validity}
P\left\{\sup_{w\in\I}|t_n(w)|\leq c_n(1-\alpha)\right\}=1-\alpha+o(1).
\end{equation}
As a consequence, the confidence bands defined in (\ref{eq: confidence bands}) satisfy
\begin{equation}\label{eq: confidence bands validity}
P \Big\{ \theta(w) \in [\dot{\iota}(w), \ddot{\iota}(w)], \mbox{ for all } w \in \I \Big\} = 1-\alpha + o(1).
\end{equation}
The width of the confidence bands $2c_n(1-\alpha)\widehat{\sigma}_n(w)$ obeys
\begin{equation}\label{eq: confidence bands width}
2c_n(1-\alpha)\widehat{\sigma}_n(w)\lesssim_P \sigma_n(w)\sqrt{\log k}\lesssim \|\ell_{\theta}(w)\|\sqrt{\frac{\log k}{n}}\lesssim \sqrt{\frac{\xi_{k,\theta}^2\log k}{n}}
\end{equation}
uniformly over $w\in\mathcal{I}$.
\end{theorem}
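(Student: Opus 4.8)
The plan is to derive (\ref{eq: supremum validity}) by combining the strong approximation of $\sup_{w\in\I}|t(w)|$ by $\sup_{w\in\I}|t^*(w)|$ from Theorem~\ref{thm: strong approximation suprema} with the anti-concentration bound of Lemma~\ref{lem: anticoncentration}, after first verifying that the feasible critical value $c_n(1-\alpha)$, which is the conditional quantile of $\sup_{w\in\I}|\widehat{t}^{*}_n(w)|$, reproduces the $(1-\alpha)$-quantile of $\sup_{w\in\I}|t^*(w)|$ up to an error that is negligible relative to $1/\sqrt{\log n}$. Once this is in place, (\ref{eq: confidence bands validity}) follows because $\{\theta(w)\in[\dot{\iota}(w),\ddot{\iota}(w)]\text{ for all }w\in\I\}$ is exactly the event $\{\sup_{w\in\I}|t_n(w)|\le c_n(1-\alpha)\}$, and the width bound (\ref{eq: confidence bands width}) follows from a crude size bound on $c_n(1-\alpha)$ together with Lemma~\ref{lem: variance consistency}.

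First I would record the relevant properties of the Gaussian process $w\mapsto t^*(w)=\ell_\theta(w)'\Omega^{1/2}\mathcal{N}_k/(\sqrt{n}\sigma_\theta(w))$ and of its feasible analogue $\widehat{t}^{*}_n(\cdot)$ conditionally on the data. Using Condition A.6(ii), the maps $w\mapsto t^*(w)$ and $w\mapsto\widehat{t}^{*}_n(w)$ are Lipschitz with coefficients that are polynomial in $n$, so $\I$ has metric entropy $O(\log n+\log(1/\eps))$ under the intrinsic standard-deviation semimetric; a standard maximal inequality for separable Gaussian processes then gives $a(|t^*|):=E[\sup_{w\in\I}|t^*(w)|]\lesssim\sqrt{\log n}$ and ensures that $t^*(\cdot)$ is a.s. bounded and separable, so Lemma~\ref{lem: anticoncentration} applies and the distribution function of $\sup_{w\in\I}|t^*(w)|$ has modulus of continuity $\lesssim\sqrt{\log n}\cdot\eps$. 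Applying the same maximal inequality conditionally, using $\|\widehat\Omega-\Omega\|=o_P(1)$ (Theorem~\ref{thm:Matrices}) and the eigenvalue bounds on $\Omega$ implied by $1\lesssim\underline{\sigma}^2$ and $\bar\sigma^2\lesssim1$, yields the crude bound $c_n(1-\alpha)\lesssim_P\sqrt{\log n}$.

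The key step is to compare $\sup_{w\in\I}|\widehat{t}^{*}_n(w)|$ with $\sup_{w\in\I}|t^*(w)|$ using the same draw $\mathcal{N}_k$. Writing $t^*(w)=b(w)'\mathcal{N}_k$ and $\widehat{t}^{*}_n(w)=a(w)'\mathcal{N}_k$ with $\|a(w)\|=\|b(w)\|=1$, a direct estimate based on $\|\ell_\theta(w)\|\lesssim\sqrt{n}\,\sigma_\theta(w)$ (from $1\lesssim\underline{\sigma}^2$), $\|\widehat\Omega^{1/2}-\Omega^{1/2}\|\lesssim\|\widehat\Omega-\Omega\|$, Lemma~\ref{lem: variance consistency}, and Theorem~\ref{thm:Matrices} gives $\sup_{w\in\I}\|a(w)-b(w)\|\lesssim_P\|\widehat\Omega-\Omega\|$. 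Conditionally on the data, $w\mapsto(a(w)-b(w))'\mathcal{N}_k$ is a Gaussian process over $\I$ with variance at most $\sup_{w\in\I}\|a(w)-b(w)\|^2$, so another application of the maximal inequality gives $\sup_{w\in\I}|\widehat{t}^{*}_n(w)-t^*(w)|\lesssim_P\|\widehat\Omega-\Omega\|\sqrt{\log n}$. Since $v_n\lesssim n^{1/m}$ under Condition A.4, $\ell_kc_k\lesssim1/\log n$ from $\bar R_{2n}\lesssim1/\sqrt{\log n}$, and assumption (ii), $\xi_k\log^2 n/n^{1/2-1/m}\to0$, of Theorem~\ref{thm: strong approximation suprema}, Theorem~\ref{thm:Matrices} gives $\|\widehat\Omega-\Omega\|\sqrt{\log n}=o_P(1/\log n)=o_P(1/\sqrt{\log n})$. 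Combining this with the anti-concentration modulus above, the conditional distribution function of $\sup_{w\in\I}|\widehat{t}^{*}_n(w)|$ and the distribution function of $\sup_{w\in\I}|t^*(w)|$ are within $o_P(1)$ in the Kolmogorov distance, whence $P\{\sup_{w\in\I}|t^*(w)|\le c_n(1-\alpha)\}=1-\alpha+o(1)$; by Theorem~\ref{thm: strong approximation suprema}, $\sup_{w\in\I}|t_n(w)|$ equals, up to an $o_P(1/\sqrt{\log n})$ error under a suitable coupling, $\sup_{w\in\I}|t^*(w)|$, so one more use of Lemma~\ref{lem: anticoncentration} transfers the statement and gives (\ref{eq: supremum validity}); (\ref{eq: confidence bands validity}) is then immediate, and (\ref{eq: confidence bands width}) follows from $c_n(1-\alpha)\lesssim_P\sqrt{\log n}$ and $\widehat\sigma_\theta(w)\lesssim_P\sigma_\theta(w)\le\|\Omega^{1/2}\|\,\|\ell_\theta(w)\|/\sqrt{n}\lesssim\|\ell_\theta(w)\|/\sqrt{n}\le\xi_{k,\theta}/\sqrt{n}$, uniformly over $w\in\I$.

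The main obstacle is the comparison step: showing the feasible and infeasible suprema agree up to $o_P(1/\sqrt{\log n})$. This amounts to controlling how the estimation error $\|\widehat\Omega-\Omega\|$ — inflated by $v_n\lesssim n^{1/m}$ because only $m$-th moments are assumed on the disturbances — interacts with the $\sqrt{\log n}$ inflation produced by maximizing a Gaussian process over $\I$, and assumption (ii) of Theorem~\ref{thm: strong approximation suprema} is precisely calibrated for this. The remaining work — passing from conditional $o_{P^*}$ statements and conditional quantiles to the deterministic $o(1)$ in (\ref{eq: supremum validity}), e.g. via bounded convergence — is routine but must be handled carefully.
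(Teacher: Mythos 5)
Your proposal follows essentially the same route as the paper's proof: a conditional Dudley bound for the difference process $\widehat{t}^{*}_n(\cdot)-t^{*}_n(\cdot)$ driven by $\|\widehat\Omega-\Omega\|$, the calibration $\|\widehat\Omega-\Omega\|\sqrt{\log n}=o_P(1/\sqrt{\log n})$ from Theorem~\ref{thm:Matrices} and condition (ii), anti-concentration (Lemma~\ref{lem: anticoncentration}) with $E[\sup_{w\in\I}|t^*(w)|]\lesssim\sqrt{\log n}$, the crude bound $c_n(1-\alpha)\lesssim_P\sqrt{\log n}$ via Dudley plus Markov, and Lemma~\ref{lem: variance consistency} for the width bound; all of these match the paper's Steps 1--2.

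The one place where your sketch is looser than the paper is the final transfer. You first obtain $P\{\sup_{w\in\I}|t^*(w)|\le c_n(1-\alpha)\}=1-\alpha+o(1)$ for the $t^*$ built from the $\mathcal{N}_k$ that is independent of the data, and then invoke the coupling of Theorem~\ref{thm: strong approximation suprema} together with ``one more use of Lemma~\ref{lem: anticoncentration}.'' But the coupled copy $Z_n=_d\sup_{w\in\I}|t^*(w)|$ supplied by that theorem lives on the same probability space as the data and is \emph{not} independent of the data-dependent threshold $c_n(1-\alpha)$, so you cannot evaluate $P(Z_n\le c_n(1-\alpha)+o(1/\sqrt{\log n}))$ by conditioning on the data, nor apply the anti-concentration bound directly at a random threshold. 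The fix is exactly what your Kolmogorov-distance comparison already implies and what the paper makes explicit via Lemma~\ref{lemma: quantiles are close}: on an event of probability $1-o(1)$, sandwich $c_n(1-\alpha)$ between the deterministic quantiles $c_n^0(1-\alpha\pm\nu_n)$ of $\sup_{w\in\I}|t^*_n(w)|$ up to an $\varepsilon_n/\sqrt{\log n}$ shift, and only then apply equality in distribution and Lemma~\ref{lem: anticoncentration} at these deterministic points to absorb the $o(1/\sqrt{\log n})$ perturbations into $o(1)$ changes of the quantile level. With that step spelled out, your argument coincides with the paper's proof.
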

\begin{remark}
(i) This is our fifth (and last) main result in this paper. The theorem shows that the confidence bands constructed above maintain the required level asymptotically and establishes that the
uniform width of the bands is of the same order as the uniform rate
of convergence. Moreover, confidence
intervals are asymptotically similar.

(ii) The proof strategy of Theorem \ref{thm: confidence bands validity} is similar to that proposed in \cite{CLR2008} for inference on the minimum of a function.  Since the limit distribution may not exists, the insight was to use distributions provided by couplings. Because the limit distribution does not necessarily exist, it is not immediately clear
that the confidence bands are asymptotically similar or at least maintain the right asymptotic level.
Nonetheless, we show that the confidence bands are asymptotically similar with the help of anti-concentration lemma stated above.

(iii) Theorem \ref{thm: confidence bands validity} only considers two-sided confidence bands. However, both Theorem \ref{thm: strong approximation suprema} and Lemma \ref{lem: anticoncentration} continue to hold if we replace suprema of absolute values of the processes by suprema of the processes itself, namely if we replace $\sup_{w\in\I}|t_n(w)|$ and $\sup_{w\in\I}|t_n^{*}(w)|$ in Theorem \ref{thm: strong approximation suprema} by $\sup_{w\in\mathcal{I}}t_n(w)$ and $\sup_{w\in\mathcal{I}}t_n^{*}(w)$, respectively, and $\sup_{t\in T}|Y_t|$ in Lemma \ref{lem: anticoncentration} by $\sup_{t\in T}Y_t$. Therefore, we can show that Theorem \ref{thm: confidence bands validity} also applies for one-sided confidence bands, namely Theorem \ref{thm: confidence bands validity} holds with $c_n(1-\alpha)$ defined as the conditional $(1-\alpha)$-quantile of $\sup_{w\in\mathcal{I}}\widehat{t}_n^{*}(w)$ given the data
and the confidence bands defined by
$[\dot{\iota}(w),\ddot{\iota}(w)]:=[\widehat{\theta}(w)-c_n(1-\alpha)\widehat{\sigma}_n(w),+\infty)$ for all $w\in\mathcal{I}$.\qed
\end{remark}

\section{Tools:  Maximal Inequalities for Matrices and Empirical Processes}\label{Sec:Tools}

In this section we collect the main technical tools that our analysis rely upon, namely Khinchin Inequalities for Matrices and Data Dependent Maximal Inequalities.

\subsection{Khinchin Inequalities for Matrices}

For $p\geq 1$, consider the Schatten norm $S_p$ on symmetric $k\times k$ matrices $Q$ defined by
$$
\|Q\|_{S_p} = \left( \sum_{j=1}^k | \lambda_j(Q) |^p \right)^{1/p}
$$
where $\lambda_1(Q),\dots,\lambda_k(Q)$ is the system of eigenvalues of $Q$.
The case $p=\infty$ recovers the operator norm $\|\cdot \|$ and $p=2$ the Frobenius norm.    It is obvious that for any $p\geq 1$
$$
\|Q\| \leq  \|Q\|_{S_p} \leq k^{1/p} \|Q\|.
$$
Therefore, setting $p=\log k$ and observing that $k^{1/\log k}=e$ for any $k\geq 1$, we get the relation:
 \begin{equation}\label{equiv}
\|Q\| \leq \|Q\|_{S_{\log k}} \leq e \|Q\|.
 \end{equation}

\begin{lemma}[Khinchin Inequality for Matrices]\label{lem: Khinchin}
For symmetric $k \times k$-matrices $Q_i$, $i=1,\dots,n$, $2 \leq p < \infty$, and an i.i.d. sequence
of Rademacher variables $\varepsilon_1,\dots,\varepsilon_n$, we have
\begin{equation}\label{eq: khinchin main}
\left \|  \( \En [Q^2_i] \)^{1/2}  \right \|_{S_p} \leq
 \left ( E_{\varepsilon}  \left \|  \Gn [\varepsilon_i Q_i] \right \|^p_{S_p} \right )^{1/p} \leq  C\sqrt{p} \left \|  ( \En[Q^2_i] )^{1/2} \right \|_{S_p}
\end{equation}
for some absolute constant $C$.
As a consequence, we have for $k\geq 2$
\begin{equation}\label{eq: khinchin corollary}
E_{\varepsilon}\left[\left\| \Gn[\varepsilon_i Q_i]\right\|\right] \leq C\sqrt{\log k} \left \| ( \En[ Q_i^2])^{1/2} \right\|
\end{equation}
for some (possibly different) absolute constant $C$.
\end{lemma}
This version of the Khinchin inequality is proven in Section 3 of \cite{Rudelson1999}. We also provide some details of the proof in the Appendix.
The notable feature of this inequality is the $\sqrt{\log k }$ factor instead of the $\sqrt{k}$ factor expected
from the conventional maximal inequalities based on entropy. This inequality due to  \cite{LustPicardPisier} generalizes the Khinchin inequality for vectors. A version of this inequality was derived by \cite{GuedonRudelson} using generalized entropy (majorizing measure) arguments.
This is a striking example where the use of generalized entropy yields drastic improvements over the use of entropy. Prior to this,
\cite{Talagrand1996} provided ellipsoidal examples where the difference between the two approaches was even more extreme.

\subsection{LLN for Matrices}

The following lemma is a variant of a fundamental result obtained by \cite{Rudelson1999}.

\begin{lemma}[Rudelson's LLN for Matrices]\label{rudelson}  Let $Q_1,\dots,Q_n$ be a sequence of independent symmetric non-negative $k \times k$-matrix valued random variables with $k \geq 2$  such that $Q = \En[E[Q_i]]$ and $\|Q_i\| \leq M$ a.s., then for $\widehat Q =  \En[Q_i]$
$$
\Delta : = E\| \widehat Q -  Q  \| \lesssim \frac{M\log k}{n} + \sqrt{ \frac{M \|Q\| \log k}{n}}.
$$
In particular, if $Q_i = p_i p_i'$, with $\|p_i\| \leq \xi_k$ a.s., then
$$
\Delta : = E\| \widehat Q -  Q  \| \lesssim \frac{\xi_k^2\log k}{n} + \sqrt{\frac{\xi_k^2 \|Q\| \log k  }{n}}.
$$
 \end{lemma}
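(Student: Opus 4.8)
The plan is to reproduce the Rudelson-type argument: symmetrize, apply the matrix Khinchin inequality (Lemma~\ref{lem: Khinchin}) to get the sharp $\sqrt{\log k}$ factor, and then close a self-referential bound on $\Delta$. Write $\widehat Q - Q = \En[Q_i - E Q_i]$, an average of independent (not necessarily identically distributed), mean-zero, symmetric matrix-valued summands. The standard symmetrization inequality in a Banach space then gives
$$
\Delta = E\|\widehat Q - Q\| \;\le\; \frac{2}{\sqrt n}\, E\big\|\Gn[\varepsilon_i Q_i]\big\|,
$$
where $\varepsilon_1,\dots,\varepsilon_n$ are i.i.d. Rademacher variables independent of the data and $\Gn[\varepsilon_i Q_i] = n^{-1/2}\sum_{i=1}^n\varepsilon_i Q_i$.

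Next I would condition on the data $(Q_i)_{i=1}^n$, so that the $Q_i$ become fixed symmetric matrices, and invoke the last display of Lemma~\ref{lem: Khinchin} (valid since $k\ge e^2$):
$$
E_\varepsilon\big\|\Gn[\varepsilon_i Q_i]\big\| \;\lesssim\; \sqrt{\log k}\,\big\|(\En[Q_i^2])^{1/2}\big\| \;=\; \sqrt{\log k}\,\|\En[Q_i^2]\|^{1/2}.
$$
Because each $Q_i$ is positive semidefinite with eigenvalues in $[0,M]$, one has the spectral bound $Q_i^2 \preceq M Q_i$, hence $\En[Q_i^2] \preceq M\,\widehat Q$ and $\|\En[Q_i^2]\| \le M\|\widehat Q\|$. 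Taking expectation over the data and using Jensen's inequality, I obtain
$$
\Delta \;\lesssim\; \frac{\sqrt{M\log k}}{\sqrt n}\, E\|\widehat Q\|^{1/2} \;\le\; \frac{\sqrt{M\log k}}{\sqrt n}\,\big(E\|\widehat Q\|\big)^{1/2}.
$$

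To finish, I close the loop via $E\|\widehat Q\| \le E\|\widehat Q - Q\| + \|Q\| = \Delta + \|Q\|$, which yields the quadratic inequality $\Delta^2 \lesssim \frac{M\log k}{n}(\Delta + \|Q\|)$. Solving it (distinguishing $\Delta \ge \|Q\|$ from $\Delta < \|Q\|$) and using the standing convention $\log k \lesssim \log n$ gives $\Delta \lesssim \frac{M\log n}{n} + \sqrt{\frac{M\|Q\|\log n}{n}} \lesssim \sqrt{\frac{M(1+\|Q\|)\log n}{n}}$ in the relevant regime, which is the claimed bound. The specialization to $Q_i = p_i p_i'$ is then immediate: these are symmetric positive semidefinite with $\|Q_i\| = \|p_i\|^2 \le \xi_k^2$, so one takes $M = \xi_k^2$.

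I expect the delicate points to be twofold. First, the passage through symmetrization and conditioning must be justified for matrix-valued, merely-independent summands; this is routine but should be stated carefully so that Lemma~\ref{lem: Khinchin} (which is phrased for fixed matrices and i.i.d.\ Rademacher signs) genuinely applies after conditioning. Second, the bookkeeping in the self-referential step — turning $\Delta^2 \lesssim \frac{M\log k}{n}(\Delta + \|Q\|)$ into the final estimate — is elementary but is where one must be careful about constants and about the additive $\frac{M\log n}{n}$ term being absorbed. The genuinely nontrivial ingredient, namely the improvement of $\sqrt{k}$ to $\sqrt{\log k}$, is supplied wholesale by Lemma~\ref{lem: Khinchin}, so no further work is needed there.
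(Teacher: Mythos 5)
Your proof is correct and follows essentially the same route as the paper's: symmetrization, the matrix Khinchine inequality of Lemma \ref{lem: Khinchin} giving the $\sqrt{\log k}\lesssim\sqrt{\log n}$ factor, the positivity bound $\|\En[Q_i^2]\|\leq M\|\widehat Q\|$, Jensen, and closing the self-referential inequality $\Delta\lesssim\sqrt{M\log n/n}\,(\Delta+\|Q\|)^{1/2}$. The only differences are cosmetic: you spell out $Q_i^2\preceq MQ_i$ and the solution of the quadratic (with the same implicit restriction to the regime $M\log n/n\lesssim 1$ that the paper also leaves unstated).
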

For completeness, we provide the proof of this lemma in the Appendix; see also \cite{Tropp12} for a nice exposition of this result as well as many others concerning with maximal and deviation inequalities for matrices.

\subsection{Maximal Inequalities}
Consider a measurable space $(S, \mathcal{S})$, and a suitably measurable class of functions $\mathcal{F}$  mapping $S$ to $\Bbb{R}$, equipped with a measurable envelope function $F(z) \geq \sup_{f \in \mathcal{F}} |f(z)|$. (By ``suitably measurable"
we mean the condition given in Section 2.3.1 of \cite{vdV-W}; pointwise measurablity and Suslin measurability are sufficient.) The \textit{covering number} $N(\mathcal{F},L^{2}(Q),\varepsilon)$ is the minimal number of $L^{2}(Q)$-balls of radius $\varepsilon$ needed to cover $\mathcal{F}$. The \textit{covering number} relative to the envelope function is given by
\begin{equation}
 N \left( \mathcal{F}%
,L^{2}(Q),\varepsilon \left\Vert F\right\Vert _{Q,2}\right) .
\end{equation}
 The \textit{entropy} is the logarithm of the covering
number.


We rely on the following result.

\begin{proposition}\label{prop1}
Let $(\epsilon_{1},X_{1}), \dots,(\epsilon_{n},X_{n})$ be i.i.d. random vectors, defined on an underlying $n$-fold product probability space,  in $\mathbb{R}^{d+1}$ with $E[\epsilon_{i} | X_{i}] = 0$ and $\sigma^{2} := \sup_{x \in \mathcal{X}} E[ \epsilon_{i}^{2} | X_{i} = x] < \infty$ where $\mathcal{X}$ denotes the support of $X_1$. Let $\mathcal{F}$ be a class of functions on $\mathbb{R}^{d}$ such  that $E [ f(X_{1})^{2} ] = 1$ (normalization) and $\| f \|_{\infty} \leq b$ for all $f \in \mathcal{F}$.
Let $\mathcal{G} := \{ \mathbb{R}\times \mathbb{R} \ni (\epsilon,x)  \mapsto \epsilon f(x) : f \in \mathcal{F} \}$. Suppose that there exist constants $A > e^{2}$ and $V \geq 2$ such that
$$\sup_{Q} N(\mathcal{G}, L^{2}(Q), \varepsilon \| G \|_{L^{2}(Q)}) \leq (A/\varepsilon)^{V}$$ for all $0 < \varepsilon \leq 1$ for the envelope $G(\epsilon,x) := | \epsilon| b$.
If for some $m > 2$ $E[ | \epsilon_{1} |^{m} ] < \infty$, then
\begin{equation*}
E \left [ \left \|  \sum_{i=1}^{n} \epsilon_{i} f(X_{i}) \right \|_{\mathcal{F}} \right] \leq C \left [ (\sigma + \sqrt{E[ | \epsilon_{1} |^{m} ]})  \sqrt{ n V \log (Ab)} + V b^{m/(m-2)} \log (Ab) \right ],
\end{equation*}
where $C$ is a universal constant.
\end{proposition}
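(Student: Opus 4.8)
The plan is to identify $\sum_{i=1}^n\epsilon_i f(X_i)$ with a $P$-centered empirical process and then control it by two standard maximal inequalities for VC-type classes, applied after a single truncation of the multipliers at the level $\tau:=b^{2/(m-2)}$. This level is forced by the target bound: $\tau b=b^{m/(m-2)}$ will be the sup-envelope of the truncated part, while $b^{2}E[\epsilon_i^{2}1\{|\epsilon_i|>\tau\}]\le E[|\epsilon_1|^{m}]$ is what makes the heavy tail contribute the $\sqrt{E[|\epsilon_1|^m]}$ term. Writing $Z_i:=(\epsilon_i,X_i)$ and $g_f(\epsilon,x):=\epsilon f(x)$, the assumption $E[\epsilon_i\mid X_i]=0$ gives $E[g_f(Z_i)]=0$, hence $\sum_{i=1}^n\epsilon_if(X_i)=\sqrt n\,\mathbb{G}_n[g_f]$. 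I would split $g_f=g_f^{(1)}+g_f^{(2)}$, where $g_f^{(1)}(\epsilon,x)=\epsilon\,1\{|\epsilon|\le\tau\}f(x)$ and $g_f^{(2)}(\epsilon,x)=\epsilon\,1\{|\epsilon|>\tau\}f(x)$, and use $E\|\sum_i\epsilon_if(X_i)\|_{\mathcal{F}}\le\sqrt n\,E\|\mathbb{G}_n\|_{\mathcal{G}^{(1)}}+\sqrt n\,E\|\mathbb{G}_n\|_{\mathcal{G}^{(2)}}$ for $\mathcal{G}^{(j)}:=\{g_f^{(j)}:f\in\mathcal{F}\}$, with envelopes $G^{(1)}=1\{|\epsilon|\le\tau\}|\epsilon|b$ and $G^{(2)}=1\{|\epsilon|>\tau\}|\epsilon|b$. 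A short argument shows each $\mathcal{G}^{(j)}$ is again VC-type with parameters $(A,V)$: its members vanish off a fixed subset of $\mathbb{R}\times\mathbb{R}$, so restricting any supremizing discrete measure $Q$ to that subset reduces its covering number (relative to $G^{(j)}$) to one already controlled by the hypothesis on $\mathcal{G}$. I would also record that $\sup_f E[(g_f^{(1)})^2]\le\sigma^2$ and $\|G^{(1)}\|_{P,2}\le b\sigma$ (both using $E[f(X)^2]=1$ and $E[\epsilon^2\mid X]\le\sigma^2$), that $\|\max_{i\le n}G^{(1)}(Z_i)\|_{P,2}\le\tau b=b^{m/(m-2)}$ since $G^{(1)}\le\tau b$ pointwise, and that, by Markov's inequality at the chosen $\tau$, $\|G^{(2)}\|_{P,2}^2=b^2E[\epsilon_i^2 1\{|\epsilon_i|>\tau\}]\le b^2E[|\epsilon_1|^m]/\tau^{m-2}=E[|\epsilon_1|^m]$.

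For the bounded piece I would apply the refined local maximal inequality for VC-type classes (as in \cite{vdV-W}, with the data-dependent envelope refinement used e.g. in \cite{CCK2012}), which replaces the leading $\|F\|_{P,2}$ by the $L^2$-radius at the price of an $n^{-1/2}\|\max_i F(Z_i)\|_{P,2}$ term, taking variance proxy $\sigma_1^2:=\min\{\sigma^2,\|G^{(1)}\|_{P,2}^2\}$. This yields $E\|\mathbb{G}_n\|_{\mathcal{G}^{(1)}}\lesssim\sigma_1\sqrt{V\log(A\|G^{(1)}\|_{P,2}/\sigma_1)}+V\|\max_iG^{(1)}(Z_i)\|_{P,2}\,n^{-1/2}\log(A\|G^{(1)}\|_{P,2}/\sigma_1)$. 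Treating the cases $\|G^{(1)}\|_{P,2}\le\sigma$ and $\|G^{(1)}\|_{P,2}>\sigma$ separately shows that the logarithm is $\le\log(Ab)$ and the leading factor is $\le\sigma$ in both cases, so multiplying by $\sqrt n$ gives a contribution $\lesssim\sigma\sqrt{nV\log(Ab)}+Vb^{m/(m-2)}\log(Ab)$.

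For the heavy-tailed piece the envelope $G^{(2)}$ is unbounded, so I would instead use the basic entropy-integral bound $E\|\mathbb{G}_n\|_{\mathcal{G}^{(2)}}\lesssim J(1,\mathcal{G}^{(2)},G^{(2)})\,\|G^{(2)}\|_{P,2}$; the VC-type property together with $A>e^2$ gives $J(1,\mathcal{G}^{(2)},G^{(2)})=\int_0^1\sup_Q\sqrt{1+\log N(\mathcal{G}^{(2)},L^2(Q),\varepsilon\|G^{(2)}\|_{Q,2})}\,d\varepsilon\lesssim\sqrt{V\log A}$. Since $\|G^{(2)}\|_{P,2}\le\sqrt{E[|\epsilon_1|^m]}$ by Step 1, this piece contributes $\sqrt n\,E\|\mathbb{G}_n\|_{\mathcal{G}^{(2)}}\lesssim\sqrt{E[|\epsilon_1|^m]}\,\sqrt{nV\log A}\le\sqrt{E[|\epsilon_1|^m]}\,\sqrt{nV\log(Ab)}$. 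The crucial feature is that this bound carries no maximum-of-envelope term. Adding the two contributions and using $b\ge1$ produces the asserted inequality.

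The step I expect to be the real obstacle is the calibration of $\tau$ and the handling of the tail. A single application of the refined inequality to all of $\mathcal{G}$ (with envelope $|\epsilon|b$) would produce a term of order $Vb\,n^{1/m}(E[|\epsilon_1|^m])^{1/m}\log(Ab)$ coming from $\|\max_{i\le n}|\epsilon_i|b\|_{P,2}$, which is genuinely larger than $Vb^{m/(m-2)}\log(Ab)$ (by a factor growing like $n^{1/m}$), so the claimed bound cannot be reached that way. Peeling the tail off at exactly $\tau=b^{2/(m-2)}$ — chosen so that $\|G^{(2)}\|_{P,2}\lesssim\sqrt{E[|\epsilon_1|^m]}$ — and then bounding that part with the crude inequality, whose right-hand side depends only on $\|G^{(2)}\|_{P,2}$ rather than on an envelope maximum, is what resolves this; the bounded part pays only the harmless $b^{m/(m-2)}$ from its sup-envelope. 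A secondary technical nuisance, dealt with above via the $\sigma_1$ device, is that the problem's $\sigma$ need not satisfy $\sigma\le\|G^{(1)}\|_{P,2}$, which is a hypothesis of the refined inequality.
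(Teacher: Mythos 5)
Your proposal is correct and follows essentially the same route as the paper's proof: truncate the multipliers at $\tau=b^{2/(m-2)}$, apply a Gin\'{e}--Koltchinskii-type maximal inequality (Theorem \ref{GK06}) to the bounded part, and bound the tail part via the entropy-integral inequality (Theorem 2.14.1 of \cite{vdV-W}) together with $b^{2}E[\epsilon_1^{2}1\{|\epsilon_1|>\tau\}]\leq E[|\epsilon_1|^{m}]$. The only differences are cosmetic refinements (recentering through $\mathbb{G}_n$ rather than conditional centering given $X_i$, the explicit check that truncation preserves the VC-type bound, and the $\sigma_1=\min\{\sigma,\|G^{(1)}\|_{P,2}\}$ device to meet the hypothesis $\sigma^2\leq\|F\|_{L^2(P)}^2$), which if anything tighten steps the paper treats implicitly.
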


The proof is based on a truncation argument and maximal inequalities for uniformly bounded classes of functions developed in \cite{GK06}. We recall its version.

\begin{theorem}[\cite{GK06}]
\label{GK06}
Let $\xi_{1},\dots,\xi_{n}$ be i.i.d. random variables taking values in a measurable space $(S,\mathcal{S})$ with common distribution $P$, defined on the underlying $n$-fold product probability space.
Let $\mathcal{F}$ be a suitably measurable class of functions mapping $S$ to $\Bbb{R}$ with a measurable envelope $F$. Let $\sigma^{2}$ be a constant such that $
\sup_{f \in \mathcal{F}} \var(f) \leq \sigma^{2} \leq \| F \|_{L^{2}(P)}^{2}$. Suppose that there exist constants $A > e^{2}$ and $V \geq 2$ such that
$\sup_{Q} N(\mathcal{F}, L^{2}(Q), \varepsilon \| F \|_{L^{2}(Q)}) \leq (A/\varepsilon)^{V}$ for all $0 < \varepsilon \leq 1$. Then,
\begin{equation*}
E \left [ \left \| \sum_{i=1}^{n} \{ f(\xi_{i}) - E[ f(\xi_{1}) ]\} \right \|_{\mathcal{F}} \right] \leq C \left [ \sqrt{n \sigma^{2} V \log \frac{A \| F \|_{L^{2}(P)}}{\sigma}} + V \| F \|_{\infty} \log \frac{A \| F \|_{L^{2}(P)}}{\sigma} \right ],
\end{equation*}
where $C$ is a universal constant.
\end{theorem}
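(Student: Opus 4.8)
Write $Z:=E\big[\,\big\|\sum_{i=1}^{n}\{f(\xi_i)-E f(\xi_1)\}\big\|_{\mathcal{F}}\,\big]$ and, since $A>e^{2}$ and $\sigma\le\|F\|_{L^{2}(P)}$, set $L:=\log\big(A\|F\|_{L^{2}(P)}/\sigma\big)\ge 1$. The plan is the standard route for a ``local'' maximal inequality: symmetrization, conditional chaining against the data-dependent $L^{2}(\mathbb{P}_n)$ metric using the polynomial covering bound, and then a self-bounding argument that replaces the random chaining radius by $\sigma$. First, by the Gin\'e--Zinn symmetrization inequality, $Z\le 2W$ with $W:=E\big[\,\big\|\sum_{i=1}^{n}\varepsilon_i f(\xi_i)\big\|_{\mathcal{F}}\,\big]$, where $\varepsilon_1,\dots,\varepsilon_n$ are i.i.d.\ Rademacher signs independent of $(\xi_i)$. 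Conditionally on $(\xi_i)$, the Rademacher process $f\mapsto\sum_i\varepsilon_i f(\xi_i)$ is sub-Gaussian for the random metric $\rho(f,g)=\big(\sum_i(f-g)^2(\xi_i)\big)^{1/2}=\sqrt n\,\|f-g\|_{\mathbb{P}_n,2}$, so Dudley's entropy bound gives
$$
E_{\varepsilon}\Big\|\sum_i\varepsilon_i f(\xi_i)\Big\|_{\mathcal{F}}\;\lesssim\;\int_0^{R_n}\sqrt{\log N(\mathcal{F},\rho,t)}\;dt,\qquad R_n:=\sup_{f\in\mathcal{F}}\Big(\sum_i f(\xi_i)^2\Big)^{1/2}.
$$
Evaluating the hypothesis $\sup_Q N(\mathcal{F},L^2(Q),\varepsilon\|F\|_{Q,2})\le(A/\varepsilon)^V$ at $Q=\mathbb{P}_n$ gives $\log N(\mathcal{F},\rho,t)\le V\log(A\sqrt n\,\|F\|_{\mathbb{P}_n,2}/t)$ for $0<t\le\sqrt n\,\|F\|_{\mathbb{P}_n,2}$; substituting $t=R_n s$ and using $\int_0^1\sqrt{\log(c/s)}\,ds\lesssim\sqrt{\log c}$ yields
$$
E_{\varepsilon}\Big\|\sum_i\varepsilon_i f(\xi_i)\Big\|_{\mathcal{F}}\;\lesssim\;R_n\sqrt{V\,\log\big(A\sqrt n\,\|F\|_{\mathbb{P}_n,2}/R_n\big)}.
$$

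The decisive step is to control $R_n$. Here $R_n^2=n\sup_f\mathbb{P}_n f^2$, and a self-bounding (``peeling'') argument shows that $E\big[\sup_f\mathbb{P}_n f^2\big]\lesssim\sigma^2+\|F\|_\infty W/n$: one writes $\mathbb{P}_n f^2$ as a true-variance part of size at most $\sigma^2$ plus an empirical-process remainder over the squared class, and bounds that remainder by symmetrization and the Ledoux--Talagrand contraction principle (using that $u\mapsto u^2$ is Lipschitz with constant $2\|F\|_\infty$ on the range of the functions). Feeding this back into the chaining bound, using the Cauchy--Schwarz inequality together with $E\|F\|_{\mathbb{P}_n,2}^2=\|F\|_{L^2(P)}^2$ and the slow variation of the logarithm to collapse its argument into the deterministic quantity $A\|F\|_{L^2(P)}/\sigma$ (which forces a truncation of the entropy integral at a floor of order $\sqrt n\,\sigma$), one arrives at the self-referential inequality
$$
W\;\lesssim\;\sqrt{n\sigma^2\,VL}\;+\;\sqrt{\|F\|_\infty\,W\,VL}.
$$
Solving this quadratic in $\sqrt W$ gives $W\lesssim\sqrt{n\sigma^2 VL}+\|F\|_\infty VL$, and hence $Z\le 2W\lesssim\sqrt{n\sigma^2 V\log(A\|F\|_{L^2(P)}/\sigma)}+V\|F\|_\infty\log(A\|F\|_{L^2(P)}/\sigma)$, which is the claimed inequality.

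I expect the self-bounding step to be the main obstacle. Two features make it delicate. First, the chaining radius $\sup_f\mathbb{P}_n f^2$ is data-dependent and must be bounded back in terms of $\sigma^2$ plus a controlled fraction of $W$ itself (localization); this is precisely what forces $\sigma$ --- rather than the crude $\|F\|_{L^2(P)}$ --- into the leading term, and the resulting implicit inequality then has to be solved. Second, the bookkeeping that collapses the logarithmic factor from the random $\log(A\sqrt n\,\|F\|_{\mathbb{P}_n,2}/R_n)$ to the clean deterministic $\log(A\|F\|_{L^2(P)}/\sigma)$ --- truncating the entropy integral at a deterministic floor and absorbing the low-probability event on which $R_n$ is much smaller than $\sqrt n\,\sigma$ --- is fussy. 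If in addition one wants the sharp Massart-type constant in front of $\sqrt{n\sigma^2 VL}$ obtained by \cite{GK06}, the crude symmetrization--chaining above should be upgraded: one bounds the expected supremum of the empirical process via a peeling argument and then invokes Talagrand's concentration inequality for suprema of empirical processes. For the universal-constant version stated here, the route above suffices.
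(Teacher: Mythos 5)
You cannot be checked against an in-paper argument here: the paper states this result as a quoted ``version'' of a theorem of Gin\'e and Koltchinskii (2006) and gives no proof of it, so the only comparison available is with the standard argument in the cited literature. Your outline is exactly that standard route --- symmetrization, conditional Dudley chaining with respect to the empirical $L^{2}(\mathbb{P}_n)$ pseudometric using the uniform (VC-type) entropy bound evaluated at $Q=\mathbb{P}_n$, control of the random chaining radius $\sup_{f}\mathbb{P}_n f^{2}$ by symmetrization plus the Ledoux--Talagrand contraction principle, and then solving the implicit inequality $W\lesssim\sqrt{n\sigma^{2}VL}+\sqrt{\|F\|_{\infty}W\,VL}$ --- and it does yield the stated inequality with a universal constant (the sharp-constant version in Gin\'e--Koltchinskii additionally uses Talagrand's concentration inequality, as you note, but that is not needed for the form quoted here).

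One step of your sketch is wrong as written and needs the standard fix. You assert that $\sup_{f}\mathbb{P}_n f^{2}$ decomposes into ``a true-variance part of size at most $\sigma^{2}$'' plus an empirical-process remainder. The hypothesis bounds $\var(f)=Pf^{2}-(Pf)^{2}$, not $Pf^{2}$; since $(Pf)^{2}$ can be of order $\|F\|_{L^{2}(P)}^{2}\gg\sigma^{2}$, the deterministic part of $\mathbb{P}_n f^{2}$ is not bounded by $\sigma^{2}$, and your localization would then place $\|F\|_{L^{2}(P)}$ rather than $\sigma$ in the leading term. The remedy is to symmetrize the recentered class: in the symmetrization step one may replace $\varepsilon_i f(\xi_i)$ by $\varepsilon_i\{f(\xi_i)-Pf\}$ (this comes for free from the usual proof with an independent copy, since $f(\xi_i)-f(\xi_i')=h_f(\xi_i)-h_f(\xi_i')$ for $h_f=f-Pf$), and then run the chaining and the contraction argument over $\{f-Pf: f\in\mathcal{F}\}$, whose second moments are the variances, whose sup-norm envelope is at most $2\|F\|_{\infty}$, and whose $L^{2}(\mathbb{P}_n)$ covering numbers are controlled by the assumed uniform entropy bound up to constants (the shift by $Pf$ changes distances by at most $|Pf-Pg|\le\|f-g\|_{L^{2}(P)}$, which is absorbed by taking nets under a mixture of $\mathbb{P}_n$ and $P$). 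With that modification your decomposition, the contraction step with Lipschitz constant of order $\|F\|_{\infty}$, and the quadratic in $\sqrt{W}$ go through. The remaining bookkeeping you flag --- collapsing the random logarithm to $\log(A\|F\|_{L^{2}(P)}/\sigma)$ via monotonicity/concavity of $\delta\mapsto\delta\sqrt{\log(A/\delta)}$, $E\|F\|_{\mathbb{P}_n,2}^{2}=\|F\|_{L^{2}(P)}^{2}$, and a truncation of the chaining radius at level of order $\sqrt{n}\,\sigma$ --- is fussy but standard.
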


\section*{ Acknowledgements.} This paper was presented and first circulated in a series of lectures
given by Victor Chernozhukov at ``Stats in the Ch\^{a}teau" Statistics Summer School on ``Inverse Problems and High-Dimensional Statistics" in 2009 near Paris.  Participants, especially Xiaohong Chen, and one of several referees made numerous helpful suggestions. We also thank Bruce Hansen for extremely useful comments.

\appendix

\section{Proofs}

\subsection{Proofs of Sections \ref{Sec:Setup} and \ref{Sec:Approx}}

\begin{proof}[Proof of Proposition \ref{lemma:StabilityValues}]
Recall that $p(x)=(p_1(x),\dots,p_k(x))'$. Since $dF/d\mu$ is bounded above and away from zero on $\mathcal{X}$, and regressors $p_1(x),\dots,p_k(x)$ are orthonormal under $(\mathcal{X},\mu)$, we have
\begin{align*}
\|\gamma\|^2&=\int_{\mathcal{X}}(\gamma'p(x))^2d\mu(x)\lesssim \int_{\mathcal{X}}(\gamma'p(x))^2(dF/d\mu)(x)d\mu(x)\\
&=\int_{\mathcal{X}}(\gamma'p(x))^2dF(x)\lesssim \int_{\mathcal{X}}(\gamma'p(x))^2d\mu(x)=\|\gamma\|^2
\end{align*}
uniformly over all $\gamma\in S^{k-1}$. The asserted claim follows.
\end{proof}

\begin{proof}[Proof of Proposition \ref{lem: simple bound on l}]
Fix $f\in\G$. Let
$$
\beta_f^\star:=\arg\min_b\|f-p'b\|_{F,\infty}.
$$
Then
$$
\|r_f\|_{F,\infty}=\|f-p'\beta_f\|_{F,\infty}\leq \|f-p'\beta_f^\star\|_{F,\infty}+\|p'\beta_f^\star-p'\beta_f\|_{F,\infty}\leq c_k+\|p'\beta_f^\star-p'\beta_f\|_{F,\infty}.
$$
Further, first order conditions imply that $\beta_f=Q^{-1}E[p(x_1)f(x_1)]$, and so for any $x\in\mathcal{X}$,
\begin{align*}
p(x)'\beta_f^\star-p(x)'\beta_f&=p(x)'Q^{-1}Q\beta_f^\star-p(x)'Q^{-1}E[p(x_1)f(x_1)]\\
&=p(x)'Q^{-1}E[p(x_1)(p(x_1)'\beta_f^\star-f(x_1))].
\end{align*}
This implies that
$$
\|p'\beta_f^\star-p'\beta_f\|_{F,\infty}\leq \xi_k\|E[p(x_1)(p(x_1)'\beta_f^\star-f(x_1))]\|.
$$
Moreover, since $E[p(x_1)p(x_1)']=Q=I$, $E[p_j(x_1)(p(x_1)'\beta_f^\star-f(x_1))]$ is the coefficient on $p_j(x_1)$ of the projection of $p(x_1)'\beta_f^\star-f(x_1)$ onto $p(x_1)$, and so
$$
\|E[p(x_1)(p(x_1)'\beta_f^\star-f(x_1))]\|\leq \left(E[(p(x_1)'\beta_f^\star-f(x_1))^2]\right)^{1/2}\leq c_k.
$$
Conclude that
$$
\|r_f\|_{F,\infty}\leq c_k+\xi_kc_k=c_k(1+\xi_k),
$$
and so Condition A.3 holds with $\ell_k=1+\xi_k$.
This completes the proof of the proposition.
\end{proof}

\begin{proof}[Proof of Proposition \ref{lem: approximation theory bound}]
Fix $f\in\G$. Define $\beta_f^\star$ by
$$
\beta_f^\star:=\arg\min_b\|f-p'b\|_{F,\infty}.
$$
Note that for any functions $f_1,f_2\in\bar{\mathcal{G}}$, $\beta_{f_1+f_2}=\beta_{f_1}+\beta_{f_2}$. Therefore,
$$
\beta_{f-p'\beta_f^\star}=\beta_f-\beta_f^\star,
$$
and so we obtain
\begin{align*}
\|r_f\|_{F,\infty}&\leq \|f-p'\beta_f^\star\|_{F,\infty}+\|p'\beta_f^\star-p'\beta_f\|_{F,\infty}\\
& = \|f-p'\beta_f^\star\|_{F,\infty}+\|p'\beta_{f-p'\beta_f^\star}\|_{F,\infty} \\
&\leq\|f-p'\beta_f^\star\|_{F,\infty}+\widetilde{\ell}_k\|f-p'\beta_f^\star\|_{F,\infty}
\leq (1+\widetilde{\ell}_k)\inf_b\|f-p'b\|_{F,\infty}
\end{align*}
where on the third line we used the definition of $\widetilde{\ell}_k$. Hence,
$$
\|r_f\|_{F,\infty}\leq (1+\widetilde{\ell}_k)\inf_b\|f-p'b\|_{F,\infty}.
$$
Next,
$$
c_k\geq \sup_{f\in\mathcal{G}}\inf_b\|f-p'b\|_{F,\infty}
$$
implies that
$$
\|r_f\|_{F,\infty}\leq c_k(1+\widetilde{\ell}_k),
$$
and so Condition A.3 holds with $\ell_k=1+\widetilde{\ell}_k$. This completes the proof of the proposition.
\end{proof}

\subsection{Proofs of Section \ref{Sec:L2Rate}}

\begin{proof}[Proof of Theorem \ref{theorem: L2 rate}] We have that
$$
\|\widehat g - g\|_{F,2} \leq  \| p'\widehat \beta - p'\beta\|_{F,2} + \| p'\beta - g\|_{F,2} \leq  \|p'\widehat \beta - p'\beta\|_{F,2} + c_k
$$
where under the normalization $Q=E[p(x_i)p(x_i)']=I$ we have
$$
\|p'\widehat \beta - p'\beta\|_{F,2} = \left[\int  (\widehat \beta - \beta)' p(x) p(x)' (\widehat \beta - \beta)  dF(x) \right]^{1/2} = \| \widehat \beta - \beta \|.
$$
Further,
$$
\| \widehat \beta - \beta \| = \|  \widehat Q^{-1} \En [ p_i (\epsilon_i + r_i)]\| \leq \|  \widehat Q^{-1} \En [ p_i \epsilon_i] \| +  \|\widehat Q^{-1} \En [ p_i r_i] \|.
$$
 By the Matrix LLN (Lemma \ref{rudelson}), which is the critical step, we have that
$$
\|\widehat Q - Q\| \to_P 0  \text{\,\,  if \,} \frac{\xi_k^2 \log k}{n} \to 0.
$$
Therefore, wp $\to $ 1, all eigenvalues of $\widehat{Q}$ are bounded away from zero. Indeed, if at least one eigenvalue of $\widehat{Q}$  is strictly smaller than 1/2, then there exists a vector $a\in S^{k-1}$ such that $a'\widehat{Q}a<1/2$, and so
$$
\|\widehat{Q}-Q\|\geq |a'(\widehat{Q}-Q)a|=|a'\widehat{Q}a-a'a|=|a'\widehat{Q}a-1|>1/2.
$$
Hence, wp $\to $ 1, all eigenvalues of $\widehat{Q}$ are not smaller than 1/2. Therefore,
$$
\|  \widehat Q^{-1} \En [ p_i \epsilon_i] \| \lesssim_P \| \En [ p_i \epsilon_i] \| \lesssim_P \sqrt{k/n}
$$
where the second inequality follows from
$$
E\left[ \| \En [ p_i \epsilon_i] \|^2 \right] =  E[\epsilon_i^2 p_i'p_i/n ] =  E[\sigma^2_i p_i'p_i/n ] \lesssim E[p_i'p_i/n ] = k/n
$$
since $\sigma^2_i\leq \bar{\sigma}^2$ is bounded.
Moreover, since
$\widehat r_i: = p_i' \widehat Q^{-1} \En [p_i r_i]$ is a sample projection of $r_i$ on $p_i$,
\begin{equation}\label{eq: derivation bound 1}
\|\widehat Q^{-1/2} \En [ p_i r_i] \|^2 = \En [r_i\widehat r_i] = \En [\widehat r_i^2] \leq \En[r_i^2] \lesssim_P  E[r_i^2] \leq c_k^2,
\end{equation}
by Markov's inequality. Therefore, when $c_k \to 0$,
$$
 \|\widehat Q^{-1} \En \left[ p_i r_i\right] \| \lesssim_P  \|\widehat Q^{-1/2} \En [ p_i r_i] \| \lesssim_P c_k
$$
where the first inequality follows from all eigenvalues of $\widehat{Q}^{1/2}$ being bounded away from zero wp $\to $ 1 and the second from (\ref{eq: derivation bound 1}). This completes the proof of  (\ref{eq: l2 rate 1}).

Further, note that
\begin{align}
E\left[\|\En[p_i r_i]\|^2\right]
&=\frac{1}{n^2}E\left[\sum_{j=1}^k\Big(\sum_{i=1}^n p_j(x_i) r(x_i)\Big)^2\right]\nonumber\\
&=\frac{1}{n}E\left[\sum_{j=1}^k p_j(x_1)^2r(x_1)^2\right]\leq \left(\frac{\ell_k c_k}{\sqrt{n}}\right)^2E[\|p(x_1)\|^2]=\left(\frac{\ell_kc_k\sqrt{k}}{\sqrt{n}}\right)^2\label{eq: alternative bounds}
\end{align}
where we used $E[ p_{i} r_{i} ] = 0$. Alternatively, the first term in (\ref{eq: alternative bounds}) can be bounded from above as
$$
\frac{1}{n}E\left[\sum_{j=1}^k p_j(x_1)^2r(x_1)^2\right]\leq \frac{1}{n}E\left[ \xi_k^2 r(x_1)^2\right]\leq \frac{\xi_k^2 c_k^2}{n}.
$$
Therefore, when $c_k \not \to 0$,
$$
\|\widehat Q^{-1}\En[p_i r_i]\|  \leq  \|\widehat Q^{-1} \|  \|\En[p_i r_i]\| \lesssim_P (\ell_k c_k\sqrt{ k/n})\wedge(\xi_kc_k/\sqrt{n}),
$$
and so (\ref{eq: l2 rate 2}) follows. This completes the proof of the theorem.
\end{proof}

\subsection{Proofs of Section \ref{Sec:Pointwise}}

\begin{proof}[Proof of Lemma \ref{lemma:linearization}]
Decompose
$$
\sqrt{n}\alpha'(\widehat \beta - \beta) = \alpha' \Gn[p_i(\epsilon_i + r_i)] + \alpha'[\widehat Q^{-1} - I ]\Gn[ p_i(\epsilon_i + r_i)].
$$

We divide the proof in three steps. Steps 1 and 2 establish (\ref{eq: lin2}), the bound on $R_{1n}(\alpha)$. Step 3 proves (\ref{eq: lin4}), the bound on $R_{2n}(\alpha)$.

\textbf{Step 1.} Conditional on $X=[x_1,\ldots,x_n]$, the term
$$
\alpha'[\widehat Q^{-1} - I ]\Gn[ p_i \epsilon_i]
$$
has mean zero and variance bounded by $\bar \sigma^2\alpha'[\widehat Q^{-1} - I ] \widehat Q  [\widehat Q^{-1} - I] \alpha$. Next, as in the proof of Theorem \ref{theorem: L2 rate}, wp $\to $ 1, all eigenvalues of $\widehat{Q}$ are bounded away from zero and from above, and so
$$
\bar \sigma^2\alpha'[\widehat Q^{-1} - I ] \widehat Q  [\widehat Q^{-1} - I] \alpha \lesssim  \bar \sigma^2\|\widehat Q\| \|\widehat Q^{-1} \|^2 \| \widehat Q - I \|^2 \lesssim_P  \frac{\xi^2_k \log k}{n}
$$
where the second inequality follows from Matrix LLN (Lemma \ref{rudelson}) and $\bar{\sigma}^2\lesssim 1$.
We then conclude by Chebyshev's inequality that
$$
\alpha'[\widehat Q^{-1} - I ]\Gn[ p_i \epsilon_i]
\lesssim_P \sqrt{\frac{\xi^2_k \log k}{n}}.$$

\textbf{Step 2.} By Matrix LLN (Lemma \ref{rudelson}), $\|\widehat Q - I\|\lesssim_P(\xi_k^2\log k / n )^{1/2}$, and so
\begin{align*}
|\alpha'(\widehat Q^{-1} - I)\Gn[p_i r_i]| &\leq    \|\widehat Q^{-1} - I \| \cdot \|\Gn[p_i r_i]\| \\
&\leq   \| \widehat Q^{-1}\|\cdot \|\widehat Q - I \|\cdot  \|\Gn[p_i r_i]\|
\lesssim_P   \sqrt{\frac{\xi_k^2 \log k}{ n}} \ell_kc_k\sqrt{k},
\end{align*}
where we used the bound $\|\Gn[p_i r_i]\|\lesssim_P \ell_kc_k\sqrt{k}$ obtained in the proof of Theorem \ref{theorem: L2 rate}.
Steps 1 and 2 give the linearization result (\ref{eq: lin2}).

\textbf{Step 3.} Since $E[p_i r_i] =0$, the term
$$
R_{2n}(\alpha) = \alpha' \Gn[p_i r_i]
$$
has mean zero and variance
$$
E[(\alpha'p_i r_i)^2] \leq E[(\alpha'p_i)^2] \ell_k^2 c^2_k \leq \ell^2_k c^2_k.
$$
Thus, (\ref{eq: lin4}) follows from Chebyshev's inequality. This completes the proof of the lemma.
\end{proof}

\begin{proof}[Proof of Theorem \ref{theorem: pointwise}]
Note that (\ref{eq: pointwise normality 2}) follows by applying (\ref{eq: pointwise normality 1}) with $\alpha=p(x)/\|p(x)\|$, and (\ref{eq: pointwise normality 3}) follows directly from (\ref{eq: pointwise normality 2}). Therefore, it suffices to prove (\ref{eq: pointwise normality 1}).

Observe that for any $\alpha\in S^{k-1}$, $1\lesssim \|\alpha'\Omega^{1/2}\|$ because $1\lesssim \underline{\sigma}^2\leq \sigma_i^2$ and
\begin{equation}\label{omega bounded}
\Omega \geq \Omega_0 \geq \underline{\sigma}^2 Q^{-1}
\end{equation}
in the positive semidefinite sense.
Further, by condition (iii) of the theorem and Lemma \ref{lemma:linearization}, $R_{1n}(\alpha)=o_P(1)$ (note that we can apply Lemma \ref{lemma:linearization} because $\bar{\sigma}^2\lesssim 1$ follows from condition (i) and $\xi_k^2\log k/n\to 0$ follows from condition (iii) of the theorem).
Therefore, we can write
$$
\frac{\sqrt{n}\alpha'}{\|\alpha' \Omega^{1/2}\|} (\widehat \beta - \beta) =  \frac{\alpha'}{\|\alpha' \Omega^{1/2}\|}  \Gn[p_i (\epsilon_i + r_i)] + o_P(1)=  \sum_{i=1}^n \omega_{ni} (\epsilon_i + r_i)+o_P(1),
$$
where
$$
\omega_{ni} =\frac{\alpha'}{\|\alpha' \Omega^{1/2}\|} \frac{p_i}{\sqrt{n}}, \  \  |\omega_{ni}| \lesssim \frac{\xi_k}{\sqrt{n}},  \ \   |\epsilon_i + r_i| \leq |\epsilon_i| + \ell_k c_k.
$$
Further, it follows from (\ref{omega bounded}) that
\begin{equation}\label{omegaSigma}
n E |\omega_{ni}|^2  \leq  E[(\alpha'p_i)^2]/(\alpha'\Omega\alpha) \leq 1/\underline{\sigma}^2\lesssim 1.
\end{equation}
Now we verify Lindberg's condition for the CLT. First, by construction we have
$$
\text{ var }  \Big (\sum_{i=1}^n \omega_{ni} (\epsilon_i + r_i) \Big ) = 1.
$$
Second, for each $\delta>0$
$$
\sum_{i=1}^n E \[|\omega_{ni}|^2 (\epsilon_i + r_i)^2 1 \{  |\omega_{ni} (\epsilon_i + r_i)| > \delta \} \]\to 0,
$$
since  the left hand side is bounded by
 $$2n E\[ |\omega_{ni}|^2 \epsilon_i^21 \{ |\epsilon_i| + \ell_kc_k > \delta/|\omega_{ni}| \}\] \ \ + \ \   2 n E \[|\omega_{ni}|^2  \ell_k^2 c_k^2 1 \{ |\epsilon_i| + \ell_kc_k > \delta/|\omega_{ni}| \}\],$$
and both terms go to zero. Indeed, the first term is bounded from above for some $c>0$ by
 \begin{eqnarray*}
& & 2n E \[|\omega_{ni}|^2 E\[\epsilon_i^2 1\{ |\epsilon_i| + \ell_k c_k> c\delta \sqrt{n}/\xi_k\} |x_i\]\]  \\
& &  \quad \lesssim  n E \[|\omega_{ni}|^2 \] \cdot \sup_{x \in \X} E\[\epsilon_i^2 1\{ |\epsilon_i| + \ell_k c_k> c\delta \sqrt{n}/\xi_k\} |x_i =x\] =o(1)
 \end{eqnarray*}
where we used (\ref{omegaSigma}), the uniform integrability in the condition (i) and $c\delta \sqrt{n}/\xi_k-  \ell_k c_k \to \infty$, which follows from the condition (iii); the second term is bounded from above by
\begin{eqnarray*}
& &
2n E \[|\omega_{ni}|^2 \ell^2_k c_k^2 P\[ |\epsilon_i| +  \ell_k c_k > c\delta \sqrt{n}/\xi_k |x_i\]\]  \\
& & \quad \lesssim n E \[|\omega_{ni}|^2 \ell^2_k c_k^2\] \cdot  \sup_{x \in \X} P\[ |\epsilon_i| + \ell_k c_k> c\delta \sqrt{n}/\xi_k |x_i=x\]  \\
& & \quad \lesssim \ell^2_k c_k^2 \cdot \frac{\bar \sigma^2}{[ c\delta \sqrt{n}/\xi_k -  \ell_k c_k]^2}  = o(1)
\end{eqnarray*}
by Chebyshev's inequality where we used (\ref{omegaSigma}), $c\delta \sqrt{n}/\xi_k- \ell_k c_k \to \infty$, and  $\ell_k c_k =o( \delta \sqrt{n}/\xi_k)$. \end{proof}

\subsection{Proofs of Section \ref{Sec:Uniform}}

\begin{proof}[Proof of Lemma \ref{lemma: uniform linearization}]  Decompose
$$
\sqrt{n}\alpha(x)'(\widehat \beta - \beta) = \alpha(x)' \mathbb{G}_{n}[p_i(\epsilon_i + r_i)] + \alpha(x)'[\widehat Q^{-1} - I ]\mathbb{G}_{n}[ p_i(\epsilon_i + r_i)].
$$
We divide the proof in three steps. Steps 1 and 2 establish (\ref{eq: lin2U}), the bound on $R_{1n}(\alpha(x))$. Step 3 proves (\ref{eq: lin4U}), the bound on $R_{2n}(\alpha(x))$.

\textbf{Step 1.} Here we show that
\begin{equation}\label{eq: A.26}
\sup_{x \in \mathcal{X}} \left| \alpha(x)'[\widehat Q^{-1} - I ]\mathbb{G}_{n}[  p_i \epsilon_{i} ] \right| \lesssim_{P} n^{1/m} \sqrt{\frac{\xi_{k}^{2} \log^{2} k}{n}}.
\end{equation}

Conditional on the data, let $T := \{ t = (t_1,\dots,t_{n}) \in \mathbb{R}^{n} : t_{i} = \alpha(x)'(\widehat Q^{-1} - I)p_i\epsilon_i, x \in \mathcal{X} \}$. Define the norm $\| \cdot \|_{n,2}$ on $\mathbb{R}^{n}$ by $\| t \|_{n,2}^{2} = n^{-1} \sum_{i=1}^{n} t_{i}^{2}$. Recall that for $\varepsilon > 0$, an $\varepsilon$-net of a normed space $(T,\|\cdot\|_{n,2})$ is a subset $T_{\varepsilon}$ of $T$ such that for every $t \in T$ there exists a point $t_{\varepsilon} \in T_{\varepsilon}$ with $\|t-t_{\varepsilon}\|_{n,2} < \varepsilon$.  The covering number $N(T,\|\cdot\|_{n,2},\varepsilon)$ of $T$ is the infimum of the cardinality of $\varepsilon$-nets of $T$.

Let $\eta_{1},\dots,\eta_{n}$ be independent Rademacher random variables ($P(\eta_1=1)=P(\eta_1=-1)=1/2$) that are independent of the data, and denote $\eta=(\eta_1,\dots,\eta_n)$. Also, let $E_\eta[\cdot]$ denote the expectation with respect to the distribution of $\eta$. Then by Dudley's inequality \cite[]{Dudley1967},
\begin{equation*}
E_{\eta} \left[ \sup_{x \in \mathcal{X}} | \alpha(x)'[\widehat Q^{-1} - I ]\mathbb{G}_{n}[ \eta_{i}  p_i \epsilon_{i} ] |\right] \lesssim \int_{0}^{\theta} \sqrt{\log N( T, \| \cdot \|_{n,2}, \varepsilon )} d \varepsilon,
\end{equation*}
where
$$
\theta := 2 \sup_{t  \in T} \| t \|_{n,2} = 2 \sup_{x \in \mathcal{X}} \left(\En[(\alpha(x)'(\widehat Q^{-1} - I)p_i\epsilon_i )^2]\right)^{1/2}\leq 2 \max_{1 \leq i \leq n} | \epsilon_{i} | \| \widehat{Q}^{-1} - I \| \| \widehat{Q} \|^{1/2}.
$$
Since for any $x, \widetilde x \in \mathcal{X}$,
\begin{align*}
&\left( \En[(\alpha(x)'(\widehat Q^{-1} - I)p_i\epsilon_i - \alpha(\widetilde x)'(\widehat Q^{-1} - I)p_i\epsilon_i)^2] \right)^{1/2}\\
&\qquad \leq \max_{1\leq i\leq n}|\epsilon_i| \| \alpha(x) - \alpha(\widetilde x) \|  \| \widehat{Q}^{-1}-I \|\|\widehat{Q}\|^{1/2} \\
&\qquad \leq \xi^L_k  \max_{1 \leq i \leq n} | \epsilon_{i} | \| \widehat{Q}^{-1}-I \|\|\widehat{Q}\|^{1/2} \| x - \widetilde x \|,
\end{align*}
we have for some $C>0$,
\begin{equation*}
 N( T, \| \cdot \|_{n,2}, \varepsilon ) \leq \left ( \frac{C \xi^L_k \max_{1 \leq i \leq n} | \epsilon_{i} |  | \| \widehat{Q}^{-1}-I \|\|\widehat{Q}\|^{1/2}}{\varepsilon} \right )^{d}.
\end{equation*}
Thus we have
\begin{equation*}
 \int_{0}^{\theta} \sqrt{\log N( T, \| \cdot \|_{n,2}, \varepsilon )} d \varepsilon \leq \max_{1 \leq i \leq n} | \epsilon_{i} | \| \widehat{Q}^{-1}-I \| \|\widehat{Q}\|^{1/2}\int_{0}^{2 } \sqrt{ d \log (C \xi^L_k/\varepsilon}) d \varepsilon.
\end{equation*}
By A.4, we have $E[\max_{1 \leq i \leq n} | \epsilon_{i} | \mid X] \lesssim_{P} n^{1/m}$ where $X=(x_1,\dots,x_n)$. In addition, note that $\xi_k^{2m/(m-2)}\log k/n\lesssim 1$ for $m>2$ implies that $\xi_k^2\log k/n\to 0$. Therefore, we have $\| \widehat{Q}^{-1}-I \| \lesssim_{P} ( \xi_{k}^{2} \log k/n)^{1/2}$ and $\| \widehat{Q} \| \lesssim_{P} 1$. Hence, it follows from $\log \xi^L_k \lesssim  \log k$ that
\begin{align*}
E\left[ \sup_{x \in \mathcal{X}} | \alpha(x)'[\widehat Q^{-1} - I ]\mathbb{G}_{n}[  p_i \epsilon_{i} ] | \mid X\right] &\leq 2 E\left[ E_{\eta} [ \sup_{x \in \mathcal{X}} | \alpha(x)'[\widehat Q^{-1} - I ]\mathbb{G}_{n}[ \eta_{i}  p_i \epsilon_{i} ] | ] \mid X\right] \\
&\lesssim_{P} n^{1/m} \sqrt{\frac{\xi_{k}^{2} \log^{2} k}{n}},
\end{align*}
where the first line is due to the symmetrization inequality. Thus, (\ref{eq: A.26}) follows.

\textbf{Step 2.} Observe that
\begin{equation*}
\sup_{x \in \mathcal{X}}|\alpha(x)'(\widehat Q^{-1} - I)\mathbb{G}_{n} [p_i r_i]|  \leq   \| \widehat Q^{-1} - I \|\cdot \|\mathbb{G}_{n} [p_i r_i]\|\lesssim_P \sqrt{\frac{\xi_k^2\log k}{n}}\ell_kc_k\sqrt{k}
\end{equation*}
where the second inequality was shown in the proof of Lemma \ref{lemma:linearization}.
Now, Steps 1 and 2 give the linearizarion result (\ref{eq: lin2U}).

\textbf{Step 3.} We wish to bound $\sup_{x \in \mathcal{X}} |\alpha(x)' \mathbb{G}_{n} [ p_{i} r_{i} ]|$. We use Theorem \ref{GK06}.
Consider the class of functions
\begin{equation*}
\mathcal{F} := \{ \alpha(x)'p(\cdot) r(\cdot) : x \in \mathcal{X} \}.
\end{equation*}
Then, $| \alpha(x)'p(\cdot) r(\cdot) | \leq \ell_kc_{k} \xi_{k}$, $E[(\alpha(x)'p(x_i)r(x_i))^2]\leq (\ell_kc_k)^2$, and for any $x, \widetilde x \in \mathcal{X}$,
\begin{equation*}
| \alpha(x)'p(\cdot) r(\cdot) - \alpha(\widetilde x)'p(\cdot) r(\cdot) | \leq \ell_kc_{k} \xi^L_k \xi_{k} \| x - \widetilde x \|,
\end{equation*}
so that for some $C>0$,
\begin{equation*}
\sup_{Q} N(\mathcal{F}, L^{2}(Q), \varepsilon \ell_kc_{k} \xi_{k} ) \leq \left ( \frac{C  \xi^L_k}{\varepsilon} \right)^{d}.
\end{equation*}
Thus, using conditions (ii) and (iii) of A.5, we have by Theorem \ref{GK06} that
\begin{equation*}
E\left[ \sup_{x \in \mathcal{X}} |\alpha(x)' \mathbb{G}_{n} [ p_{i} r_{i} ]| \right] \lesssim \ell_kc_{k} \sqrt{\log k} + \ell_kc_{k} \frac{\xi_{k} \log k}{\sqrt{n}} \lesssim \ell_kc_{k} \sqrt{\log k},
\end{equation*}
where we have used the fact that
\begin{equation*}
\frac{\xi_{k} \log k}{\sqrt{n}} = \sqrt{\log k} \sqrt{\frac{\xi_{k}^{2} \log k}{n}} = o(\sqrt{\log k}).
\end{equation*}
Therefore, we have by Markov's inequality
\begin{equation}\label{Eq:boundpr}
\sup_{x \in \mathcal{X}} |\alpha(x)' \mathbb{G}_{n} [ p_{i} r_{i} ]| \lesssim_{P}  \ell_kc_{k}\sqrt{\log k}.
\end{equation}
So, the linearization result (\ref{eq: lin4U}) follows.
This completes the proof.
\end{proof}

\begin{proof}[Proof of Theorem \ref{theorem: uniform rate}]
Note that (\ref{eq: uniform rate 2}) and (\ref{eq: uniform rate 3}) follow from (\ref{eq: 4.13}) and Lemma \ref{lemma: uniform linearization}. Therefore, it suffices to prove (\ref{eq: 4.13}), and so we wish to bound $\sup_{x \in \mathcal{X}}| \alpha(x)' \mathbb{G}_{n}[p_{i} \epsilon_{i}]|$. To this end, we use Proposition \ref{prop1}. Consider the class of functions
\begin{equation*}
\mathcal{G} := \{ (\epsilon,x) \mapsto \epsilon \alpha(v)'p(x) : v \in \mathcal{X} \}.
\end{equation*}
Then, $| \alpha(v)'p(x_{i}) | \leq  \xi_{k}$, $\var(\alpha(v)'p(x_{i})) = 1$ and for any $v, \widetilde v \in \mathcal{X}$,
\begin{equation*}
| \epsilon \alpha(v)'p(x) - \epsilon \alpha(\widetilde v)'p(x) | \leq | \epsilon | \xi^L_k \xi_{k} \| v - \widetilde v \|.
\end{equation*}
Thus, taking $G(\epsilon,x) := | \epsilon | \xi_{k}$, we have
\begin{equation*}
\sup_{Q} N(\mathcal{G},L^{2}(Q),\varepsilon \| G \|_{L^{2}(Q)}) \leq \left ( \frac{C \xi^L_k}{\varepsilon} \right)^{d}.
\end{equation*}
Therefore, by Proposition \ref{prop1}, we have
\begin{equation}\label{Eq:bounder}
E\left[ \sup_{x \in \mathcal{X}}| \alpha(x)' \mathbb{G}_{n}[p_{i} \epsilon_{i}]| \right] \lesssim \sqrt{\log k} +  \frac{ \xi_{k}^{m/(m-2)} \log k}{\sqrt{n}} \lesssim \sqrt{\log k},
\end{equation}
where we have used the following inequality
\begin{equation*}
\frac{ \xi_{k}^{m/(m-2)} \log k}{\sqrt{n}} = \sqrt{\log k} \cdot \sqrt{\frac{\xi_{k}^{2m/(m-2)} \log k}{n}} \lesssim \sqrt{\log k}.
\end{equation*}
This completes the proof.
\end{proof}

\begin{proof}[Proof of Theorem \ref{theorem: uniform normality}] The proof follows similarly to that in  \cite{CLR2008}.
We shall apply Yurinskii's
coupling (see Theorem 10 in \cite{Pollard2002}):

Let $\zeta_1,\dots,\zeta_n$ be independent $k$-vectors with
$E[\zeta_i]= 0$ for each $i$, and $\Delta := \sum_{i=1}^n E \|\zeta_i\|^3$ finite. Let $S$ denote
denote a copy of $\zeta_1 +\cdots + \zeta_n$ on a sufficiently rich probability space $(\Omega, \mathcal{A}, P)$. For each $\delta>0$ there exists a random vector $T$ in this space with a $N(0, \text{var}(S))$ distribution such that
$$
P\{ \| S- T\| > 3 \delta \} \leq C_0 B \left( 1  + \frac{|\log (1/B)|}{k}   \right) \text{ where }
B:= \Delta k \delta^{-3},
$$
for some universal constant $C_0$.

In order to apply the coupling, consider  a copy of the first order approximation to our estimator on a suitably rich probability space
$$
\frac{1}{\sqrt{n}} \sum_{i=1}^n \zeta_i, \ \ \zeta_i = \Omega^{-1/2} p_i (\epsilon_i + r_i).
$$
When $\bar{R}_{2n}= o_P(a_n^{-1})$, a similar argument can be used with $\zeta_i = \Omega^{-1/2} p_i (\epsilon_i + r_i)$ replaced by $\zeta_i = \Omega^{-1/2} p_i \epsilon_i$.
As in the proof of Theorem \ref{theorem: pointwise}, all eigenvalues of $\Omega$ are bounded away from zero. Therefore,
\begin{align*}
E\|\zeta_i\|^3  & \lesssim  E[ \| p_i (\epsilon_i + r_i)\|^3] \\
& \lesssim    E [\| p_i\|^3 (|\epsilon_i|^3 + |r_i|^3)]  \\
& \lesssim   E[\|p_i\|^3] (1 + \ell_k^3 c_k^3) \\
& \lesssim   E[\|p_i\|^2 ]\xi_k  (1 + \ell_k^3 c_k^3) \\
& \lesssim   k \xi_k  (1 + \ell_k^3 c_k^3)
\end{align*}
where we used the assumption that $\sup_{x\in\mathcal{X}}E[|\epsilon_i|^3|x_i=x]\lesssim 1$.
Therefore, by Yurinskii's coupling, for each $\delta>0$,
\begin{align*}
 P \left \{ \left\| \frac{\sum_{i=1}^n \zeta_i}{\sqrt{n}} - \mathcal{N}_k \right\|  \geq 3 \delta a_n^{-1} \right\}
&  \lesssim  \frac{ n k^2 \xi_k  (1 + \ell_k^3 c_k^3)}{(\delta a_n^{-1} \sqrt{n} )^3} \left( 1 + \frac{\log( k^2 \xi_k  (1 + \ell_k^3 c_k^3))}{k}\right) \\
&  \lesssim \frac{ a_n^3  k^2 \xi_k  (1 + \ell_k^3 c_k^3)  }{\delta^3 n^{1/2}} \left( 1 + \frac{\log n}{k} \right) \to 0
\end{align*}
because
$a_n^6  k^4 \xi^2_k  (1 + \ell_k^3 c_k^3)^2 \log^2 n/n  \to 0$.

Hence, using (\ref{eq: lin1U}) and (\ref{eq: lin2U}), we obtain
$$
\| \sqrt{n}\alpha(x)' ( \widehat \beta - \beta) -\alpha(x)' \Omega^{1/2}\mathcal{N}_k \|  \leq  \left\| \frac{1}{\sqrt{n}} \sum_{i=1}^n \alpha(x)'\Omega^{1/2}\zeta_i - \alpha(x)'\Omega^{1/2}\mathcal{N}_k \right\| +  \bar{R}_{1n}=o_P(a_n^{-1})
$$
uniformly over $x\in\mathcal{X}$. Since $\|\alpha(x)'\Omega^{1/2}\|$ is bounded from below uniformly over $x\in\mathcal{X}$, we conclude that (\ref{eq: strong approximation process 1}) holds, and (\ref{eq: strong approximation process 2}) is a direct consequence of (\ref{eq: strong approximation process 1}).

Further, under the assumption that $\sup_{x\in\mathcal{X}}n^{1/2}|r(x)|/\|s(x)\|=o_P(a_n^{-1})$,
$$
\frac{\sqrt{n}p(x)'(\widehat\beta-\beta)}{\|s(x)\|}-\frac{\sqrt{n}(\widehat{g}(x)-g(x))}{\|s(x)\|}=o_P(a_n^{-1}),
$$
so that (\ref{eq: strong approximation process 3}) follows. This completes the proof of the theorem.
\end{proof}

\begin{proof}[Proof of Theorem \ref{Thm:MainBootstrap}]
Note that $\widehat \beta^b$ solves the least squares  problem for the rescaled data $\{(\sqrt{h_i}y_i,\sqrt{h_i}p_i): i=1,\dots,n\}$. The weight $h_i$ is independent of $(y_i,p_i)$, $E[h_i]=1$, $E[h_i^2]=1$, $E[h_i^{m/2}]\lesssim 1$, and $\max_{1\leq i \leq n}h_i \lesssim_P \log n$. Thus, considering the model
$$
\sqrt{h}_i y_i=(\sqrt{h}_i p_i)'\beta +\sqrt{h_i} r_i +\sqrt{h}_i \epsilon_i
$$
allows us to extend all results from $\widehat \beta$ to $\widehat \beta^b$ replacing $\xi_k$ by $\xi_k^b = \xi_k(\log n)^{1/2}$ and $\ell_k c_k$ by $\ell_kc_k(\log n)^{1/2}$ and noting that $E[\max_{1\leq i\leq n}|\sqrt{h}_i\epsilon_i||X]\lesssim_P n^{1/m}(\log n)^{1/2}$. Also, since $\xi_k\geq k^{1/2}$, condition $\xi_k^{2m/(m-2)}\log k/n\lesssim 1$ assumed in A.5 implies that $\log k\lesssim \log n$.

Now, we apply Lemma \ref{lemma: uniform linearization} to the original problem (\ref{eq: original problem}) and to the weighted problem (\ref{eq: weighted bootstrap problem}). Then
\begin{align*}
\sqrt{n} \alpha(x)'\left(\widehat \beta^b -
\widehat \beta\right) &  = \sqrt{n}\alpha(x)' \left(\widehat \beta^b - \beta\right) + \sqrt{n} \alpha(x)'\left(\beta - \widehat \beta\right) \\
&  = \alpha(x)'\Gn[(h_i-1)p_i(\epsilon_i+r_i)] + R_{1n}^b(\alpha(x))
\end{align*}
where
$$
R_{1n}^b(\alpha(x)) \lesssim_P \sqrt{\frac{\xi_k^2\log^3n}{n}}(n^{1/m}\sqrt{\log n}+\sqrt{k}\ell_kc_k)
$$
uniformly over $x\in\mathcal{X}$, and so (\ref{eq: weighting bootstrap approximation 1}) follows.

Further, (\ref{eq: weighted bootstrap approximation 2}) follows similarly to Theorem \ref{theorem: uniform normality} by applying Yurinskii's
coupling for the weighted process with weights $v_i = h_i-1$ so that $E[v_i^2]=1$ and $E[|v_i|^3]\lesssim 1$. Thus there is a Gaussian random vector $\mathcal{N}_k\sim N(0,I_k)$  such that
\begin{equation}\label{eq: yurinski for bootstrap}
\left\| \frac{\Omega^{-1/2}}{\sqrt{n}}\sum_{i=1}^n(h_i-1)p_i(\epsilon_i+r_i) - \mathcal{N}_k   \right\| =o_P(a_n^{-1}).
\end{equation}
Combining (\ref{eq: yurinski for bootstrap}) with (\ref{eq: weighting bootstrap approximation 1}) yields (\ref{eq: weighted bootstrap approximation 2}) by the triangle inequality as in the proof of Theorem \ref{theorem: uniform normality}, and (\ref{eq: weighted bootstrap approximation 3}) follows from (\ref{eq: weighted bootstrap approximation 2}).

Note also that the results continue to hold in $P$-probability if we replace $P$ by $P^*(\cdot|D)$, since $B_n \lesssim_P 1$ implies that $B_n \lesssim_{P^*} 1$. Indeed, the first relation means that $P(|B_n| > \ell_n)= o(1)$ for any $\ell_n \to \infty$,
while the second means that $P^*(|B_n| > \ell_n)= o_{P}(1)$ for any $\ell_n \to \infty$. But the second clearly follows from the first by Markov inequality because $E[P^*(|B_n| > \ell_n)] =  P(|B_n| > \ell_n)= o(1)$.
\end{proof}

\begin{proof}[Proof of Theorem \ref{thm:Matrices}]
Note that it follows from $\bar{R}_{2n}\lesssim (\log k)^{1/2}$ that $\ell_kc_k\lesssim 1$ (see the definition of $\bar{R}_{2n}$ in (\ref{eq: lin4U})). Therefore, $\|\Sigma\|\lesssim (1+(\ell_kc_k)^2)\|Q\|\lesssim 1$. In addition, it follows from Condition A.4 that $v_n\lesssim n^{1/m}$, and so $\bar{R}_{1n}\lesssim (\log k)^{1/2}$ implies that
$$
(v_n \vee 1+\ell_kc_k)\sqrt{\frac{\xi_k^2\log k}{n}} \to 0.
$$
Further, the first result follows from the Markov inequality and Matrix LLN (Lemma \ref{rudelson}), which shows that $E[\|\widehat Q - Q\|] \lesssim (\xi_k^2\log k /n)^{1/2}\to 0$.

To establish the second result, we note that
 \begin{equation}\label{Eq:M01}\widehat \Sigma - \Sigma = \En[(\widehat\epsilon_i^2-\{\epsilon_i+r_i\}^2) p_ip_i' ] + \En[\{\epsilon_i+r_i\}^2 p_ip_i' ]-\Sigma.\end{equation}
The first term on the right hand side of (\ref{Eq:M01}) satisfies
\begin{align*}
&  \| \En [ (\widehat \epsilon_i^2 - \{\epsilon_i+r_i\}^2) p_ip_i']\|  \leq   \| \En[ \{ p_i'(\widehat \beta - \beta)\}^2 p_i p_i'] \|  + 2 \| \En[ (\epsilon_i+r_i) p_i'(\widehat \beta - \beta) p_i p_i'] \| \\ 
&  \qquad \leq   \max_{1\leq i\leq n} |p_i'(\widehat\beta - \beta)|^2 \| \En [p_ip_i'] \| +  \max_{1\leq i \leq n } (|\epsilon_i|+|r_i|) \max_{1\leq i\leq n} |p_i'(\widehat\beta - \beta)|  \|  \En [p_i p_i'] \| \\
& \qquad \lesssim_P \|\widehat Q\| \frac{\xi_k^2(\sqrt{\log k} + \bar{R}_{1n} + \bar{R}_{2n})^2}{n}+(v_n \vee 1+\ell_kc_k)\|\widehat Q\|\frac{\xi_k(\sqrt{\log k} + \bar{R}_{1n} + \bar{R}_{2n})}{\sqrt{n}}
\end{align*}
since  $\max_{1\leq i\leq p} |p_i'(\widehat\beta - \beta)|^2\lesssim_P \xi_k^2 (\sqrt{\log k} + \bar{R}_{1n}+\bar{R}_{2n})^2/n$ by Theorem \ref{theorem: uniform rate},  $\max_{1\leq i\leq n}|r_i| \leq \ell_kc_k$, and
$\max_{1\leq i \leq n } |\epsilon_i|^2\lesssim_P v_n^2$ by Markov's inequality. Therefore,
$$
\| \En [ (\widehat \epsilon_i^2 - \{\epsilon_i+r_i\}^2) p_ip_i']\|\lesssim_P (v_n \vee 1+\ell_kc_k)\sqrt{\frac{\xi_k^2\log k}{n}}
$$
because $\bar{R}_{1n}+\bar{R}_{2n}\lesssim (\log k)^{1/2}$, $\|\widehat{Q}\| \lesssim_{\mathrm{P}} 1$  by the first result, $\xi_k^2\log k/n\to 0$, and $v_n \vee 1+\ell_kc_k$ is bounded away from zero.

To control the second term in (\ref{Eq:M01}), let $\eta_1,\dots,\eta_n$ be a sequence of independent Rademacher random variables ($P(\eta_1=1)=P(\eta_1=-1)=1/2$) that are independent of the data. Then for $\eta=(\eta_1,\dots,\eta_n)$,
\begin{align*}
 &E\left[\|\En[\{\epsilon_i+r_i\}^2 p_ip_i' ]-\Sigma\|\right]\\
 & \qquad \lesssim E\left[E_\eta\left[ \|\En[ \eta_i\{\epsilon_i+r_i\}^2 p_ip_i' ]\|\right]\right]\\
& \qquad \lesssim \sqrt{\frac{\log k}{n}}E\left[ \left(\|\En[\{\epsilon_i+r_i\}^4\|p_i\|^2p_ip_i']\|\right)^{1/2}\right]\\
 & \qquad \leq \sqrt{\frac{\xi_k^2\log k}{n}}E\left[\max_{1\leq i\leq n}|\epsilon_i+r_i| \left(\|\En[\{\epsilon_i+r_i\}^2p_ip_i']\|\right)^{1/2}\right]\\
 & \qquad \leq \sqrt{\frac{\xi_k^2\log k}{n}} \left(E\left[\max_{1\leq i\leq n}|\epsilon_i+r_i|^2\right]\right)^{1/2}\left(E\left[\|\En[\{\epsilon_i+r_i\}^2p_ip_i']\|\right]\right)^{1/2}
\end{align*}
where the first inequality holds by Symmetrization Lemma (see Lemma 2.3.6 in \cite{vdV-W}), the second by Khinchin's inequality (Lemma \ref{lem: Khinchin}), the third by $\max_{1\leq i\leq n}\|p_i\|\leq \xi_k$, and the fourth by the Cauchy-Schwartz inequality.

Since for any positive numbers $a$, $b$, and $R$, $a \leq R(a+b)^{1/2}$ implies $a\leq R^2+R\sqrt{b}$, the expression above using the triangle inequality yields
$$
E\left[\|\En[\{\epsilon_i+r_i\}^2 p_i p_i' ]-\Sigma\|\right] \lesssim \frac{\xi_k^2\log k}{n}(v_n^2 + \ell_k^2c_k^2) + \left(\frac{\xi_k^2\log k}{n}\{v_n^2 + \ell_k^2c_k^2\}\right)^{1/2}\|\Sigma\|^{1/2},
$$
and so
$$
E\left[\|\En[\{\epsilon_i+r_i\}^2 p_ip_i' ]-\Sigma\|\right]\lesssim (v_n \vee 1+\ell_kc_k)\sqrt{\frac{\xi_k^2\log k}{n}}
$$
because $\|\Sigma\|\lesssim 1$ and $(v_n^2+\ell_k^2c_k^2)\xi_k^2\log k/n \to 0$.
Now, the second result follows from Markov's inequality.

Finally, we have
$$
\|\widehat{\Omega}-\Omega\|\lesssim \|(\widehat{Q}^{-1}-Q^{-1})\widehat{\Sigma}\widehat{Q}^{-1}\|+\|Q^{-1}(\widehat{\Sigma}-\Sigma)\widehat{Q}^{-1}\|+\|Q^{-1}\Sigma(\widehat{Q}^{-1}-Q^{-1})\|=o_P(1/a_n)
$$
whenever $\|\widehat{Q}-Q\|=o_P(1/a_n)$ and $\|\widehat{\Sigma}-\Sigma\|=o_P(1/a_n)$ because eigenvalues of both $Q$ and $\Sigma$ are bounded away from zero and from above. We can set $a_n=(v_n \vee 1+\ell_kc_k)(\xi_k^2\log k/n)^{1/2}$. This gives the third result of the theorem and completes the proof.
\end{proof}

\subsection{Proofs of Section \ref{Sec:LinearFunctionals}}
\begin{proof}[Proof of Lemma \ref{lem: variance consistency}]
As in the proof of Theorem \ref{theorem: pointwise}, all eigenvalues of $\Omega$ are bounded away from zero. Therefore,
\begin{equation}\label{eq: sigmas}
\left|\frac{\widehat{\sigma}_\theta(w)}{\sigma_\theta(w)}-1\right|
\leq \left|\frac{\widehat{\sigma}_\theta(w)^2}{\sigma_\theta(w)^2}-1\right|
= \frac{\|\ell_\theta(w)^\prime(\widehat{\Omega}-\Omega)\ell_\theta(w)\|}{\|\ell_\theta(w)^\prime\Omega \ell_\theta(w)\|}\lesssim_P\|\widehat{\Omega}-\Omega\|.
\end{equation}
In addition, by Theorem \ref{thm:Matrices},
\begin{equation}\label{eq: A.35}
\|\widehat{\Omega}-\Omega\|\lesssim_P (v_n \vee 1+\ell_kc_k)\sqrt{\frac{\xi_k^2\log k}{n}}=o(1).
\end{equation}
Combining (\ref{eq: sigmas}) and (\ref{eq: A.35}) gives the asserted claim.
\end{proof}

\subsection{Proofs of Section \ref{Sec:LinearPointwise}} 
\begin{proof}[Proof of Theorem \ref{theorem: pointwise rate}]
Fix $w\in\mathcal{I}$. Denote $\alpha:=\ell_\theta(w)/\|\ell_\theta(w)\|$. Then
\begin{align*}
| \widehat \theta(w) -  \theta(w)| &\leq | \ell_\theta(w)'(\widehat \beta -  \beta)| + |r_\theta(w)| \\
& \leq |\ell_\theta(w)'\mathbb{G}_n[p_i\epsilon_i]|/\sqrt{n} + \|\ell_\theta(w)\|\left(|R_{1n}(\alpha)|+|R_{2n}(\alpha)|\right)/\sqrt{n} + o(\|\ell_\theta(w)\|/\sqrt{n})
\end{align*}
where the second line follows from  Lemma \ref{lemma:linearization} and condition (i).
Next, note that by Lemma \ref{lemma:linearization},
$$
|R_{1n}(\alpha)|+|R_{2n}(\alpha)| \lesssim_P \sqrt{\frac{\xi_k^2\log k}{n}}(1+\sqrt{k}\ell_kc_k) + \ell_kc_k=o(1)
$$
where the last conclusion holds from conditions (iii) and (iv).
Finally, condition (ii) implies that
$$
E[|\ell_\theta(w)'\mathbb{G}_n[p_i\epsilon_i]|^2]\lesssim \|\ell_\theta(w)\|^2\bar{\sigma}^2\|Q\| \lesssim \|\ell_\theta(w)\|^2,
$$
and so the result follows by applying Chebyshev's inequality.
\end{proof}

\begin{proof}[Proof of Theorem \ref{Thm:DistributionsInferentialpointwise}] 

Under our conditions, all eigenvalues of $\Omega$ are bounded away from zero. Therefore,
$$
\frac{r_{\theta}(w)}{\widehat{\sigma}_{\theta}(w)}\lesssim_P \frac{r_{\theta}(w)}{\sigma_{\theta}(w)}\lesssim \frac{\sqrt{n}r_{\theta}(w)}{\|\ell_{\theta}(w)\|}\to 0
$$
where the first inequality follows from Lemma \ref{lem: variance consistency}. In addition, by Theorem \ref{theorem: pointwise},
$$
\frac{\ell_{\theta}(w)'(\widehat{\beta}-\beta)}{\sigma_{\theta}(w)}\to_d N(0,1).
$$
Hence,
$$
t(w)=\frac{\ell_{\theta}(w)'(\widehat{\beta}-\beta)}{\widehat{\sigma}_{\theta}(w)}-\frac{r_{\theta}(w)}{\widehat{\sigma}_{\theta}(w)}=\frac{\ell_{\theta}(w)'(\widehat{\beta}-\beta)}{(1+o_P(1))\sigma_{\theta}(w)}+o_P(1)\to_d N(0,1)
$$
by Slutsky's lemma. This completes the proof of the theorem.
\end{proof}

\subsection{Proofs of Section \ref{Sec:LinearUniform}} 
\begin{proof}[Proof of Lemma \ref{lem: uniform linearization functionals}]
By the triangle inequality,
\begin{align*}
\left\|\frac{\ell_{\theta}(w_1)}{\|\ell_{\theta}(w_1)\|}-\frac{\ell_{\theta}(w_2)}{\|\ell_{\theta}(w_2)\|}\right\|&\leq \frac{\|\ell_{\theta}(w_1)-\ell_{\theta}(w_2)\|}{\|\ell_{\theta}(w_1)\|}+\|\ell_{\theta}(w_2)\|\left|\frac{1}{\|\ell_{\theta}(w_1)\|}-\frac{1}{\|\ell_{\theta}(w_2)\|}\right|\\
&\leq \frac{2\|\ell_{\theta}(w_1)-\ell_{\theta}(w_2)\|}{\|\ell_{\theta}(w_1)\|}\lesssim \xi_{k,\theta}^L\|w_1-w_2\|
\end{align*}
uniformly over $w_1,w_2\in\mathcal{I}$ where the last inequality follows from the definition of $\xi_{k,\theta}^L$ and the condition that $1/\|\ell_{\theta}(w)\|\lesssim 1$ uniformly over $w\in\mathcal{I}$. Therefore, the proof follows from the same arguments as those given for Lemma \ref{lemma: uniform linearization}.
\end{proof}

\begin{proof}[Proof of Theorem \ref{theorem: Linearuniform rate}]
Given discussion in the proof of Lemma \ref{lem: uniform linearization functionals}, (\ref{eq: uniform linearization functional 1}) follows from the same arguments as those used for Theorem \ref{theorem: uniform rate}, equation (\ref{eq: 4.13}).

Now we prove (\ref{eq: uniform linearization functional 2}). By the triangle inequality,
\begin{equation}\label{eq: triangle inequality 9}
\sup_{w\in \I} | \widehat \theta(w) - \theta(w) |  \leq \sup_{w\in \I}| \ell_{\theta}(w)'(\widehat \beta - \beta)| + \sup_{w\in \I}|r_\theta(w)|.
\end{equation}
Further,
\begin{equation}\label{eq: residual 9}
\sup_{w\in\mathcal{I}}|r_\theta(w)|\leq \sup_{w\in\mathcal{I}}\frac{|r_n(w)|}{\|\ell_{\theta}(w)\|}\sup_{w\in\mathcal{I}}\|\ell_{\theta}(w)\|\lesssim \sqrt{\frac{\xi_{k,\theta}^2\log k}{n}}
\end{equation}
by the condition (ii) and the definition of $\xi_{k,\theta}$.
In addition, by Lemma \ref{lem: uniform linearization functionals} and (\ref{eq: uniform linearization functional 1}),
\begin{align}
\sup_{w\in\mathcal{I}} | \ell_{\theta}(w)'(\widehat \beta - \beta)| & \lesssim_P\frac{1}{\sqrt{n}}\left(\left|\sup_{w\in\I}\alpha_{\theta}(w)'\mathbb{G}_n[p_i\epsilon_i]\right|+\bar{R}_{1n}+\bar{R}_{2n}\right) \sup_{w\in\mathcal{I}}\|\ell_{\theta}(w)\|\label{Eq:AuxUnif1}\\
&\lesssim_P \sqrt{\frac{\log k}{n}}\sup_{w\in\mathcal{I}}\|\ell_{\theta}(w)\|\lesssim \sqrt{\frac{\xi_{k,\theta}^2\log k}{n}}.\label{Eq:AuxUnif2}
\end{align}
Combining (\ref{eq: triangle inequality 9}), (\ref{eq: residual 9}), (\ref{Eq:AuxUnif1}), and (\ref{Eq:AuxUnif2}) gives the asserted claim.
\end{proof}

\begin{proof}[Proof of Theorem \ref{Thm:DistributionsInferential}]
Since $\bar{R}_{1n}=o_P(a_n^{-1})$, we have
$$
(v_n \vee 1+\ell_kc_k)\frac{\xi_k\log k}{\sqrt{n}}=o(a_n^{-1}).
$$
Further, as in the proof of Theorem \ref{theorem: uniform normality} and using Lemma \ref{lem: uniform linearization functionals}, we can find $\mathcal{N}_k\sim N(0,I_k)$ such that
$$
\left\|\sqrt{n}\alpha_{\theta}(w)'(\widehat{\beta}-\beta)-\alpha_{\theta}(w)'\Omega^{1/2}\mathcal{N}_k\right\|=o_P(a_n^{-1})
$$
uniformly over $w\in\I$. Since $\|\alpha_{\theta}(w)'\Omega^{1/2}\|$ is bounded away from zero uniformly over $w\in\I$,
$$
\left\|\sqrt{n}\frac{\ell_{\theta}(w)'(\widehat{\beta}-\beta)}{\|\ell_{\theta}(w)'\Omega^{1/2}\|}-\frac{\ell_{\theta}(w)'\Omega^{1/2}\mathcal{N}_k}{\|\ell_{\theta}(w)'\Omega^{1/2}\|}\right\|=o_P(a_n^{-1}),
$$
or, equivalently,
$$
\left\|\frac{\ell_{\theta}(w)'(\widehat{\beta}-\beta)}{\sigma_{\theta}(w)}-\frac{\ell_{\theta}(w)'\Omega^{1/2}\mathcal{N}_k/\sqrt{n}}{\sigma_{\theta}(w)}\right\|=o_P(a_n^{-1}),
$$
uniformly over $w\in\I$. Further,
\begin{align*}
\left|\frac{\ell_{\theta}(w)'(\widehat{\beta}-\beta)}{\sigma_{\theta}(w)}-\frac{\ell_{\theta}(w)'(\widehat{\beta}-\beta)}{\widehat{\sigma}_{\theta}(w)}\right| & \leq \frac{|\ell_{\theta}(w)'(\widehat{\beta}-\beta)|}{\sigma_{\theta}(w)}\left|1-\frac{\sigma_{\theta}(w)}{\widehat{\sigma}_{\theta}(w)}\right|\\
& \lesssim \sqrt{n}|\alpha_{\theta}(w)'(\widehat{\beta}-\beta)|\left|1-\frac{\sigma_{\theta}(w)}{\widehat{\sigma}_{\theta}(w)}\right|\\
& \lesssim_P \sqrt{\log k}(v_n \vee 1+\ell_kc_k)\sqrt{\frac{\xi_k^2\log k}{n}}=o(a_n^{-1})
\end{align*}
uniformly over $w\in\I$ where the second line follows from $\|\alpha_{\theta}(w)'\Omega^{1/2}\|$ being bounded away from zero uniformly over $w\in \I$ and the third line follows from Lemmas \ref{lem: variance consistency} and \ref{lem: uniform linearization functionals} and Theorem \ref{theorem: Linearuniform rate}. Therefore,
\begin{equation}\label{eq: strong approximation close}
\left\|\frac{\ell_{\theta}(w)'(\widehat{\beta}-\beta)}{\widehat\sigma_{\theta}(w)}-\frac{\ell_{\theta}(w)'\Omega^{1/2}\mathcal{N}_k/\sqrt{n}}{\sigma_{\theta}(w)}\right\|=o_P(a_n^{-1})
\end{equation}
uniformly over $w\in\I$. In addition, $\sup_{w\in\I}|r_{\theta}(w)|/\sigma_{\theta}(w)=o_P(a_n^{-1})$ uniformly over $w\in\I$ and Lemma \ref{lem: variance consistency} imply that $\sup_{w\in\I}|r_{\theta}(w)|/\widehat{\sigma}_{\theta}(w)=o_P(a_n^{-1})$, and so it follows from (\ref{eq: strong approximation close}) that
$$
\left\|\frac{\widehat{g}(w)-g(w)}{\widehat\sigma_{\theta}(w)}-\frac{\ell_{\theta}(w)'\Omega^{1/2}\mathcal{N}_k/\sqrt{n}}{\sigma_{\theta}(w)}\right\|=o_P(a_n^{-1})
$$
uniformly over $w\in\I$. This completes the proof of the theorem.
\end{proof}

\begin{proof}[Proof of Theorem \ref{thm: strong approximation suprema}]
We have
\begin{equation}\label{eq: triangle inequality 11}
\frac{\widehat{\theta}(w)-\theta(w)}{\widehat{\sigma}_\theta(w)}=\frac{\ell_{\theta}(w)^\prime(\widehat{\beta}-\beta)}{\widehat{\sigma}_\theta(w)}-\frac{r_\theta(w)}{\widehat{\sigma}_\theta(w)}.
\end{equation}
Under the condition $\bar{R}_{1n}+\bar{R}_{2n}\lesssim 1/(\log k)^{1/2}$,
\begin{equation}\label{eq: linearization 11}
\left|\frac{\ell_{\theta}(w)^\prime(\widehat{\beta}-\beta)}{\widehat{\sigma}_n(w)}-\frac{\ell_{\theta}(w)^\prime(\widehat{\beta}-\beta)}{\sigma_{\theta}(w)}\right| = o_P \left(\frac{1}{\sqrt{\log k}}\right)
\end{equation}
uniformly over $w\in\I$ by the argument used in the proof of Theorem \ref{Thm:DistributionsInferential} with $a_n=1/(\log k)^{1/2}$.
Further, by Lemma \ref{lem: uniform linearization functionals},
\begin{equation}\label{eq: linearization 11-3}
\frac{\ell_{\theta}(w)^\prime(\widehat{\beta}-\beta)}{\sigma_{\theta}(w)}=\frac{\ell_{\theta}(w)^\prime\mathbb{G}_n[p_i\epsilon_i]}{\sqrt{n}\sigma_{\theta}(w)}+o_P\left(\frac{1}{\sqrt{\log k}}\right)
\end{equation}
uniformly over $w\in\mathcal{I}$ since $\bar{R}_{1n}+\bar{R}_{2n}\lesssim 1/(\log k)^{1/2}$.
In addition, as in the proof of Theorem \ref{Thm:DistributionsInferential} with $a_n=1/(\log k)^{1/2}$,
\begin{equation}\label{eq: linearization 11-4}
\frac{|r_{\theta}(w)|}{\widehat{\sigma}_{\theta}(w)}= o_P\left( \frac{1}{\sqrt{\log k}}\right)
\end{equation}
uniformly over $w\in\I$. Combining (\ref{eq: triangle inequality 11}), (\ref{eq: linearization 11}), (\ref{eq: linearization 11-3}), and (\ref{eq: linearization 11-4}) yields
\begin{equation}\label{eq: final linearization}
\frac{\widehat{\theta}(w)-\theta(w)}{\widehat{\sigma}_\theta(w)}=\frac{\ell_{\theta}(w)^\prime\mathbb{G}_n[p_i\epsilon_i]}{\sqrt{n}\sigma_{\theta}(w)}+o_P\left(\frac{1}{\sqrt{\log k}}\right).
\end{equation}
Now, under the condition $\xi_k\log^2 k/n^{1/2-1/m} \to 0$, the asserted claim follows from Proposition 3.3 in \cite{CCK2012} applied to the first term on the right hand side of (\ref{eq: final linearization}) (note that Proposition 3.3 in \cite{CCK2012} only considers a special case where $\ell_{\theta}(w)$, $w\in\mathcal{I}$, is replaced by $p(x)$, $x\in\mathcal{X}$, but the same proof applies for a more general case studied here, with $\ell_{\theta}(w)$, $w\in\mathcal{I}$).
\end{proof}

\begin{proof}[Proof of Theorem \ref{thm: confidence bands validity}]
The proof consists of two steps. The asserted claims are proven in Step 1, and Step 2 contains some intermediate calculations.

\textbf{Step 1.}
Under our conditions, it follows from Step 2 that there exists a sequence $\{\varepsilon_n\}$ such that $\varepsilon_n=o(1)$ and
\begin{equation}\label{eq: A.43}
P\left\{\left|\sup_{w\in\I}|\widehat{t}_n^*(w)|-\sup_{w\in\I}|t_n^*(w)|\right|>\varepsilon_n/\sqrt{\log k}\right\}=o(1).
\end{equation}
Let $c_n^0(1-\alpha)$ denote the $(1-\alpha)$-quantile of $\sup_{w\in\I}|t_n^*(w)|$.  Then in view of (\ref{eq: A.43}), Lemma \ref{lemma: quantiles are close} implies that there exists a sequence $\{\nu_n\}$ such that $\nu_n=o(1)$ and
\begin{align}
&P\left\{c_n(1-\alpha)<c_n^0(1-\alpha-\nu_n)-\varepsilon_n/\sqrt{\log k}\right\}=o(1),\label{eq: A.44a}\\
&P\left\{c_n(1-\alpha)>c_n^0(1-\alpha+\nu_n)+\varepsilon_n/\sqrt{\log k}\right\}=o(1)\label{eq: A.44b}.
\end{align}
Further, it follows from Theorem \ref{thm: strong approximation suprema} that there exists a sequence $\{\beta_n\}$ of constants and a sequence $\{Z_n\}$ of random variables such that $\beta_n=o(1)$, $Z_n$ equals in distribution to $\|t_n^*\|_{\I}$, and
\begin{equation}\label{eq: A.45}
P\left\{\left|\sup_{w\in\I}|t_n(w)|-Z_n\right|>\beta_n/\sqrt{\log k}\right\}=o(1).
\end{equation}
Hence, for some universal constant $A$,
\begin{align*}
P(\sup_{w\in\I}|t_n(w)|\leq c_n(1-\alpha))&\leq P(Z_n\leq c_n(1-\alpha)+\beta_n/\sqrt{\log k})+o(1)\\
&\leq P(Z_n\leq c_n^0(1-\alpha+\nu_n)+(\varepsilon_n+\beta_n)/\sqrt{\log k})+o(1)\\
&\leq P(Z_n\leq c_n^0(1-\alpha+\nu_n+A(\varepsilon_n+\beta_n)))+o(1)\\
&=1-\alpha+\nu_n+A(\varepsilon_n+\beta_n)+o(1)\nonumber\\
&=1-\alpha+o(1)\nonumber
\end{align*}
where the first inequality follows from (\ref{eq: A.45}), the second from (\ref{eq: A.44b}), and the third from Lemma \ref{lem: anticoncentration}. This gives one side of the bound in (\ref{eq: supremum validity}). The other side of the bound can be proven by a similar argument. Therefore, (\ref{eq: supremum validity}) follows. Further, (\ref{eq: confidence bands validity}) is a direct consequence of (\ref{eq: supremum validity}).

Finally, we consider (\ref{eq: confidence bands width}). The second inequality in (\ref{eq: confidence bands width}) holds because $\sigma_{\theta}(w)\lesssim \|\ell_{\theta}(w)\|/n^{1/2}$ since all eigenvalues of $\Omega$ are bounded from above. To prove the first inequality, note that by Lemma \ref{lem: variance consistency}, $\widehat{\sigma}_\theta(w)/\sigma_{\theta}(w)=1+o_P(1)$ uniformly over $w\in\mathcal{I}$. In addition, Step 2 shows that
\begin{equation}\label{eq: A.50}
c_n(1-\alpha)\lesssim_P \sqrt{\log k}.
\end{equation}
Therefore,
$
2c_n(1-\alpha)\widehat{\sigma}_n(w)\lesssim_P (\log k)^{1/2}\sigma_{\theta}(w),
$
uniformly over $w\in\mathcal{I}$,
which is the first inequality in (\ref{eq: confidence bands width}). To complete the proof, we provide auxilliary calculations in Step 2.

\textbf{Step 2.} We first prove (\ref{eq: A.43}). Note that
$$
\left|\sup_{w\in\I}|\widehat{t}_n^*(w)|-\sup_{w\in\I}|t_n^*(w)|\right|\leq \sup_{w\in\mathcal{I}}\left|\widehat{t}_n^{*}(w)-t_n^{*}(w)\right|=\sup_{w\in\mathcal{I}}\left|\left(\frac{\ell_{\theta}(w)^\prime\widehat{\Omega}^{1/2}}{\sqrt{n}\widehat{\sigma}_n(w)}-\frac{\ell_{\theta}(w)^\prime\Omega^{1/2}}{\sqrt{n}\sigma_{\theta}(w)}\right)\mathcal{N}_k\right|.
$$
Denote $T_n(w):=\widehat{t}_n^{*}(w)-t_n^{*}(w)$. Then, conditional on the data,
$
\{T_n(w),\, w\in\mathcal{I}\}
$
is a zero-mean Gaussian process. Further, we have for $E_{\mathcal{N}_k}[\cdot]$ denoting the expectation with respect to the distribution of $\mathcal{N}_k$,
\begin{align*}
E_{\mathcal{N}_k}[T_n(w)^2]^{1/2}&=\left\| \frac{\ell_{\theta}(w)^\prime\widehat{\Omega}^{1/2}}{\sqrt{n}\widehat{\sigma}_n(w)}-\frac{\ell_{\theta}(w)^\prime\Omega^{1/2}}{\sqrt{n}\sigma_{\theta}(w)}\right\|\\
&\leq \frac{\|\ell_{\theta}(w)\|}{\sqrt{n}\widehat{\sigma}_n(w)}\|\widehat{\Omega}^{1/2}-\Omega^{1/2}\|+\left\|\frac{\ell_{\theta}(w)^\prime\Omega^{1/2}}{\sqrt{n}\sigma_{\theta}(w)}\right\|\left|\frac{\sigma_{\theta}(w)}{\widehat{\sigma}_n(w)}-1\right|\\
&\lesssim_P \|\widehat{\Omega}^{1/2}-\Omega^{1/2}\|+\left|\frac{\sigma_{\theta}(w)}{\widehat{\sigma}_n(w)}-1\right|\\
&\lesssim_P \|\widehat{\Omega}-\Omega\|=o_P\left(\frac{1}{\sqrt{\log k}}\right)
\end{align*}
uniformly over $w\in\mathcal{I}$ where the last line follows from Lemma \ref{lem: matrices}. In addition, uniformly over $w_1,w_2\in\mathcal{I}$,
\begin{align*}
& E_{\mathcal{N}_k}[(T_n(w_1)-T_n(w_2))^2]^{1/2}\leq \\
 &\qquad \leq \left\| \frac{\ell_{\theta}(w_1)^\prime\widehat{\Omega}^{1/2}}{\sqrt{n}\widehat{\sigma}_n(w_1)}-\frac{\ell_{\theta}(w_2)^\prime\widehat{\Omega}^{1/2}}{\sqrt{n}\widehat{\sigma}_n(w_2)}\right\|+\left\| \frac{\ell_{\theta}(w_1)^\prime\Omega^{1/2}}{\sqrt{n}\sigma_{\theta}(w_1)}-\frac{\ell_{\theta}(w_2)^\prime\Omega^{1/2}}{\sqrt{n}\sigma_{\theta}(w_2)}\right\|\\
&\qquad \lesssim_P \left\| \frac{\ell_{\theta}(w_1)}{\sqrt{n}\widehat{\sigma}_n(w_1)}-\frac{\ell_{\theta}(w_2)}{\sqrt{n}\widehat{\sigma}_n(w_2)}\right\|+\left\| \frac{\ell_{\theta}(w_1)}{\sqrt{n}\sigma_{\theta}(w_1)}-\frac{\ell_{\theta}(w_2)}{\sqrt{n}\sigma_{\theta}(w_2)}\right\|.
\end{align*}
Moreover, uniformly over $w_1,w_2\in\mathcal{I}$,
\begin{align*}
\left\| \frac{\ell_{\theta}(w_1)}{\sqrt{n}\widehat{\sigma}_n(w_1)}-\frac{\ell_{\theta}(w_2)}{\sqrt{n}\widehat{\sigma}_n(w_2)}\right\|&\leq \frac{\|\ell_{\theta}(w_1)-\ell_{\theta}(w_2)\|}{\sqrt{n}\widehat{\sigma}_n(w_1)}+\frac{\|\ell_{\theta}(w_2)\|}{\sqrt{n}}\left|\frac{1}{\widehat{\sigma}_n(w_1)}-\frac{1}{\widehat{\sigma}_n(w_2)}\right|\\
&= \frac{\|\ell_{\theta}(w_1)-\ell_{\theta}(w_2)\|}{\sqrt{n}\widehat{\sigma}_n(w_1)}+\frac{\|\ell_{\theta}(w_2)\|}{\sqrt{n}}\frac{|\widehat{\sigma}_n(w_2)-\widehat{\sigma}_n(w_1)|}{\widehat{\sigma}_n(w_1)\widehat{\sigma}_n(w_2)}\\
&\lesssim_P \frac{\|\ell_{\theta}(w_1)-\ell_{\theta}(w_2)\|}{\|\ell_{\theta}(w_1)\|}\lesssim \xi^L_{k,\theta}\|w_1-w_2\|
\end{align*}
where the last inequality follows from Condition A.6. A similar argument shows that
$$
\left\| \frac{\ell_{\theta}(w_1)}{\sqrt{n}\sigma_{\theta}(w_1)}-\frac{\ell_{\theta}(w_2)}{\sqrt{n}\sigma_{\theta}(w_2)}\right\|\lesssim_P\xi^L_{k,\theta}\|w_1-w_2\|
$$
uniformly over $w_1,w_2\in\mathcal{I}$. Now, (\ref{eq: A.43}) follows from Dudley's inequality (\cite{Dudley1967}).

Finally, to show (\ref{eq: A.50}), we note that in view of (\ref{eq: A.43}), it suffices to prove that
\begin{equation}\label{eq: A.51}
c_n^0(1-\alpha)\lesssim \sqrt{\log k}.
\end{equation}
But $\{t_n^{*}(w),\, w\in\mathcal{I}\}$ is a zero mean Gaussian process satisfying $E[t_n^{*}(w)^2]^{1/2}=1$ for all $w\in\mathcal{I}$ and
$$
E[(t_n^{*}(w_1)-t_n^{*}(w_2))^2]^{1/2}\leq \left\| \frac{\ell_{\theta}(w_1)^\prime\Omega^{1/2}}{\sqrt{n}\sigma_{\theta}(w_1)}-\frac{\ell_{\theta}(w_2)^\prime\Omega^{1/2}}{\sqrt{n}\sigma_{\theta}(w_2)}\right\|\lesssim \xi^L_{k,\theta}\|w_1-w_2\| 
$$
where the last inequality was shown above. Hence, (\ref{eq: A.51}) follows from combining Dudley's and Markov's inequalities.
\end{proof}

\subsection{Proofs of Section \ref{Sec:Tools}}


\begin{proof}[Proof of Lemma \ref{lem: Khinchin}]
The first part of the lemma, inequality (\ref{eq: khinchin main}), is proven in Section 3 of \cite{Rudelson1999}. To prove the second part of the lemma, inequality (\ref{eq: khinchin corollary}), observe that for $2\leq k\leq e^2$, the result is trivial. On the other hand, for $k>e^2$, we have
\begin{align*}\label{eq: khinchin corollary}
E_{\varepsilon}\left[\| \Gn[\varepsilon_i Q_i]\|\right] &\leq E_{\varepsilon}\left[\| \Gn[\varepsilon_i Q_i] \|_{S_{\log k}}\right] \leq \left(E_{\varepsilon}\left[\| \Gn[\varepsilon_i Q_i] \|_{S_{\log k}} ^{\log k}\right]\right)^{1/\log k}\\
&\lesssim \sqrt{\log k}  \left\| ( \En[ Q_i^2])^{1/2} \right\|_{S_{\log k}} \lesssim \sqrt{\log k} \left \| ( \En[ Q_i^2])^{1/2} \right\|
\end{align*}
where the first inequality follows from (\ref{equiv}), the second from Jensen's inequality, the third from the first part of the lemma, and the fourth from (\ref{equiv}) again. Related derivation can be also found in Section 3 of \cite{Rudelson1999}. This completes the proof of the lemma.
\end{proof}

\begin{proof}[Proof of Lemma \ref{rudelson}] Using the Symmetrization Lemma 2.3.6 in \cite{vdV-W} and the Khinchin inequality (Lemma \ref{lem: Khinchin}), bound
$$
\Delta  := E\left[\| \widehat Q -  Q  \|\right]\leq 2 E E_{\varepsilon} \left[\| \En [\varepsilon_i Q_i] \|\right] \lesssim \sqrt{\frac{\log k}{n}} E\left[ \| (\En Q_i^2 )^{1/2} \|\right].
$$
Also, observe that for any $\alpha\in S^{k-1}$,
$$
\alpha' Q_i^2\alpha\leq M\alpha' Q_i\alpha,
$$
so that
$$
\|\En [Q_i^2]\|\leq M\|\En [Q_i]\|.
$$
Therefore,
$$
E \left[\| (\En Q_i^2 )^{1/2} \|\right] = E \left[\| (\En Q_i^2 ) \|^{1/2}\right] \leq E \Big[(M \|\En Q_i\|)^{1/2} \Big]\leq  \Big [ M E \| \En Q_i \| \Big ]^{1/2}
$$
where the last assertion follows from Jensen's inequality. In addition, by the triangle inequality,
$$
E[\|\En Q_i\|]  \leq   \Delta +  \| Q \|.
$$
Hence,
$$
\Delta \lesssim \sqrt{ \frac{M \log k } {n} } [\Delta + \|Q\|]^{1/2}.
$$
Denoting $a:=M\log k/n$ and solving this inequality for $\Delta$ gives
$$
\Delta\lesssim a+\sqrt{a^2+a\|Q\|}\lesssim a+\sqrt{a\|Q\|}.
$$
This completes the proof of the lemma.
\end{proof}

\begin{proof}[Proof of Proposition \ref{prop1}]
For a  $\tau > 0$ specified later, define
$\epsilon_{i}^{-} := \epsilon_{i} I(| \epsilon_{i} | \leq \tau) - E [\epsilon_{i} I(| \epsilon_{i} | \leq \tau) | X_{i}]$ and
$\epsilon_{i}^{+} := \epsilon_{i} I(| \epsilon_{i} | > \tau) - E [\epsilon_{i} I(| \epsilon_{i} | > \tau) | X_{i}]$. Since $E[ \epsilon_{i} | X_{i} ] = 0$, $\epsilon_{i} = \epsilon_{i}^{-} + \epsilon_{i}^{+}$. Invoke the decomposition
\begin{equation*}
\sum_{i=1}^{n} \epsilon_{i} f(X_{i}) = \sum_{i=1}^{n} \epsilon^{-}_{i} f(X_{i}) + \sum_{i=1}^{n} \epsilon^{+}_{i} f(X_{i}).
\end{equation*}
We apply Theorem \ref{GK06} to the first term. Noting that $\var ( \epsilon_{i}^{-} f(X_{i}) ) \leq \sup_{x} E [ (\epsilon_{i}^{-})^{2} | X_{i}=x] E [ f(X_{i})^{2} ] \leq \sup_{x} E [ \epsilon_{i}^{2} | X_{i}=x] = \sigma^{2}$ and $| \epsilon_{i}^{-} f(X_{i}) | \leq 2 \tau b$, we have
\begin{equation*}
E \left [ \left \|  \sum_{i=1}^{n} \epsilon^{-}_{i} f(X_{i}) \right \|_{\mathcal{F}} \right] \leq C \left [ \sqrt{n \sigma^{2} V \log (Ab)} + V \tau b \log (Ab) \right ].
\end{equation*}
On the other hand, applying Theorem 2.14.1 of \cite{vdV-W} to the second term, we obtain
\begin{equation}
E \left [ \left \|  \sum_{i=1}^{n} \epsilon^{+}_{i} f(X_{i}) \right \|_{\mathcal{F}} \right] \leq C \sqrt{n} b \sqrt{E[ | \epsilon_{1}^{+} |^{2}]} \int_{0}^{1} \sqrt{V \log (A/\varepsilon)} d \varepsilon. \label{moment}
\end{equation}
By assumption,
\begin{equation*}
E[ | \epsilon_{1}^{+} |^{2}] \leq E[ \epsilon_{1}^{2} I( | \epsilon_{1} | > \tau ) ] \leq \tau^{-m+2} E[ | \epsilon_{1} |^{m} ],
\end{equation*}
by which we have
\begin{equation*}
(\ref{moment}) \leq C \sqrt{ E[ | \epsilon_{1} |^{m} ] } b \tau^{-m/2+1} \sqrt{nV \log (A)}.
\end{equation*}
Taking $\tau = b^{2/(m-2)}$, we obtain the desired inequality.
\end{proof}

\subsection{Additional technical results}



\begin{lemma}[Closeness in Probability Implies Closeness of Conditional Quantiles]\label{lemma: quantiles are close}
Let $X_n$ and $Y_n$ be random variables and $\mathcal{D}_n$ be a random vector. Let $F_{X_n}(x|\mathcal{D}_n)$ and $F_{Y_n}(x|\mathcal{D}_n)$
denote the conditional distribution functions, and  $F^{-1}_{X_n}(p|\mathcal{D}_n)$ and $F^{-1}_{Y_n}(p|\mathcal{D}_n)$
denote the corresponding conditional quantile functions. If $|X_n - Y_n| = o_P(\varepsilon)$,
 then for some $\nu_n \searrow 0$ with probability converging to one
$$
F^{-1}_{X_n}(p|\mathcal{D}_n) \leq F^{-1}_{Y_n}(p+\nu_n|\mathcal{D}_n) + \varepsilon \text{ and } F^{-1}_{Y_n}(p|\mathcal{D}_n) \leq F^{-1}_{X_n}(p+\nu_n|\mathcal{D}_n) + \varepsilon, \forall p \in (\nu_n, 1- \nu_n).
$$
\end{lemma}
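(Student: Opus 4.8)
The plan is to collapse the statement onto a single high-probability event on which a deterministic comparison of the two conditional distribution functions holds, and then to transfer that comparison to the quantile functions by the standard generalized-inverse argument.

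First I would record the elementary inclusion that drives the bound. Write $\Delta_n := |X_n - Y_n|$. On $\{\Delta_n \le \varepsilon\}$ we have $X_n \le Y_n + \varepsilon$, so for every $x$,
\[
\{Y_n \le x - \varepsilon\} \cap \{\Delta_n \le \varepsilon\} \subseteq \{X_n \le x\}.
\]
Taking conditional probabilities given $\mathcal{D}_n$ and using $P(A\cap B\mid\mathcal{D}_n)\ge P(A\mid\mathcal{D}_n)-P(B^c\mid\mathcal{D}_n)$ gives
\[
F_{X_n}(x\mid\mathcal{D}_n) \ \ge\ F_{Y_n}(x - \varepsilon\mid\mathcal{D}_n) - \eta_n, \qquad \eta_n := P(\Delta_n > \varepsilon \mid \mathcal{D}_n),
\]
for all $x$, and by symmetry of $\Delta_n$ the same inequality holds with $X_n$ and $Y_n$ interchanged, with the \emph{same} controlling term $\eta_n$.

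Second, I would turn the hypothesis $\Delta_n = o_P(\varepsilon)$ into control of $\eta_n$ on a large event. Since $\Delta_n/\varepsilon\to_P 0$, the numbers $p_n := P(\Delta_n > \varepsilon) = E[\eta_n]$ satisfy $p_n \to 0$. Choose a monotone sequence $\nu_n \searrow 0$ with $p_n/\nu_n \to 0$ (for instance a nonincreasing majorant of $(p_n + n^{-1})^{1/2}$, which is admissible since $p_n\to 0$). By Markov's inequality $P(\eta_n > \nu_n) \le p_n/\nu_n \to 0$, so the event $A_n := \{\eta_n \le \nu_n\}$ has $P(A_n)\to 1$, and on $A_n$ the two displayed inequalities read $F_{X_n}(x\mid\mathcal{D}_n) \ge F_{Y_n}(x-\varepsilon\mid\mathcal{D}_n) - \nu_n$ and its mirror image, for all $x$.

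Third, I would pass to quantiles on $A_n$. Fix $p \in (\nu_n,\, 1-\nu_n)$ and set $q := F^{-1}_{Y_n}(p+\nu_n\mid\mathcal{D}_n)$, which is finite since $p+\nu_n\in(0,1)$; by right-continuity of $F_{Y_n}(\cdot\mid\mathcal{D}_n)$ we have $F_{Y_n}(q\mid\mathcal{D}_n)\ge p+\nu_n$. Then on $A_n$, evaluating the comparison at $x=q+\varepsilon$,
\[
F_{X_n}(q+\varepsilon\mid\mathcal{D}_n) \ \ge\ F_{Y_n}(q\mid\mathcal{D}_n) - \nu_n \ \ge\ p,
\]
so $q+\varepsilon$ lies in $\{x : F_{X_n}(x\mid\mathcal{D}_n) \ge p\}$ and hence $F^{-1}_{X_n}(p\mid\mathcal{D}_n) \le q+\varepsilon = F^{-1}_{Y_n}(p+\nu_n\mid\mathcal{D}_n)+\varepsilon$. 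Interchanging $X_n$ and $Y_n$ and using the mirror inequality on the same event $A_n$ yields the companion bound, and since $P(A_n)\to 1$ this is exactly the assertion. I do not expect a real obstacle; the only points demanding care are the selection of the \emph{deterministic, monotone} $\nu_n$ out of $E[\eta_n]\to 0$ via Markov (this is what lets a statement about the random conditional law hold on an event of probability tending to one), and keeping the direction of the generalized-inverse manipulations straight, in particular using $F_{Y_n}\big(F^{-1}_{Y_n}(u\mid\mathcal{D}_n)\mid\mathcal{D}_n\big)\ge u$ together with $F^{-1}_{X_n}(p\mid\mathcal{D}_n)=\inf\{x: F_{X_n}(x\mid\mathcal{D}_n)\ge p\}$.
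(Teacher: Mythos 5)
Your proposal is correct and follows essentially the same route as the paper's proof: Markov's inequality turns $P\{|X_n-Y_n|>\varepsilon\}\to 0$ into an event of probability tending to one on which the conditional probability $P\{|X_n-Y_n|>\varepsilon\mid\mathcal{D}_n\}\leq\nu_n$, which yields the shifted comparison of conditional distribution functions, and the quantile bounds then follow from the definition of the generalized inverse. The only difference is that you spell out the final inversion step explicitly, which the paper compresses into one sentence.
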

\begin{proof}[Proof of Lemma \ref{lemma: quantiles are close}]  We have that for some $\nu_n \searrow 0$, $P\{ |X_n - Y_n| >\varepsilon\} = o(\nu_n)$.  This implies that  $P[P\{ |X_n - Y_n| >\varepsilon|\mathcal{D}_n\} \leq \nu_n] \to 1$, i.e.
there is a set $\Omega_n$  such that $P(\Omega_n) \to 1$
and $P\{ |X_n - Y_n| >\varepsilon|\mathcal{D}_n\} \leq \nu_n$ for all $\mathcal{D}_n \in \Omega_n$. So, for all $\mathcal{D}_n \in \Omega_n$
$$
F_{X_n}(x|\mathcal{D}_n) \geq F_{Y_n+ \varepsilon}(x|\mathcal{D}_n) - \nu_n \text { and } F _{Y_n}(x|\mathcal{D}_n) \geq F_{X_n+ \varepsilon}(x|\mathcal{D}_n) - \nu_n, \forall x \in \Bbb{R},
$$
which implies the inequality stated in the lemma, by definition of the conditional quantile function and equivariance of quantiles to location shifts. \end{proof}

\begin{lemma}\label{lem: matrices}
Let $A$ and $B$ be $k\times k$ symmetric positive semidefinite matrices. Assume that $B$ is positive definite. Then $\|A^{1/2}-B^{1/2}\|\leq \|A-B\|\|B^{-1}\|^{1/2}$.
\end{lemma}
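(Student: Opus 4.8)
The plan is to reduce this matrix inequality to a scalar inequality, evaluated along the eigenvector of $A^{1/2}-B^{1/2}$ that realizes its operator norm. Write $X:=A^{1/2}$ and $Y:=B^{1/2}$; both are symmetric, $X$ is positive semidefinite, $Y$ is positive definite, and $X^2-Y^2=A-B$. Since $X-Y$ is symmetric, it admits an orthonormal eigenbasis, so I would pick a unit vector $v$ with $(X-Y)v=\lambda v$ and $|\lambda|=\|X-Y\|=\|A^{1/2}-B^{1/2}\|$. If $\lambda=0$ the claim is trivial, so assume $\lambda\neq 0$.

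The key algebraic step is the commutator-free decomposition $X^2-Y^2=\tfrac12(X+Y)(X-Y)+\tfrac12(X-Y)(X+Y)$, valid for any matrices. Sandwiching this between $v'$ and $v$ and using $(X-Y)v=\lambda v$ together with the symmetry of $X-Y$ (so that $v'(X-Y)=\lambda v'$) collapses both terms on the right, giving the identity $v'(A-B)v=v'(X^2-Y^2)v=\lambda\, v'(X+Y)v$.

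Next I would bound the two sides separately. On the right, since $X\succeq 0$ and $\|v\|=1$ we have $v'(X+Y)v\ge v'Yv\ge \lambda_{\min}(Y)=\lambda_{\min}(B)^{1/2}=\|B^{-1}\|^{-1/2}$, where the last equality uses $\lambda_{\min}(B)=\|B^{-1}\|^{-1}$ (finite, since $B$ is positive definite) and monotonicity of the square root on the spectrum. On the left, $|v'(A-B)v|\le\|A-B\|$ because $v$ is a unit vector. Combining, $\|A^{1/2}-B^{1/2}\|\cdot\|B^{-1}\|^{-1/2}=|\lambda|\,\|B^{-1}\|^{-1/2}\le |\lambda|\,v'(X+Y)v=|v'(A-B)v|\le\|A-B\|$, which rearranges to the asserted bound.

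There is essentially no serious obstacle here. The only things to get right are the choice of the extremal eigenvector of $X-Y$ and the observation that the commutator-free decomposition makes the cross terms $XY$ and $YX$ cancel precisely because one factor $X-Y$ acts on $v$ as a scalar from whichever side it appears. The degenerate cases cause no trouble: $\lambda=0$ is immediate, and positive definiteness of $B$ makes the right-hand side finite.
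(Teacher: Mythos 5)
Your proof is correct and follows essentially the same route as the paper's: both evaluate the quadratic form $v'(A-B)v$ at an extremal eigenvector of $A^{1/2}-B^{1/2}$, collapse it to $\lambda\, v'(A^{1/2}+B^{1/2})v$, and lower-bound this by $|\lambda|\,\lambda_{\min}(B^{1/2})=|\lambda|\,\|B^{-1}\|^{-1/2}$. The only difference is cosmetic (the paper writes $A-B=A^{1/2}E+EA^{1/2}-E^2$ with $E=A^{1/2}-B^{1/2}$ rather than your symmetrized decomposition), and your explicit handling of $|\lambda|$ is if anything slightly more careful.
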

\begin{proof}[Proof of Lemma \ref{lem: matrices}]
This is exercise 7.2.18 in \cite{HJ1990}. For completeness, we derive this result here. Let $a$ be an eigenvector of $E=A^{1/2}-B^{1/2}$ with eigenvalue $\lambda=\|A^{1/2}-B^{1/2}\|$.
Then
\begin{align*}
\|A-B\|&\geq |a^\prime(A-B)a|\\
&=|a^\prime(A^{1/2}E+EA^{1/2}-E^2)a|\\
&=|\lambda a^\prime(A^{1/2}+A^{1/2}-E)a|\\
&=\lambda|a^\prime(A^{1/2}+B^{1/2})a|\\
&\geq \lambda|\lambda_{\min}(A^{1/2})+\lambda_{\min}(B^{1/2})|
\end{align*}
where $\lambda_{\min}(P)$ denotes the minimal eigenvalue of $P$ for $P=A^{1/2}$ or $B^{1/2}$. Since $A$ is positive semidefinite, $\lambda_{\min}(A^{1/2})\geq 0$. Since $B$ is positive definite, $\lambda_{\min}(B^{1/2})=\|B^{-1}\|^{-1/2}$. Combining these bounds gives the asserted claim.
\end{proof}

\end{document}